\newtheorem{theorem}{Theorem}[section]
\newtheorem{definition}[theorem]{Definition}
\newtheorem{lemma}[theorem]{Lemma}
\newtheorem{example}[theorem]{Example}
\begin{document}
\title {\Large How to Manipulate Truthful Prior-dependent Mechanisms?}

\author{Pingzhong Tang\\
IIIS, Tsinghua University\\
Beijing, China\\
\texttt{kenshin@tsinghua.edu.cn}
\and Yulong Zeng\\
IIIS, Tsinghua University\\
Beijing, China\\
\texttt{cengyl13@mails.tsinghua.edu.cn}}

\date{}




	

\maketitle

\begin{abstract}

	In the standard formulation of mechanism design, a key assumption is
	that the designer has reliable information and technology to determine
	a prior distribution over types of the agents. In the meanwhile, as pointed out by the {\em Wilson's Principle},
	a mechanism should reply as little as possible on the accuracy of
	prior type distribution. In this paper, we put forward a simple model to formalize and justify this statement.
	
	In our model, each agent has a type distribution. In addition, the
	agent can commit to a fake type distribution and bids consistently
	with respect to the fake distribution (i.e., plays some Bayes equilibrium under
	the fake distributions). The model is partially motivated by a recent consensus in the literature that the prior distribution should not be endogenous to the model but has to be learned by the designer from the agents' past bids.
	
	We are interested in the equilibria of the induced distribution-committing games in several well-known mechanisms. Our results can be summarized as follows: (1) the game induced by Myerson auction under our model is strategically equivalent to the first-price auction under the standard model. Consequently, they are revenue-equivalent. (2) the second-price auction yields no less revenue than several reserve-based and virtual-value-based truthful, prior-dependent auctions, under our fake distribution model. These results complement a recent research agenda on prior-independent mechanism design.	
	
\end{abstract}
\newpage

\section{Introduction}

Many standard auction settings assume that buyers' prior type distributions are common knowledge among all buyers and the seller. For example, the seminal Myerson auction \cite{myerson1981optimal} uses this distribution knowledge to calculate the so-called {\em virtual values}. However, this assumption does not necessarily hold in practice: when facing new buyers, the seller may not know anything about the buyers' valuations. Even for repeated buyers, the seller still needs to infer the buyers' valuation distributions from the past bids by some, possibly inaccurate, algorithms. Such inaccuracy in type distributions may still compromise the theoretical optimality of such prior-dependent mechanisms.

\subsection{The model}

In this paper, we propose a model where buyers can credibly manipulate the prior distributions. In our model, each buyer has a true type distribution, just as in the standard mechanism design formulation. In addition, each buyer can pretend as if his type is from another distribution, which we call {\em the fake distribution}. The way that he convinces others, including the seller, that his type is indeed drawn from the fake distribution, is by consistently playing a Bayes Nash equilibrium in the mechanism {\em as if his type distribution is the fake distribution}. Such strategy is realistic as long as the buyers can commit to the fake distributions and doing so increases their expected payoffs.

The model captures a number of important ingredients in the practice of mechanism design. First of all, it
captures the fact that prior information is not endogenous to the
model, but has to come from the agents' past data. For example, in
sponsored search auctions \cite{edelman2005internet}, it is standard that the prior
distributions are estimated from the past bids \cite{nekipelov2015econometrics,ostrovsky2011reserve,chawla2014mechanism}. This makes the prior distributions a part of agents' strategy, rather than endogenous inputs to the mechanism. 

Secondly, it captures a certain dynamic aspect of mechanism design. Our model can be regarded as a sequence of iterations between 1) the seller adjusts their prior estimates of the buyers' valuations based on the bids observed so far; and 2) the buyers strategically  best respond, given the current prior estimate and the fact that their current bids will be used as future priors. Analysis of such dynamics can be complicated. Alternatively, our model focus on the analysis of the steady state resulted from a long, converged sequence of iterations. In our model, this steady state corresponds to a Nash equilibrium of the induced distribution-reporting game.

Thirdly, it helps explain certain underbidding
behaviors in prior-dependent auctions. Many truthful auctions in standard setting are no longer ``truthful" in fake distribution setting.  

Last but not least, the model allows one to
quantify the damage caused by relying on prior information and to
prescribe more realistically the outcomes one should expect in these auctions.

\begin{example}

Consider the Myerson auction where there are two buyers, each with valuation drawn independently identically from the uniform distribution on $[0,1]$. Assume that the first buyer always bids his true valuation. In other words, the buyer's bid distribution is also a uniform distribution on $[0,1]$. We are interested in what the second buyer should behave. Suppose that the second buyer always bids a small positive number $\epsilon$. By observing the bids distributions for a sufficiently long period, the seller would believe that the first buyer's valuation from the uniform $[0,1]$, while the second buyer valuation is exactly $\epsilon$.

In the Myerson auction,the first buyer's virtual value is $2v_1-1$ when his true type is $v_1$, while the second buyer's virtual value is always $\epsilon$. The item is allocated to the second buyer if and only if $2v_1-1<\epsilon$, of which the probability is $(\epsilon+1)/2$. As a result, the second buyer obtains an expected utility of $\int_0^1 \frac{\epsilon+1}{2} (v-\epsilon) dv=\frac{(\epsilon+1)(1-2\epsilon)}{4}$. 

Compared to the case where the second buyer commits to the truthful distribution (i.e., uniform $[0,1]$) and obtains an expected utility of $1/12$, his utility strictly increases. We will later prove that each buyer reporting a uniform distribution on $[1/4,1/2]$ is an equilibrium in the induced (distribution-reporting) game.

\end{example}

One might argue that the model assumes too much rationality on the buyer side, since the buyers need to figure out the equilibrium of fake distributions. One justification is that, the agents may not need to compute the correct fake distribution upfront, but can learn to do this over time via some learning dynamic that converges to the correct distributions in the steady state as mentioned.


%

\subsection{Our contributions}

We report a few interesting findings when applying this model to the single-item auction domain. To justify why we can focus
on truthful auctions, we first prove a version of revelation principle for this
model, stating that one can without loss of generality restrict attention to truthful auctions.

\noindent {\bf Theorem 1.1} For any prior-independent mechanism family, it is a weakly dominant strategy for each buyer to report his true distribution.

This suggests that it might be more interesting to look at prior-dependent auctions. We then apply this model to the revenue-optimal auction, aka. the
Myerson auction, where we prove one of our main theorems that says the Myerson auction under  fake distribution setting
is strategically equivalent to the first-price auction in the
standard setting.

\noindent {\bf Theorem 1.2}. Given buyers' true distributions, the induced distribution-reporting game of Myerson auction is strategically equivalent to first-price auction in the standard setting.

As a result, the two games are revenue equivalent. In other words, Myerson revenue in the fake distribution model is reduced to the revenue of the first-price auction game in the standard setting.

We extend our analysis to a similar prior-dependent auction that has been studied in the literature, called second-price auction with monopoly (i.e., Myerson's)
reserve~\cite{hartline2009simple,dhangwatnotai2015revenue}.

\noindent{\bf Theorem 1.3} Given that the buyers' true distributions are i.i.d. regular, the revenue of second-price auction with monopoly reserve under the equilibrium of fake distribution is the same as the revenue of second-price auction under the true (or equivalently, fake) distribution.


It is important to distinguish Myerson auction and second-price auction with monopoly reserve, even in the i.i.d. case. The allocation rules of the two auctions are the same under the true distribution, but different when computing the equilibrium of fake distributions: to consider a possible deviation, a buyer can report any fake distribution, creating an instance of non-i.i.d. distributions, where the allocation rules for both auctions differ.

We further generalize the idea of prior-dependent reserve pricing by analyzing second
price auction with a random reserve, which has also been extensively studied in the literature \cite{huang2015making,dhangwatnotai2015revenue}.

\noindent{\bf Theroem 1.4} Given that  the buyers' true distributions are i.i.d., the revenue of second-price auction with any single-sample (aka. random) reserve
under the equilibrium of fake distribution is no more than the revenue of second-price auction under the true distribution.

We finally apply this model to a general class of truthful auctions called
virtual efficient auctions, in which the allocation rule is efficient,
with respect to a general definition of virtual value function.
\noindent{\bf Theroem 1.5}  Given that the buyers' true distributions are i.i.d., the revenue of any virtual efficient mechanism family under the equilibrium of fake distribution is no more than the revenue of second-price auction under the true distribution.

To sum up, for all symmetric cases, second-price auction
is superior compared to other auctions considered in this paper. The
comparison between Myerson and second-price auction is unknown in
the asymmetric case since the BNE of First-price auction is
unknown \cite{chawla2013auctions,syrgkanis2013composable}.

%
%

\subsection{Relative work}

Recently, there has been a large body of literature on the problem of
prior-independent mechanism design \cite{bergemann2011should,bergemann2008pricing,devanur2011prior,fu2013prior,fu2015randomization}. One of the major
motivations of these papers is the so-called Wilson's principle \cite{wilson1985game}
, which states that a mechanism should reply as little as
possible on the prior distributions. Over the years, the principle has
guided many practical auction designs and has been
consistent with experiences: auctions that heavily rely on prior
distributions are in general not robust and rarely used. Our model can be regarded as a quantitative model for the Wilson's principle.

Independent of our work, a few recent papers~\cite{mohri2015revenue,chen2016bayesian,devanur2015perfect} also consider the general problem where a {\em single} strategic buyer repeatedly interacts with the seller.  The focus of all these papers is on the dynamic and learning aspect of the problem where the buyer learns to play equilibrium. Mohri and Medina~\cite{mohri2015revenue} consider a model where the seller posts a price in each time period to a buyer whose valuation is independent random draw in each period. They introduce a notation called  $\epsilon-strategic$ buyer where the buyer does not necessarily best respond in each round. They design a sequential posted prices that achieves additive bounds to optimal revenue in two different settings. Chen and Wang \cite{chen2016bayesian} study the same setting and make the assumption that the action of the buyer in each round is so chosen as to maximize his overall discounted utility. They show that the seller can design a posted price mechanism that can almost surely learn the buyer's true type and achieve constant additive bound to optimal revenue. Devanur et. al.~\cite{devanur2015perfect} analyze the perfect Bayesian equilibria in a repeated game where the buyer acts strategically and the seller updates his belief of the prior distribution. We remark that our model, even though admits a dynamic and learning interpretation, is still a single-shot distribution-reporting game. 

Caillaud and Robert~\cite{caillaud2005implementation} consider implementation of Myerson auction in a prior-independent fashion. They assume that at least one buyer knows the prior distribution of the winning buyer. Izmalkov~\cite{izmalkov2004shill} also discusses the implementation of the revenue maximization auction via English auction by shill biddings. While both work and ours are related in that we all study mechanism design from the perspective of priors, our goal is different in that we focus on the study of strategical manipulation of prior distributions.



\section{The fake distribution model}
In a single item auction, an auctioneer has a single item to sell to $n$ buyers. Each buyer $i$ has a private valuation $v_i$ towards the item, where each $v_i$ is drawn from a distribution $F_i$. Bulow and Roberts \cite{bulow1989simple} give a convenient representation of a distribution called \emph{quantile}.
\begin{definition}(Quantile).
	Given a random variable $v$ with cumulative distribution $F(v)$, let $q=1-F(v) \in [0,1]$ denote the quantile of $v$. The function $v(q)=F^{-1}(1-q): [0,1]\rightarrow \mathbb{R}_{+} $ is a mapping from the quantile space to valuation space, and is uniquely determined by distribution $F$. 
\end{definition}

We use $v(\cdot)$ to denote the distribution of $v$ in the quantile space. Note that $q$ is always uniformly distributed on [0,1] and $v(\cdot)$ is weakly monotone decreasing. Let $R(q)=qv(q)$ denote the {\em revenue curve} of distribution $v(\cdot)$.

We write $\mathbf{q}=(q_1,q_2,...,q_n)$ for the profile of quantiles for all bidders. Let $\mathcal{F}$ denote the set of all weakly monotone decreasing functions that maps $[0,1]$ to $\mathbb{R}_{+}$. We write $\mathbf{f}(\cdot)=(f_1(\cdot),f_2(\cdot),...,f_n(\cdot)) \in \mathcal{F}^n$ for the profile of $n$ arbitrary distributions in $\mathcal{F}^n$. Now we suppose that each $v_i$ is drawn from distribution $v_i(\cdot) \in \mathcal{F}$. We write $\mathbf{v}(\cdot)=(v_1(\cdot),...,v_n(\cdot))$ for the profile of valuation distributions. Sometimes we call this profile the {\em true distributions}. In our setting, the true distributions $\mathbf{v}(\cdot)$ is unknown to the seller\footnote{To compute equilibrium distribution of the induced game in the steady state, a sufficient condition is to assume that $\mathbf{v}(\cdot)$ is common knowledge among the buyers.}.


In our model, each buyer reports a new distribution $\hat{v}_i(\cdot)$ that may be different from $v_i(q)$, called the {\em fake distribution}.  We use $\hat{\mathbf{v}}(\cdot)=(\hat{v}_1(\cdot),...,\hat{v}_n(\cdot))$ to denote the profile of fake distributions.

\begin{definition}(Mechanism)
	
	Given prior $\mathbf{f}(\cdot)$, a (direct) mechanism consists of an allocation rule $x_i:\mathbb{R}^n_{+} \rightarrow [0,1]$ and a payment rule $t_i:\mathbb{R}^n_{+} \rightarrow \mathbb{R}$, for $i =1,2,...,n$.
\end{definition}

Note that both $x_i$ and $t_i$ are implicitly dependent on the prior. This motivates the following definition of mechanism family, which explicitly addresses this fact.

\begin{definition}(Mechanism Family)
	
	A mechanism family $\{M^{\mathbf{f}(\cdot)}|\mathbf{f}(\cdot) \in \mathcal{F}^n\}$ is a collection of mechanisms, one for each prior, so that each mechanism $M^{\mathbf{f}(\cdot)}$ in it takes $\mathbf{f}(\cdot)$ as the prior.
	
\end{definition}


Under a mechanism, we use function $b_i(\cdot): [0,1] \rightarrow \mathbb{R}_{+}$ to denote bidder $i$'s bidding strategy.

Given prior $\mathbf{f}(\cdot)$, bidding strategy profile $\mathbf{b}(\cdot)=(b_1(\cdot),...,b_n(\cdot))$ and buyer $i$'s quantile $q_i$, define the {\em interim allocation} as follows:

$$x^*_i(q_i)=\mathbf{E}_{b_{-i}}[x_i(b_i,b_{-i})]= \int_{\mathbf{q}_{-i}} x_i(b_i(q_i),\mathbf{b}_{-i}(\mathbf{q}_{-i})) d\mathbf{q}_{-i},$$

and {\em interim payment} as follows:

$$t^*_i(q_i)= \mathbf{E}_{b_{-i}}[t_i(b_i,b_{-i})]= \int_{q_{-i}} t_i(b_i(q_i),\mathbf{b}_{-i}(\mathbf{q}_{-i})) d\mathbf{q}_{-i}.$$




In order for the seller and other bidders to verify the fake distribution, the bidders need to play a Bayes Nash equilibrium of the mechanism, under the fake distribution profile. Formally


\begin{definition}(Bayes Nash equilibrium (BNE))
	Given $M^{\hat{\mathbf{v}}(\cdot)}$, the bidding strategy profile $\mathbf{b}(\cdot)=(b_1(\cdot),b_2(\cdot),...,b_n(\cdot))$ is a BNE of the mechanism, if
	\begin{eqnarray*}
		b_i(q_i) &\in& \arg\max_{b} \int_{q_{-i}} (\hat{v}_i(q_i)x_i(b,b_{-i}(q_{-i})) -t_i(b,b_{-i}(q_{-i})) )dq_{-i} \\
		&=&\arg\max_{b_i(q_i)} \hat{v}_i(q_i)x^*_i(q_i)-t^*_i(q_i) , ~\forall i=1,...,n ~\forall q_i \in [0,1]
	\end{eqnarray*}
\end{definition}

If the bidding strategy profile $\mathbf{b}(\cdot)=\hat{\mathbf{v}}(\cdot)$ is a BNE, we say the mechanism $M^{\hat{\mathbf{v}}(q)}$ is Bayes Incentive Compatible (BIC). If $\hat{v}_i(q_i)x^*_i(q_i)-t^*_i(q_i)\geq 0, ~\forall i=1,...,n~\forall q_i \in [0,1]$, we say it is {\em ex- post} individually rational (IR).

With the notations ready, the timing of our model can be conveniently described as follows:

\begin{enumerate}
	\item The seller commits to a mechanism family $\{M^{\mathbf{f}(\cdot)}|\mathbf{f}(\cdot) \in \mathcal{F}^n\}$.
	
	\item Buyers choose fake distributions $\hat{\mathbf{v}}(\cdot)$.
	
	\item Buyers play BNE of the mechanism $M^{\hat{\mathbf{v}}(\cdot)}$.
	
	
\end{enumerate}
Let $\Phi$ denote the set of all one-on-one mappings $\phi: [0,1] \rightarrow [0,1]$.

We, as analysts, are interested in the equilibrium of the induced game:

\begin{definition}(Induced game)
	\label{induced_game}
	Given a mechanism family $\{M^{\mathbf{f}(\cdot)}|\mathbf{f}(\cdot) \in \mathcal{F}^n\}$), the induce game is a normal-form game $(N,A,U)$ where
	\begin{itemize}
		\item $N=\{1,...,n\}$ is the set of buyers,
		\item $A=A_1 \times ...\times A_n$. where $A_i =\mathcal{F} \times \Phi$ is the set of actions of buyer $i$. We use $(\hat{v}_i(\cdot)\in \mathcal{F}, \phi_i(\cdot) \in \Phi)$ to denote an action of buyer $i$. This means that, the buyer not only needs to choose a fake distribution $\hat{v}_i(\cdot)$, but also needs to explicitly use $\phi_i(\cdot)$ to map the true distribution to the fake distribution, from quantile to quantile.
		\item $U=\{U_1,...,U_n\}$: the utility function for each buyer.
	\end{itemize}	
\end{definition}

The necessity of $\phi_i(\cdot)$ follows from the definition of utility of the induced game.

In the definition above, $U_i: A \rightarrow \mathbb{R}$ denotes the utility function of buyer $i$, such that
$$U_i((\hat{v}_1(\cdot),\phi_1(\cdot)),...,(\hat{v}_n(\cdot),\phi_n(\cdot)))= \int_0^1 (x_i^*(q_i)v_i(\phi_i(q_i))-t^*_i(q_i))dq_i$$
Where $x_i^*(q_i)$ and $t^*_i(q_i)$ are the interim allocation and payment in mechanism $M^{\hat{\mathbf{v}}(\cdot)}$ at BNE.


After buyer $i$ reports a fake distribution $\hat{v}_i(\cdot)$, by the definition of $\phi_i(\cdot)$, buyer $i$ will bid as if his valuation is $\hat{v}_i(q)$, when his true valuation is $v_i(\phi(q))$. We will prove in next section that each buyer $i$ will always want to choose $\phi_i(q_i)=q_i, \forall q_i \in [0,1]$ to maximize his utility. As a result, we can omit $\phi$ from the definition of the induced game and equivalently write $A_i =\mathcal{F}$ for short. 

Given $\hat{\mathbf{v}}(\cdot)$,
$$U_i(\hat{\mathbf{v}}(\cdot))= \int_0^1 (x_i^*(q_i)v_i(q_i)-t^*_i(q_i))dq_i$$

Denote by $u^*(q_i)=x_i^*(q_i)v_i(q_i)-t^*_i(q_i)$ the {\em interim utility} for any fixed $q_i$, so that $$U_i(\hat{\mathbf{v}}(\cdot))=\int_0^1 u^*(q_i) dq_i.$$

Game theoretical notions, such as Nash Equilibrium and best response are standard in the induced game. We sometimes refer to Nash equilibrium simply as equilibrium in the induced game.

\begin{definition}
	Given $\hat{\mathbf{v}}(\cdot)$, the seller's revenue is
	\begin{eqnarray*}
		REV&=&  \sum_{i=1}^n  \int_0^1 t^*_i(q_i) dq_i= \sum_{i=1}^n \int_{\mathbf{q}} t_i(\mathbf{b}(\mathbf{q})) d\mathbf{q}
	\end{eqnarray*}
	
\end{definition}

\section{Technical preliminaries}

In this section, we present several technical lemmas useful for the analyses of equilibrium in the induced game of various mechanism families. As mentioned, we only focus on the truthful mechanism families. This is justified by the following {\em revelation principle for mechanism families}. It states that, it is with loss of generality to restrict attention to mechanism families where every mechanism in it is Bayesian incentive compatible (BIC).


\begin{lemma}(Revelation principle for mechanism families)
	For any mechanism family $\{ M^{\mathbf{f}(\cdot)}|\mathbf{f}(\cdot) \in \mathcal{F}^n\}$, and given a BNE for each $M^{\mathbf{f}(\cdot)}$, there exists a new mechanism family $\{M'^{\mathbf{f}(\cdot)}|\mathbf{f}(\cdot) \in \mathcal{F}^n\}$, such that each mechanism $M'^{\mathbf{f}(\cdot)}$ is BIC and implements the same outcome as $M^\mathbf{f}(\cdot)$ at its truthful BNE. Moreover, if $\hat{\mathbf{v}}(\cdot)$ is a Nash equilibrium of the induced game of $M$, it is also a Nash equilibrium of the induced game of $M'$
	
\end{lemma}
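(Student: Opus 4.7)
The plan is to apply the classical revelation principle prior-by-prior and then check that this pointwise transformation is compatible with the outer distribution-committing game so that Nash equilibria carry over. For each prior $\mathbf{f}(\cdot)\in\mathcal{F}^n$, let $\mathbf{b}^{\mathbf{f}(\cdot)}(\cdot)$ be the chosen BNE of $M^{\mathbf{f}(\cdot)}$. I would define $M'^{\mathbf{f}(\cdot)}$ as the direct mechanism that, on receiving reports $(v_1^r,\ldots,v_n^r)$, first reads each quantile $q_i^r$ of $v_i^r$ under $f_i(\cdot)$ and then returns the outcome of $M^{\mathbf{f}(\cdot)}$ on the bid profile $\bigl(b_1^{\mathbf{f}(\cdot)}(q_1^r),\ldots,b_n^{\mathbf{f}(\cdot)}(q_n^r)\bigr)$, i.e.\ $x_i'(v^r)=x_i\bigl(\mathbf{b}^{\mathbf{f}(\cdot)}(q^r)\bigr)$ and analogously for $t_i'$.

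That $M'^{\mathbf{f}(\cdot)}$ is BIC and reproduces the same outcome as $M^{\mathbf{f}(\cdot)}$ at $\mathbf{b}^{\mathbf{f}(\cdot)}$ is then the usual revelation-principle argument: truthful reporting $v_i^r=f_i(q_i)$ triggers the internal simulation at $\mathbf{b}^{\mathbf{f}(\cdot)}$ and thus yields the BNE outcome, while any unilateral inner-game deviation to $v_i^r$ is equivalent to deviating inside $M^{\mathbf{f}(\cdot)}$ from $b_i^{\mathbf{f}(\cdot)}(q_i)$ to $b_i^{\mathbf{f}(\cdot)}(q_i^r)$, which is unprofitable by the BNE property of $\mathbf{b}^{\mathbf{f}(\cdot)}$.

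For the equilibrium-preservation clause, I would fix an arbitrary action profile $(\hat{v}_i(\cdot),\phi_i(\cdot))_i$ in the induced game of $M'$ and show that bidder $i$'s utility equals the utility he would have earned under the same profile in the induced game of $M$. Once the reported prior is $\hat{\mathbf{v}}(\cdot)$, the inner BNE of $M'^{\hat{\mathbf{v}}(\cdot)}$ is truthful reporting, and by construction this truthful BNE simulates $M^{\hat{\mathbf{v}}(\cdot)}$ at $\mathbf{b}^{\hat{\mathbf{v}}(\cdot)}$; hence the interim allocation $x_i^*(q_i)$ and interim payment $t_i^*(q_i)$ under $M'^{\hat{\mathbf{v}}(\cdot)}$ agree pointwise with those under $M^{\hat{\mathbf{v}}(\cdot)}$. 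Substituting into the utility formula from Definition~\ref{induced_game} returns the same $U_i$. The same identification applies after any unilateral outer-game deviation $\hat{v}'_i(\cdot)$, since $M'^{(\hat{v}'_i(\cdot),\hat{v}_{-i}(\cdot))}$ at its truthful BNE simulates $M^{(\hat{v}'_i(\cdot),\hat{v}_{-i}(\cdot))}$ at $\mathbf{b}^{(\hat{v}'_i(\cdot),\hat{v}_{-i}(\cdot))}$. Thus a profitable deviation from $\hat{\mathbf{v}}(\cdot)$ in the induced game of $M'$ would yield a profitable deviation in the induced game of $M$, and the contrapositive proves the claim.

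The main subtlety is bookkeeping between the two layers of strategic choice: the outer distribution-committing layer and the inner bidding BNE. One has to verify that when the reported prior changes to $(\hat{v}'_i(\cdot),\hat{v}_{-i}(\cdot))$, the inner BNE used inside $M'$ is the BNE $\mathbf{b}^{(\hat{v}'_i(\cdot),\hat{v}_{-i}(\cdot))}$ from the family fixed at the construction step, so that the correspondence between inner BNEs is preserved along every outer-game deviation. The mapping $\phi_i$ enters both utility expressions identically, so it cancels from the comparison once the invariance of $x_i^*$ and $t_i^*$ is pinned down.
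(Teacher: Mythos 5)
Your proof is correct, and it uses exactly the construction one would expect: the classical revelation principle applied prior-by-prior, yielding a direct mechanism $M'^{\mathbf{f}(\cdot)}$ that simulates $M^{\mathbf{f}(\cdot)}$ at the fixed BNE $\mathbf{b}^{\mathbf{f}(\cdot)}$, followed by the observation that the interim allocation and payment functions of $M'^{\hat{\mathbf{v}}(\cdot)}$ at its truthful BNE agree pointwise with those of $M^{\hat{\mathbf{v}}(\cdot)}$ at $\mathbf{b}^{\hat{\mathbf{v}}(\cdot)}$, so the two induced distribution-reporting games are payoff-identical and share the same Nash equilibria. The paper states this lemma without providing a proof in the appendix, evidently regarding it as a routine consequence of the standard revelation principle; your write-up supplies precisely the missing details, and the point you flag at the end --- that the correspondence between inner BNEs must be preserved along every unilateral outer deviation, not just at the candidate equilibrium profile --- is indeed the only step requiring care.
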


A mechanism family is truthful if every mechanism in it is BIC.

With this lemma, we only consider truthful mechanism family $\{M^{\mathbf{f}(\cdot)}|\mathbf{f}(\cdot) \in \mathcal{F}^n\}$ and the truthful BNE of every mechanism in it.

In the following, we use the interim notations and analyze from the point of view of buyer $i$ as if he is the only buyer. From now on, we omit the subscript $i$ when it is clear from the context.

\begin{lemma}
\label{mapping}
	For any truthful mechanism family and fixed $\hat{\mathbf{v}}(\cdot)$, the mapping $\phi(q)=q,\forall q \in [0,1]$ maximizes the buyer's utility among all possible one-on-one mappings.
\end{lemma}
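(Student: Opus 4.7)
The plan is to separate the utility $U_i$ into a part that depends on $\phi$ and a part that does not. By linearity of integration,
\begin{equation*}
U_i(\hat{v}_i(\cdot),\phi_i(\cdot)) \;=\; \int_0^1 x_i^*(q)\,v_i(\phi(q))\,dq \;-\; \int_0^1 t_i^*(q)\,dq,
\end{equation*}
and since $x_i^*$ and $t_i^*$ are functions of the buyer's report only (they are the interim allocation and payment of the truthful BNE of $M^{\hat{\mathbf v}(\cdot)}$, which do not depend on the true valuation), the second integral does not depend on $\phi$. Maximizing $U_i$ over $\phi \in \Phi$ therefore reduces to maximizing $\int_0^1 x_i^*(q)\,v_i(\phi(q))\,dq$.

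The next step is to identify two monotonicity facts that drive the argument. First, $v_i(\cdot)$ is weakly decreasing in $q$ by definition of the quantile function. Second, because the mechanism family is truthful and we are at the truthful BNE, Myerson-style monotonicity forces the interim allocation $x_i^*(q)$ to be weakly decreasing in $q$ as well. So the two factors of the integrand are both weakly decreasing on $[0,1]$, and intuitively they should be paired ``in the same order'' to maximize the integral, which happens precisely when $\phi$ is the identity.

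To make this rigorous I plan to invoke the Hardy--Littlewood rearrangement inequality. A one-on-one map $\phi:[0,1]\to[0,1]$ that is consistent with the fake-distribution setup must be measure-preserving (otherwise the bidder's bids would not reproduce the claimed distribution $\hat v_i$), so $v_i\circ\phi$ is equimeasurable with $v_i$ and its decreasing rearrangement is $v_i$ itself; the inequality then yields
\begin{equation*}
\int_0^1 x_i^*(q)\,v_i(\phi(q))\,dq \;\le\; \int_0^1 x_i^*(q)\,v_i(q)\,dq,
\end{equation*}
with equality when $\phi(q)=q$. A more elementary route is a pairwise-swap argument: whenever $q_1<q_2$ but $\phi(q_1)>\phi(q_2)$, swapping the values of $\phi$ at those two points changes the integrand by
\begin{equation*}
[x_i^*(q_1)-x_i^*(q_2)]\bigl[v_i(\phi(q_2))-v_i(\phi(q_1))\bigr]\;\ge\; 0
\end{equation*}
by the two monotonicities, so the identity is a weak global maximum.

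The main obstacle I anticipate is purely technical: carefully justifying the measure-preserving reduction (which the paper leaves implicit in ``one-on-one'') and the applicability of the rearrangement inequality for arbitrary $\phi\in\Phi$ in a continuous quantile space. The combinatorial content of the proof is just the rearrangement principle coupled with BIC allocation monotonicity, and the argument goes through as long as $x_i^*$ and $v_i$ are integrable with respect to the induced measures, which they are.
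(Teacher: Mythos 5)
Your proof is correct and follows essentially the same route as the paper: decompose $U_i$ into a $\phi$-independent payment term and the allocation term, observe that $v_i$ and the BIC interim allocation $x_i^*$ are both weakly decreasing, and apply the rearrangement inequality to conclude $\phi(q)=q$ is optimal. The pairwise-swap variant and the remark on measure preservation are nice elaborations but do not change the argument.
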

\begin{proof}
All missing proofs can be found in the appendix.	
\end{proof}
In other words, in the induced game, the buyer will always want to map a value that is in quantile $q$ in the true distribution to a value that is also in quantile $q$ in the fake distribution, no matter what the fake distribution is.

This lemma also holds for non-truthful mechanism family, since weakly decreasing $x^*(q)$ is a necessary condition of a BNE.


Use the same approach as Myerson's analysis, we first prove the {\em payment identity} formula using the quantile representation.

\begin{lemma}(Payment identity)
\label{payment_identity}
	In a BIC mechanism $M^{\hat{\mathbf{v}}(\cdot)}$, suppose $\hat{v}(\cdot)$ is continuous, then the interim allocation $x^*(q)$ is weakly decreasing in $q$. Moreover, given the buyer's quantile $q$,  the interim payment satisfies
	
	\begin{equation}
	\label{pay_iden}
	t^*(q)= \hat{v}(q)x^*(q)+\int_{r=q}^1 x^*(r)\hat{v}'(r) dr-\hat{v}(1)x^*(1)+t^*(1) \\
	\end{equation}
	
\end{lemma}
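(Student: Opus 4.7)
The plan is to mimic the classical Myerson payment identity argument, carried out in quantile space where the representative function $\hat{v}(\cdot)$ is weakly decreasing (since $q = 1 - F(v)$) rather than the valuation being increasing.

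First, for monotonicity of the interim allocation, I would write down the two BIC constraints for quantiles $q$ and $q'$: the buyer at quantile $q$ (with induced valuation $\hat{v}(q)$) must weakly prefer truthtelling to misreporting as $q'$, and symmetrically. Adding the two inequalities yields
\[
(\hat{v}(q) - \hat{v}(q'))\,(x^*(q) - x^*(q')) \;\geq\; 0.
\]
Since $\hat{v}(\cdot)$ is weakly decreasing, $q < q'$ implies $\hat{v}(q) \geq \hat{v}(q')$, which forces $x^*(q) \geq x^*(q')$.

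For the payment identity, I would define the interim utility $u^*(q) = \hat{v}(q)\,x^*(q) - t^*(q)$, then switch to valuation space by setting $v = \hat{v}(q)$ and writing $U(v) = \max_{q'}\{v\,x^*(q') - t^*(q')\}$. By the envelope theorem at the truthful optimum, $U'(v) = x^*(q(v))$, so integrating from the minimum valuation $\hat{v}(1)$ up to $v = \hat{v}(q)$ gives $U(v) - U(\hat{v}(1)) = \int_{\hat{v}(1)}^{v} x^*(q(w))\,dw$. Changing variables $w = \hat{v}(r)$, $dw = \hat{v}'(r)\,dr$, and noting that $w = \hat{v}(1)$ corresponds to $r = 1$ while $w = v$ corresponds to $r = q$, transforms the integral into $-\int_q^1 x^*(r)\,\hat{v}'(r)\,dr$. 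Substituting $u^*(q) = U(\hat{v}(q))$ and rearranging gives exactly the formula in (\ref{pay_iden}).

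The one delicate point is keeping track of signs: because $\hat{v}'(r) \leq 0$, the integral $\int_q^1 x^*(r)\,\hat{v}'(r)\,dr$ is nonpositive, and this is precisely why it appears with a plus sign in the stated identity, making the payment smaller than $\hat{v}(q)x^*(q)$ as one expects. A minor technical wrinkle is that $\hat{v}$ is only assumed continuous, not differentiable, but this can be handled either by interpreting the integral as a Riemann--Stieltjes integral $\int_q^1 x^*(r)\,d\hat{v}(r)$ or by a standard approximation argument; I do not expect this to be the main obstacle, since the envelope/payment identity is robust to such regularity weakening.
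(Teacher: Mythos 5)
Your proposal is correct and follows essentially the same route as the paper: the monotonicity argument (add the two BIC inequalities to get a product inequality) is identical, and your payment identity derivation is the same envelope-style argument the paper uses, merely phrased in value space (envelope theorem giving $U'(v) = x^*(q(v))$ followed by the change of variables $w = \hat{v}(r)$) rather than directly in quantile space (where the paper sandwiches $u(q')-u(q)$ between $x^*(q)(\hat{v}(q')-\hat{v}(q))$ and $x^*(q')(\hat{v}(q')-\hat{v}(q))$, passes to the limit to get $u'(q) = -x^*(q)\hat{v}'(q)$, and integrates). The only small caveat worth noting is that your value-space change of variables requires $\hat{v}$ to be invertible, which is slightly more than the stated continuity; the paper's direct quantile-space sandwich sidesteps this, though in practice the paper also freely differentiates $\hat{v}$, so neither presentation is airtight at this level of regularity.
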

Given allocation rule $\textbf{x}(\cdot) \in \mathbb{R}_+^n \rightarrow [0,1]^n$ such that $x^*(q)$ is weakly decreasing, the auctioneer can implement (\ref{pay_iden}) by choosing payment rule
\begin{equation}
\label{payment}
t(\mathbf{b}(\mathbf{q})) = b(q)x(\mathbf{b}(\mathbf{q}))-\int_{z=\hat{v}(1)}^{b(q)} x(z,\mathbf{b}_{-i}(\mathbf{q}_{-i}))dz, ~~\forall \mathbf{q} \in [0,1]^n
\end{equation}
Payment rule (\ref{payment}) satisfies $-\hat{v}(1)x^*(1)+t^*(1)=0$

Lemma \ref{payment_identity} requires the continuity and boundedness of $\hat{v}(\cdot)$, In fact, we show in appendix that it is sufficient to only consider continuous, bounded fake distributions. So WLOG, we can make this assumption throughout the paper.


\begin{lemma}
	\label{utilitity1}
	Given fake distribution profile $\hat{\mathbf{v}}(\cdot)$, the expected utility of the buyer is
	\begin{equation}
	U= \int_0^1 x^*(q)(v(q)-r(q)) dq+\hat{v}(1)x^*(1)-t^*(1)\\
	\end{equation}
	where $r(q)=\hat{v}(q)+q\hat{v}'(q)$.
\end{lemma}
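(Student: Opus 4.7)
The plan is to start from the definition of the buyer's utility, substitute the payment identity from Lemma \ref{payment_identity}, and then exchange the order of integration on the one double-integral term that appears. Everything else is algebraic collection.

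First I would write
\[
U \;=\; \int_0^1 \bigl(x^*(q)\,v(q) - t^*(q)\bigr)\,dq
\]
(recall that by Lemma \ref{mapping} we may take $\phi(q)=q$, so the buyer's true valuation at quantile $q$ is exactly $v(q)$). Substituting equation (\ref{pay_iden}) of Lemma \ref{payment_identity} for $t^*(q)$ gives
\[
U \;=\; \int_0^1 x^*(q)\bigl(v(q)-\hat{v}(q)\bigr)\,dq \;-\; \int_0^1\!\int_q^1 x^*(r)\hat{v}'(r)\,dr\,dq \;+\; \hat{v}(1)x^*(1)-t^*(1),
\]
because the boundary terms $-\hat{v}(1)x^*(1)+t^*(1)$ inside the integrand are constants in $q$ and integrate against $\int_0^1 dq=1$, flipping sign when moved to the right-hand side.

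Next I would apply Fubini to the double integral. The region $\{(q,r): 0\le q\le 1,\ q\le r\le 1\}$ is the same as $\{(q,r): 0\le r\le 1,\ 0\le q\le r\}$, so
\[
\int_0^1\!\int_q^1 x^*(r)\hat{v}'(r)\,dr\,dq \;=\; \int_0^1 x^*(r)\hat{v}'(r)\!\int_0^r dq\,dr \;=\; \int_0^1 q\,x^*(q)\,\hat{v}'(q)\,dq,
\]
after renaming $r\mapsto q$. Plugging this back and combining the two integrands over $[0,1]$ yields
\[
U \;=\; \int_0^1 x^*(q)\bigl(v(q) - \hat{v}(q) - q\hat{v}'(q)\bigr)\,dq \;+\; \hat{v}(1)x^*(1) - t^*(1),
\]
which is exactly the claimed identity since $r(q) = \hat{v}(q)+q\hat{v}'(q)$.

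No step is really an obstacle: the only delicate point is justifying the Fubini swap, which is fine because $\hat{v}$ is (by the standing assumption from Lemma \ref{payment_identity}) continuous and bounded, $\hat{v}'$ is integrable on $[0,1]$, and $x^*\in[0,1]$, so the integrand is absolutely integrable on the triangular region. It is also worth remarking, for intuition later in the paper, that $r(q) = \hat{v}(q) + q\hat{v}'(q) = \frac{d}{dq}\bigl(q\hat{v}(q)\bigr) = \hat R'(q)$ is the derivative of the revenue curve associated to the fake distribution, so the lemma rewrites $U$ as the integral of $x^*(q)$ against the gap between true value and fake-marginal-revenue, plus the boundary correction — this is the form that will drive the equilibrium analyses of the induced games in the sequel.
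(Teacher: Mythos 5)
Your proof is correct and follows essentially the same route as the paper's: substitute the payment identity into the utility definition, interchange the order of integration on the triangular region to turn the double integral into $\int_0^1 q\,x^*(q)\hat{v}'(q)\,dq$, and collect terms to obtain $v(q)-\hat{v}(q)-q\hat{v}'(q)=v(q)-r(q)$. The paper does this same Fubini swap silently in its chain of equalities; your extra remark about justifying absolute integrability and the observation that $r(q)=\hat R'(q)$ is the marginal revenue are correct additions but not departures.
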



$r(q)$ is known as the virtual value (in \cite{myerson1981optimal}).


\begin{lemma}
	\label{revenue}
	Given the fake distribution profile $\hat{\mathbf{v}}(\cdot)$, the expected revenue is
	\begin{equation}
	REV= \sum_{i=1}^n (\int_0^1 x_i^*(q)r(q)dq -\hat{v}_i(1)x_i^*(1)+t_i^*(1))
	\end{equation}
\end{lemma}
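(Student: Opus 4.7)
The plan is to obtain the revenue formula by integrating the payment identity from Lemma \ref{payment_identity} over the quantile space and summing across buyers. By definition, $REV = \sum_i \int_0^1 t_i^*(q)\, dq$, so dropping the subscript $i$ and looking at a single buyer, I want to compute $\int_0^1 t^*(q)\, dq$ using
\[
t^*(q) = \hat{v}(q)x^*(q) + \int_q^1 x^*(r)\hat{v}'(r)\, dr - \hat{v}(1)x^*(1) + t^*(1).
\]

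The only nontrivial step is handling the double integral $\int_0^1 \int_q^1 x^*(r)\hat{v}'(r)\, dr\, dq$. I would apply Fubini to swap the order: the region $\{(q,r): 0 \le q \le r \le 1\}$ rewrites this as $\int_0^1 x^*(r)\hat{v}'(r) \int_0^r dq\, dr = \int_0^1 r\, x^*(r)\hat{v}'(r)\, dr$. Combining with the $\int_0^1 \hat{v}(q)x^*(q)\, dq$ term and recognizing $r(q) = \hat{v}(q) + q\hat{v}'(q)$ gives
\[
\int_0^1 t^*(q)\, dq = \int_0^1 x^*(q)\, r(q)\, dq - \hat{v}(1)x^*(1) + t^*(1),
\]
after which summation over $i$ yields the claim. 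An alternative derivation is to use the welfare identity $\sum_i \int_0^1 x_i^*(q) v_i(q)\, dq = \sum_i U_i + REV$ and plug in the expression for $U_i$ from Lemma \ref{utilitity1}; the $v_i(q) x_i^*(q)$ terms cancel, leaving exactly the desired formula. I would likely present the direct integration since it needs no extra machinery.

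The main obstacle is essentially bookkeeping: the Fubini swap is valid because Lemma \ref{payment_identity} already requires $\hat{v}$ to be continuous and bounded (so $\hat{v}'$ is integrable in the distributional sense) and $x^*$ takes values in $[0,1]$, making the integrand absolutely integrable on the triangular region. The boundary terms $-\hat{v}(1)x^*(1) + t^*(1)$ simply pass through the outer integration as constants in $q$, which is why they appear unchanged in the final expression.
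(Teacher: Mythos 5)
Your proof is correct. Your primary route --- integrating the payment identity from Lemma~\ref{payment_identity} over $q$ and swapping the order of integration in the double integral --- reproduces from scratch the Fubini step that the paper already carried out inside the proof of Lemma~\ref{utilitity1}. The paper's own proof of this lemma is instead the one-line observation that $REV = \sum_i \int_0^1 x_i^*(q)v_i(q)\,dq - \sum_i U_i$ and then substituting the utility formula of Lemma~\ref{utilitity1}, which is exactly the ``alternative derivation'' you sketch at the end. Both paths are correct and rest on the same computation; the paper's is slightly shorter because it reuses Lemma~\ref{utilitity1}, while yours is more self-contained since it needs only Lemma~\ref{payment_identity}. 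The Fubini justification you give (boundedness of $x^*$ and integrability of $\hat{v}'$ under the paper's continuity assumption) is adequate.
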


Lemma 3.5 follows directly from Lemma 3.4 and Definition 2.6.

\section{Warm-up: Prior-independent mechanism family}

We first analyze the easy case where the mechanism family under consideration is independent of priors. We show that in any such mechanism family, it is a weak dominant strategy for each buyer to report the true distribution.

\begin{definition}(Prior-independent mechanism family)
	
	A mechanism family $\{ M^{\mathbf{f}(\cdot)}|\mathbf{f}(\cdot) \in \mathcal{F}^n\}$ is prior independent if it is truthful, and take any bid profile $\mathbf{b}=({b}_1,{b}_2,...,{b}_n)$ as input, $M^{\mathbf{f}(\cdot)}$ returns the same outcome for any $\mathbf{f}(\cdot)$.
\end{definition}

By definition, second-price auction (SPA) is a prior-independent mechanism family, while the Myerson auction, second-price auction with monopoly reserve price, are not.

\begin{theorem}
	\label{prior-independent}
	For any prior-independent mechanism family, it is a weakly dominant strategy for each buyer to report the true distribution in the induced game. Moreover, the buyer's utility increases as $\hat{v}(q)$ approaches $v(q)$.
\end{theorem}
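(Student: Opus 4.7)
The plan is to exploit the fact that under a prior-independent truthful family the ``menu'' faced by buyer $i$ is completely detached from his own reported distribution: it depends only on the mechanism and on the opponents' reports. Reporting the truth then reduces, quantile by quantile, to the familiar statement that truthful bidding is the best response in any BIC direct mechanism.

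First, because $M^{\mathbf{f}(\cdot)}$ returns the same $(x_i,t_i)$ for every prior $\mathbf{f}(\cdot)$, I drop the superscript. Fix any opposing fake profile $\hat{\mathbf{v}}_{-i}(\cdot)$; by Lemma \ref{mapping} (with the identity $\phi_j$) each opponent $j\neq i$ bids $\hat{v}_j(q_j)$ at the truthful BNE. Define the induced single-agent menu
\begin{equation*}
\tilde{x}_i(b)=\int_{\mathbf{q}_{-i}} x_i\bigl(b,\hat{\mathbf{v}}_{-i}(\mathbf{q}_{-i})\bigr)\,d\mathbf{q}_{-i},\qquad
\tilde{t}_i(b)=\int_{\mathbf{q}_{-i}} t_i\bigl(b,\hat{\mathbf{v}}_{-i}(\mathbf{q}_{-i})\bigr)\,d\mathbf{q}_{-i},
\end{equation*}
which depend only on buyer $i$'s bid. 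Since the family is truthful under \emph{every} prior, for any value $v>0$ the bid $b=v$ must maximize $v\,\tilde{x}_i(b)-\tilde{t}_i(b)$; in particular $\tilde{x}_i$ is weakly increasing in $b$ and $(\tilde{x}_i,\tilde{t}_i)$ satisfies the single-agent payment identity $\tilde{t}_i(b)=b\,\tilde{x}_i(b)-\int_{0}^{b}\tilde{x}_i(s)\,ds+\tilde{t}_i(0)$.

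Now if buyer $i$ reports fake $\hat{v}_i(\cdot)$, his interim utility at quantile $q$ equals $v_i(q)\,\tilde{x}_i(\hat{v}_i(q))-\tilde{t}_i(\hat{v}_i(q))$, which by the previous maximization is bounded above by $v_i(q)\,\tilde{x}_i(v_i(q))-\tilde{t}_i(v_i(q))$, the interim utility of truthfully reporting $\hat{v}_i=v_i$. Integrating over $q$, and noting that the bound is independent of $\hat{\mathbf{v}}_{-i}(\cdot)$, shows that truthful reporting is weakly dominant.

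For the monotonicity claim, I show pointwise single-peakedness of $g(b):=v\,\tilde{x}_i(b)-\tilde{t}_i(b)$ at $b=v$. Substituting the payment identity yields
\begin{equation*}
g(v)-g(b)=\int_{b}^{v}\bigl(\tilde{x}_i(s)-\tilde{x}_i(b)\bigr)\,ds\ \geq\ 0,
\end{equation*}
because the integrand has the same sign as $s-b$ by monotonicity of $\tilde{x}_i$, matching the direction of integration. Hence $g$ is weakly increasing on $(-\infty,v]$ and weakly decreasing on $[v,\infty)$, so whenever $\hat{v}^{(1)}(q)$ lies between $\hat{v}^{(2)}(q)$ and $v_i(q)$ at every $q$, pointwise $g\bigl(\hat{v}^{(1)}(q)\bigr)\geq g\bigl(\hat{v}^{(2)}(q)\bigr)$, and integrating over $q$ yields the utility comparison. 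The only delicate point is handling atoms or flat intervals of $\tilde{x}_i$, which I would address via the Riemann--Stieltjes form of the payment identity in the standard Myerson-style way; this is routine rather than the main obstacle.
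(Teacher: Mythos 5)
Your proposal is correct and follows essentially the same route as the paper's: fix the opponents' fake profile, reduce to a quantile-by-quantile single-agent problem, observe that prior-independence makes the interim menu depend only on the bid, and then apply the payment identity together with monotonicity of the interim allocation. Where the paper differentiates $u^*(q)$ and reads off $\frac{du^*}{d\hat v}=x'(\hat v)(v-\hat v)$, you work with the integrated form, which is a mild improvement in rigor since it avoids assuming $\tilde x_i$ is differentiable.

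One small gap in the ``moreover'' step: the identity $g(v)-g(b)=\int_b^v\bigl(\tilde x_i(s)-\tilde x_i(b)\bigr)\,ds\ge 0$ by itself shows only that $b=v$ is a global maximizer of $g$, not that $g$ is monotone on each side of $v$. The single-peakedness you assert (and the statement actually needs) follows from a companion computation, e.g. for $b_1<b_2\le v$,
\begin{equation*}
g(b_2)-g(b_1)=(v-b_2)\bigl(\tilde x_i(b_2)-\tilde x_i(b_1)\bigr)+\int_{b_1}^{b_2}\bigl(\tilde x_i(s)-\tilde x_i(b_1)\bigr)\,ds\ \ge\ 0,
\end{equation*}
and symmetrically for $v\le b_1<b_2$. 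With that one line added, the argument is complete.
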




	


\section{The Myerson auction}

In the fake distribution setting, we show that the Myerson auction is strategically equivalent to the first-price auction in the standard setting.

We first consider the case where we restrict the set of possible fake distributions to regular distributions, i.e., the virtual value $r(q)=\hat{v}(q)+q\hat{v}'(q)$ is weakly decreasing.

For each $i$, given buyer $i$'s fake distribution $\hat{v}_i(\cdot)$ and bid $b_i$, define virtual bid (of buyer $i$) $ \phi(b_i)=b_i-\frac{1-G_{i}(b_i)}{g(b_i)}$, where $G_i$ is the cumulative distribution function on $\hat{v}_i(\cdot)$ and $g_i$ is the probability density function.

\begin{definition}
	A mechanism $M^{\mathbf{f}(\cdot)}$ in the Myerson mechanism family allocates the item to the buyer with the largest positive virtual bid $\phi(b)$ (with respect to $f(\cdot)$). (If no buyer has positive virtual bid, the seller keeps the item). The payment rule is according to (\ref{payment}).

\end{definition}

Under any regular fake distribution profile, it is a truthful mechanism family.

Fix other buyers' fake distributions $\hat{v}_{-i}(\cdot)$,  given the quantile $q$, in the same spirit as the proof of Theorem \ref{prior-independent},  the interim allocation only depends on the value $r(q)$.  As a result, we can write $x(r(q))=x^*(q)$.

\begin{theorem}
	In Myerson mechanism family, for fixed $\hat{v}_{-i}(\cdot)$, one of the buyer's best response of the induced game is
	$$\hat{v}(q)=\frac{\int_0^q r(s)ds}{q},q \in [0,1]$$
	\begin{equation}
	r(q) = \arg\max_{r(q)\geq 0} x(r(q))(v(q)-r(q)) , q \in [0,1]
	\end{equation}
	Furthermore, $\hat{v}(q)$ is weakly monotone decreasing and regular.
\end{theorem}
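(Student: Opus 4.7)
The strategy is to apply Lemma~\ref{utilitity1} to rewrite the buyer's utility as an integral whose integrand at each quantile $q$ depends only on the single scalar $r(q)$, then maximize pointwise, and finally verify that the resulting function $r$ corresponds to a valid regular fake distribution given by the stated closed form. By Lemma~\ref{utilitity1}, $U=\int_0^1 x^*(q)(v(q)-r(q))\,dq+\hat v(1)x^*(1)-t^*(1)$, and the Myerson payment rule (\ref{payment}) makes the boundary term $\hat v(1)x^*(1)-t^*(1)$ vanish. Because the Myerson family allocates to the bidder with the largest positive virtual bid, the interim allocation $x^*(q)$ is a weakly increasing function $x(\cdot)$ of the buyer's own virtual value $r(q)$ alone (given $\hat{\mathbf v}_{-i}(\cdot)$), so $U=\int_0^1 x(r(q))(v(q)-r(q))\,dq$.

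\noindent\textbf{Pointwise maximization and monotonicity.} Because the integrand at each $q$ depends only on $r(q)$, I can maximize the integral pointwise by setting $r(q)\in\arg\max_{r\geq 0} x(r)(v(q)-r)$; the constraint $r\geq 0$ loses no generality since in the Myerson family any $r<0$ is weakly dominated by $r=0$. The crucial property is that this pointwise optimizer is weakly decreasing in $q$, which follows from monotone comparative statics. The objective $x(r)(v-r)$ has increasing differences in $(r,v)$: for $r'\geq r$ and $v'\geq v$,
$$\bigl[x(r')(v'-r')-x(r)(v'-r)\bigr]-\bigl[x(r')(v-r')-x(r)(v-r)\bigr]=(v'-v)\bigl(x(r')-x(r)\bigr)\geq 0,$$
using weak monotonicity of $x(\cdot)$. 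By Topkis' theorem a selection from $\arg\max$ is weakly increasing in $v$, and since $v(\cdot)$ is weakly decreasing in $q$, the chosen $r(\cdot)$ is weakly decreasing in $q$.

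\noindent\textbf{Recovering $\hat v$ and the main obstacle.} From $r(q)=\hat v(q)+q\hat v'(q)=(q\hat v(q))'$, integrating with boundary $q\hat v(q)\big|_{q=0}=0$ yields $\hat v(q)=\frac{1}{q}\int_0^q r(s)\,ds$, matching the theorem's formula. Since $r\geq 0$, $\hat v\geq 0$; since $r$ is weakly decreasing, the running average satisfies $\hat v(q)\geq r(q)$, so $\hat v'(q)=(r(q)-\hat v(q))/q\leq 0$ and $\hat v\in\mathcal F$. Regularity is immediate, since by construction the virtual-value function of $\hat v$ is exactly the weakly decreasing $r$. The main technical subtlety lies in the interaction between pointwise optimization and the monotonicity constraint defining a regular fake distribution: without the weak monotonicity of $x(\cdot)$, the pointwise optimum need not be a valid virtual-value function, and a genuinely constrained optimization would be required. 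The monotonicity of $x(\cdot)$ inherent in the Myerson allocation is exactly what makes the unconstrained pointwise optimum automatically feasible, collapsing a functional optimization into a family of decoupled scalar problems indexed by $q$.
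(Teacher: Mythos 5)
Your proposal takes essentially the same approach as the paper: reduce the objective to $\int_0^1 x(r(q))(v(q)-r(q))\,dq$ via Lemma~\ref{utilitity1}, maximize pointwise in $r(q)$, establish that the pointwise optimizer is weakly decreasing in $q$, and integrate $r=(q\hat v)'$ with boundary $q\hat v(q)|_{q=0}=0$ to recover $\hat v(q)=\frac{1}{q}\int_0^q r(s)\,ds$ and check $\hat v\geq r$ gives $\hat v'\leq 0$. The only cosmetic difference is that you invoke Topkis' theorem / increasing differences for the monotonicity of $r(\cdot)$, while the paper adds the two optimality inequalities directly --- the same revealed-preference argument made concrete --- so the two proofs are equivalent in substance.
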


\begin{example}
	\label{Myerson_example1}
	For the case of $n=2$ and both buyers' true distributions are i.i.d. from uniform [0,1], i.e., $v_1(q)=v_2(q)=1-q$. Suppose that buyer two reports the true distribution, i.e. $\hat{v}_2(q)=1-q$, so $r_2(q)=1-2q$. The best response of buyer one in the induced game is $\hat{v}_1(q)=\epsilon$.
	
\end{example}
In this example, buyer one's utility is $\frac{1}{4}$, greater than $\frac{1}{12}$, the utility buyer one gets when reporting the true distribution.

\begin{example}
	\label{example_2_buyer}
	For the case of $n=2$ and both buyers' true distributions are i.i.d. from uniform [0,1], i.e., $v_1(q)=v_2(q)=1-q$. Each buyer reports fake distribution $\hat{v}(q)=\frac{1}{2}-\frac{1}{4}q$ is an equilibrium of induced game in Myerson's mechanism family.
\end{example}

For the equilibrium in Example \ref{example_2_buyer}, $U_1=U_2=\frac{1}{6}$, expected social welfare \\
$SW=E[\max\{v_1(q),v_2(q)\}]=\frac{2}{3}$, revenue $REV=\frac{1}{3}$.

Compared to the Myerson auction in the standard setting: $U_1=U_2=\frac{1}{12}, REV=\frac{5}{12},SW=\frac{7}{12}$, the equilibrium in the induced game yields higher buyer utility and social welfare, but less revenue. As a result, it is desirable for the buyers to commit to such fake distributions.\\

We now prove our main theorem of the section, the equivalence between Myerson auction in the fake distribution model and first-price auction in the standard model.

\begin{theorem} {\bf (Main Theorem)}
	\label{equal_to_FPA}
	Given $\mathbf{v}(\cdot)$, the induced game of Myerson mechanism family is strategically equivalent to the first-price auction in standard setting.
\end{theorem}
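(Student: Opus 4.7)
The plan is to exhibit an explicit bijection between each buyer's strategy space in the Myerson induced game (regular fake distributions) and their strategy space in the first-price auction (FPA) (weakly decreasing non-negative bid functions), and then verify that this bijection preserves interim allocation probabilities and interim utilities pointwise. Strategic equivalence will then follow because best responses, and hence Nash equilibria, are preserved.

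The bijection I would use is $\hat v(q) \leftrightarrow r(q)$ with $r(q) = \hat v(q) + q\hat v'(q) = \tfrac{d}{dq}[q\hat v(q)]$ in one direction, and $\hat v(q) = \tfrac{1}{q}\int_0^q b(s)\,ds$ in the other, identifying the FPA bid $b(q)$ with the Myerson virtual value $r(q)$. A short check shows: (i) if $\hat v$ is regular, $r$ is weakly decreasing and non-negative; (ii) if $b$ is weakly decreasing and non-negative, then $\hat v$ (an average of $b$ over $[0,q]$) is itself weakly decreasing and non-negative, and its virtual value is exactly $b$. So the two strategy spaces are in bijection. Under this identification, the Myerson allocation rule (give the item to the buyer with largest positive $r_i(q_i)$) coincides with the FPA allocation rule (give the item to the buyer with largest positive bid $b_i(q_i)$) on every quantile profile $\mathbf q$, up to a measure-zero tie set. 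Consequently the interim winning probabilities $x_i^*(q_i)$ coincide across the two games.

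For utilities, I would invoke Lemma \ref{utilitity1} together with the payment rule in equation (\ref{payment}), which makes the boundary term $\hat v_i(1)x_i^*(1) - t_i^*(1)$ vanish. Then the Myerson-induced-game utility becomes
\begin{equation*}
U_i^{\text{Myer}}(\hat{\mathbf v}) \;=\; \int_0^1 x_i^*(q)\bigl(v_i(q) - r_i(q)\bigr)\,dq,
\end{equation*}
while in FPA the utility from bid function $b_i$ is $U_i^{\text{FPA}}(\mathbf b) = \int_0^1 P_i(q)(v_i(q) - b_i(q))\,dq$ with $P_i(q)$ the interim winning probability. Since $P_i = x_i^*$ and $b_i = r_i$ under the bijection, $U_i^{\text{Myer}} \equiv U_i^{\text{FPA}}$ for every profile. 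Because each player's utility is preserved strategy-by-strategy, best-response correspondences are preserved, which is the definition of strategic equivalence; the claim follows.

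The main obstacle I foresee is justifying that the two strategy sets really do correspond cleanly. On one side, I need to argue that restricting the induced Myerson game to regular fake distributions is without loss of generality (an irregular $\hat v$ should be weakly dominated by an "ironed" version whose virtual value equals the monotone envelope of $r$, and gives the same allocation); on the other, I need to argue that in FPA one may restrict to weakly decreasing (in quantile) non-negative bid functions, which is standard at BNE but should be checked off-equilibrium as well. A secondary subtlety is the tie-breaking and the boundary at $r(q)=0$ (the Myerson reserve), which must be aligned with the corresponding "bid $0$ loses" convention in FPA. Once those alignments are handled, the three-step verification above closes the equivalence and immediately yields the revenue-equivalence corollary.
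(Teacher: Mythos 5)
Your proposal is correct and takes essentially the same approach as the paper: the paper also reparametrizes each buyer's action as the virtual-value curve $r(\cdot)$, observes that the Myerson allocation (largest virtual value) and the FPA allocation (largest bid) are literally the same function of $\mathbf r(\cdot)=\mathbf b(\cdot)$, and uses Lemma~\ref{utilitity1} with payment rule~(\ref{payment}) to match $U_i=\int_0^1 x_i^*(q)(v_i(q)-r_i(q))\,dq$ to the FPA utility. The only cosmetic difference is that you speak of a ``bijection'' $\hat v \leftrightarrow r$, whereas the map $\hat v \mapsto r$ is many-to-one (the ODE $r=\hat v + q\hat v'$ has many solutions); the paper sidesteps this by noting all such $\hat v$ yield identical utility and then defining the induced game's action set directly as virtual-value functions, which is exactly the quotient your chosen section $\hat v(q)=\tfrac1q\int_0^q r(s)\,ds$ realizes.
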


The idea for proving the theorem is to notice that, the utility
formula of the induced distribution-reporting game of Myerson (see Equation (\ref{F}) in the appendix) is
exactly the utility formula of a first-price auction where each
player's action is to report a virtual value.


As a result, the two games are revenue equivalent. In other words, Myerson's revenue in the fake distribution setting is reduced to that of the first-price auction in the standard setting.


%
%

See again  Example \ref{example_2_buyer}, it is well known that the BNE for first-price auction is $b(v)=\frac{v}{2}$ for valuation distribution uniform in [0,1]. So the virtual value of fake distribution $r(q)=\frac{v(q)}{2}=\frac{1-q}{2}$, same as the proof of Example \ref{example_2_buyer}.

Given this result, it is easy to prove the monotonicity of $r(q)$. Since $r(q)$ is the optimal bidding strategy in the corresponding FPA, it is monotone.

All results so far extend to the case where we allow the buyers to report any, possibly irregular fake distribution. The Myerson auction also extends the general distributions via ironing, a procedure that maps the irregular distribution to a regular distribution, then run Myerson mechanism for the ironed distributions. All our results in this section remain the same, because whenever a buyer reports an irregular distribution, it is equivalent to report the ironed distribution instead.

\section{Second-price auction with monopoly reserve}

Second-price auction with monopoly reserve \cite{hartline2009simple} is an auction that first rejects all the bids below the respective monopoly reserves and run second-price auction on the remaining bids. Since we allow buyers to report any distributions, we need to extend the definition of monopoly reserve to the irregular case.  We begin by reviewing Myerson's ironing procedure.

\begin{definition}(Ironing)
	
\noindent For any distribution $f(\cdot) \in \mathcal{F}$ (regular or irregular) with revenue curve $R(q)=qf(q)$. Let $\overline{R}(q)$ denote the smallest concave function such that $\overline{R}(q) \geq R(q)$ for all $q\in [0,1]$. $\overline{R}(q)$ is called the ironed revenue curve. The derivative of ironed revenue curve $\overline{r}(q)=\overline{R}'(q)$ is the ironed virtual value.
	
\end{definition}

Given a distribution $f(\cdot)$, the quantile $q^*$ such that $\overline{r}(q^*)=0$ is called the reserve quantile for distribution $f(q)$. The value $f(q^*)$ is called the monopoly reserve price for distribution $f(q)$. If there are multiple reserve quantiles, set the reserve quantile to be the largest one (i.e., the smallest type whose virtual value is 0).
\begin{definition}(Second-price auction with monopoly reserve (SPAMR) mechanism family)
	
	For any mechainism $M^{\mathbf{f}(\cdot)}$ in a SPAMR
	mechanism family, it first computes the monopoly reserve price for each buyer $i$ and then
	allocates the item to the buyer with the largest bid $b$ which is not less than the monopoly reserve price. (If no buyer satisfies this, the seller
	keeps the items). The payment rule is according to (\ref{payment}).
\end{definition}




\begin{lemma}
	\label{ironing}
	For any distribution $f(q)$ with revenue curve $R(q)=qf(q)$, its reserve quantile $q^*$ satisfies $q^*=\arg\max_{q\in [0,1]} qR(q)$ (breaking tie by choosing the largest quantile)
\end{lemma}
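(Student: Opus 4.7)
The plan is to exploit the concavity of the ironed revenue curve $\overline{R}$ to locate $q^*$ as a maximizer. Since $\overline{R}$ is concave with derivative $\overline{r}$ weakly decreasing, and the tie-breaking rule picks the largest $q^*$ with $\overline{r}(q^*)=0$, we have $\overline{r}\geq 0$ on $[0,q^*]$ and $\overline{r}\leq 0$ on $[q^*,1]$. Thus $\overline{R}$ is weakly increasing on $[0,q^*]$ and weakly decreasing on $[q^*,1]$, so $q^*$ is the largest maximizer of $\overline{R}$ on $[0,1]$.

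Next, I would show that the largest maximizer of $\overline{R}$ coincides with the largest maximizer of $R$. Let $M:=\max_q R(q)$. The constant function $M$ is concave and dominates $R$ pointwise, so by minimality of $\overline{R}$ we have $\overline{R}\leq M$; combined with $\overline{R}\geq R$, this yields $\max\overline{R}=M$. If $q_0$ is the largest maximizer of $R$, then $R(q_0)=\overline{R}(q_0)=M$. Conversely, if some $q>q_0$ satisfied $\overline{R}(q)=M$, then $q$ would lie in an ironed interval $[a,b]$ on which $\overline{R}$ is linear and agrees with $R$ at the endpoints; since $\overline{R}\equiv M$ there, the right endpoint $b\geq q>q_0$ would give $R(b)=M$, contradicting the choice of $q_0$. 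Hence $q^*=\arg\max_q R(q)=\arg\max_q qf(q)$, taking the largest.

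The main obstacle is a genuine discrepancy with the literal statement. The lemma as written claims $q^*=\arg\max_q qR(q)=\arg\max_q q^2 f(q)$, whereas the argument above (and the standard monopoly-reserve characterization) identifies $q^*$ with $\arg\max_q R(q)=\arg\max_q qf(q)$. A sanity check shows the literal claim cannot hold in general: in the regular case, $\frac{d}{dq}[qR(q)]\big|_{q=q^*}=R(q^*)+q^*\,r(q^*)=R(q^*)>0$ since $r(q^*)=0$ and $R(q^*)>0$ in any non-degenerate instance, so $qR(q)$ is still strictly increasing at $q^*$ and cannot attain its maximum there. I therefore read the lemma's intended right-hand side as $\arg\max_q R(q)$, which the plan above establishes and which matches the usage of monopoly reserve in the rest of the paper's SPAMR analysis.
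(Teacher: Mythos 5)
Your proof is correct and follows essentially the same route as the paper's: both reduce the claim to the fact that the largest maximizer of the ironed curve $\overline{R}$ (which is $q^*$ by concavity and the definition of the reserve quantile) coincides with the largest maximizer of $R$, with your write-up supplying the details of that coincidence that the paper asserts in a single line. You are also right that the statement's $\arg\max_{q} qR(q)$ is a typo for $\arg\max_q R(q)=\arg\max_q qf(q)$ --- the paper's own proof concludes exactly $q^*=\arg\max R(q)$, confirming your reading and your uniform-distribution-style sanity check.
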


\begin{proof}
	Since $\overline{R}(q)$ is convex hull of $R(q)$, their maximum points remain the same. So $\arg\max \overline{R}(q)=\arg\max R(q)$, and by definition $q^*=\arg\max{\overline{R}(q)}=\arg\max R(q)$
\end{proof}



Unlike Theorem 4.2, reporting truthfully now leads to a potentially bad reserve price and thus may no longer be a dominant strategy.

In this section, we assume that the true distributions $\mathbf{v}(\cdot)$ are regular. We further consider two cases: in this first case, the mechanism family allows buyers to report any fake distributions, while in the second case, the mechanism family only allows buyers to report regular fake distributions.

\subsection{Case one: general fake distribution}

\begin{theorem}
	\label{vqformall}
	In the induced game of SPAMR mechanism family, fix $\hat{v}_{-i}(\cdot)$, there exists $R$ such that, it is WLOG to consider $\hat{v}(\cdot)$ with the following form (see  the red line (line $Oq_1q_2q_3\hat{q}^*$) in Figure 1 in Appendix).
	\begin{itemize}
		\item for $q_3 \leq q \leq 1$, $\hat{v}(q)=R$
		\item for $q_2 \leq q < q_3$ or $0 \leq q < q_1$, $\hat{v}(q)=v(q)$
		\item for $q_1 \leq q < q_2$, $\hat{v}(q)=\frac{R}{q}$
	\end{itemize}
	Where $q_1v(q_1)=R,q_2v(q_2)=R,q_1<q_2,v(q_3)=R$.
	
\end{theorem}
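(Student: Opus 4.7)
The plan is to show that any fake distribution $\hat v$ can be replaced by some $\tilde v$ of the stated piecewise form, for an appropriate $R$, without lowering buyer $i$'s utility; so the best response can be taken in this form. Fix $\hat{\mathbf{v}}_{-i}$; by Lemma \ref{mapping}, we may take $\phi(q)=q$, so the buyer bids $\hat v(q)$ at true quantile $q$. Let $q^*$ be the rightmost argmax of $q\hat v(q)$ and $R:=\hat v(q^*)$ be the monopoly reserve that the SPAMR mechanism assigns to buyer $i$ under $\hat v$. In terms of this $R$, set $q_1<q_2$ as the two roots of $qv(q)=R$ (they exist when $R$ lies below the peak of the regular revenue curve $qv(q)$; the degenerate regimes collapse the form cleanly) and $q_3$ as the solution of $v(q_3)=R$. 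Define $\tilde v$ piecewise as in the statement. A direct check shows $\tilde v$ is weakly decreasing and continuous (the boundary values match by the definitions of $q_1,q_2,q_3$), and $\max_q q\tilde v(q)=R$ is attained exactly on $[q_1,q_2]\cup\{1\}$, so $\tilde v$'s rightmost argmax is $1$ and its induced monopoly reserve is $\tilde v(1)=R$, matching the original.

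Since the distribution $H$ of the maximum eligible bid among the other buyers depends only on $\hat{\mathbf{v}}_{-i}$, for any fixed reserve $R$ the pointwise interim utility $u(q,b)$ from bidding $b$ at true quantile $q$ is the standard SPA-with-reserve payoff: $u(q,b)=0$ for $b<R$, and on $[R,\infty)$ it is weakly increasing in $b$ up to $b=v(q)$ and weakly decreasing thereafter. I would then argue pointwise optimality of $\tilde v$ subject to the constraints (monotonicity and induced reserve $=R$, equivalently $qb(q)\le R$ with $b(1)=R$). On $[0,q_1]\cup[q_2,q_3]$, $v(q)\ge R$ and $qv(q)\le R$, so the truthful $\tilde v(q)=v(q)$ is both the unconstrained SPA optimum and feasible, hence dominates any other feasible bid. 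On $[q_1,q_2]$, $qv(q)>R$ forbids truthful bidding; among feasible bids $b\le R/q<v(q)$, $u(q,b)$ is weakly increasing in $b$, so the cap $\tilde v(q)=R/q$ is best. On $[q_3,1]$, $v(q)<R$, so any eligible bid loses money; the monotone extension with $\tilde v(1)=R$ pins $\hat v(q)\in[R,R/q]$ in this range, and among those $\tilde v(q)=R$ minimizes the (non-positive) expected payoff. Finally, the case in which the initial $\hat v$ has $q^*<1$ is reduced to $q^*=1$ by flattening $\hat v$ to $R$ on $[q^*,1]$; this preserves the reserve $R$ (now attained at $q=1$) while strictly enlarging the feasible bid set on $[0,q^*]$, after which the above pointwise argument applies.

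I expect the main obstacle to be the bid-cap step on $[q_1,q_2]$: the claim that raising $\hat v(q)$ above $R/q$ there is suboptimal rests on the fact that doing so pushes $\max_q q\hat v(q)$ strictly above $R$, which endogenously changes the induced monopoly reserve and breaks the clean pointwise SPA decomposition I rely on. To exclude the possibility of compensating for the cap with a simultaneous modification elsewhere, one must package the bid function and the induced reserve together and perform the exchange carefully, quantile by quantile, rather than treating $R$ as truly exogenous. A secondary subtlety sits in the $[q_3,1]$ step: it implicitly uses the convention that bidding exactly at one's own reserve against an absolutely continuous opposition distribution carries zero marginal probability of winning, so that $\tilde v(q)=R$ is utility-equivalent to any bid in $[R,R/q]$ there; if one insists on atoms at $R$, the canonical choice in the statement would require an additional limit or perturbation argument to justify.
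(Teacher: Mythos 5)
Your proof follows the same overall skeleton as the paper's: (i) observe that once the induced reserve is fixed, buyer $i$'s interim utility reduces to a standard SPA-with-reserve payoff, so one can optimize $\hat v(q)$ pointwise subject to the monotonicity and revenue-curve constraints (this is the paper's Lemma~\ref{vq} combined with Lemma~\ref{vqform}); and (ii) reduce to the case where the reserve quantile is $\hat q^*=1$ (the paper's Lemma~\ref{qform}). Part (i) of your argument is essentially correct and matches the paper's. The gap is in part (ii).

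You reduce $q^*<1$ to $q^*=1$ by ``flattening $\hat v$ to $R$ on $[q^*,1]$,'' claiming that this preserves the reserve and enlarges the feasible bid set on $[0,q^*]$, after which the pointwise argument applies. But the flattening move itself can \emph{strictly decrease} utility. Under the SPAMR definition, bids $\geq$ the monopoly reserve are eligible. Before flattening, $\hat v(q)<r$ on $(q^*,1]$, so those quantiles are ineligible and contribute exactly zero interim utility. After flattening, $\hat v(q)=r$ there, so those quantiles become eligible with probability $x(r)$; and since $v(q)<v(q_3)=r$ on this region, each such quantile now contributes a strictly negative interim utility $x(r)(v(q)-r)$ (when $x(r)>0$). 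Your observation that the feasible set on $[0,q^*]$ is enlarged shows the re-optimized utility is at least the flattened utility, but not that it is at least the \emph{original} utility --- the chain ``original $\leq$ flattened $\leq$ re-optimized'' breaks at the first inequality. The paper closes exactly this gap with the explicit computation in Lemma~\ref{qform}: it parametrizes the family by $q_3$ (holding $R$, the max of the fake revenue curve, fixed), writes out $U(q_3)$ as a sum of four integrals, differentiates, and shows $U'(q_3)>0$, i.e., the gains from enlarging the truthful region and cheapening the reserve strictly dominate the losses from newly eligible low-value quantiles. That sign calculation is nontrivial and is the substantive content your proposal is missing.

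A smaller point: your proposal also does not establish the paper's Lemma~\ref{R_less_than_Rstar} ($R\leq R^*$), which you wave off as a ``degenerate regime''; it deserves a short argument (the paper shows truncating the fake revenue curve at $R^*$ is strictly utility-improving, via Lemma~\ref{vq}). You do correctly self-diagnose the endogenous-reserve issue and the atom-at-$R$ subtlety, which is a good instinct, but identifying a difficulty is not the same as resolving it; the proposed fix (flattening plus ``enlarged feasible set'') does not go through as stated.
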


\begin{theorem}
\label{no_equilibrium}
	For any buyers' true distributions that are i.i.d. regular with minimum support at 0 and do not have a point mass, there does not exist a symmetric equilibrium in the induced game.
\end{theorem}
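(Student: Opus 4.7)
The plan is proof by contradiction. Assuming a symmetric equilibrium $\hat v$ exists, I invoke Theorem~6.3 at the symmetric opponent profile and reduce (WLOG) to $\hat v$ being of the shelf form with parameters $(R, q_1, q_2, q_3)$. I would then pin down the monopoly reserve via Lemma~6.1: the revenue curve $q\hat v(q)$ is bounded above by $R$, with equality only on $[q_1, q_2]$ and at $q=1$, so $q^{2}\hat v(q) = q\cdot q\hat v(q)\leq qR\leq R$ on $[0,1]$ with the unique maximizer at $q=1$; hence the monopoly reserve price is $\hat v(1)=R$.

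In the symmetric profile each buyer bids exactly $R$ throughout the shelf $[q_3, 1]$, matching the reserve; a buyer with quantile in the shelf wins only when every opponent is also on the shelf (producing an $n$-way tie at $R$) and uniform tie-break selects them, yielding the positive constant interim allocation $\alpha := (1-q_3)^{n-1}/n$ on $[q_3, 1]$ provided $q_3<1$. Since $r_{\mathrm{virt}}(q) = \hat v(q) + q\hat v'(q) = R$ on the shelf and the boundary term $\hat v(1)x^*(1)-t^*(1)$ vanishes under payment rule~(\ref{payment}), Lemma~3.4 shows the shelf contributes $\alpha\int_{q_3}^{1}(v(q)-R)\,dq$ to each buyer's utility; this is strictly negative because the no-point-mass assumption makes $v$ strictly decreasing with $v(q_3)=R$. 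The key deviation is to shave the shelf: buyer~$1$ reports $\hat v'$ equal to $\hat v$ on $[0, q_3)$ and (smoothly decreasing to) $R-\epsilon$ on $[q_3, 1]$ for a sufficiently small $\epsilon>0$. A short computation shows the new monopoly reserve equals $R-\epsilon$. In the deviated asymmetric game buyer~$1$'s shelf bid $R-\epsilon$ is strictly dominated by every opponent's bid ($\geq R$), so buyer~$1$ loses the entire shelf while off the shelf bids and opponents are unchanged. Lemma~3.4 therefore yields
\[
\Delta U \;=\; \alpha\int_{q_3}^{1}(R-v(q))\,dq \;>\; 0,
\]
contradicting equilibrium.

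For the degenerate case $q_3=1$, the minimum support at $0$ forces $R=v(1)=0$, collapsing $\hat v\equiv 0$: all buyers bid zero with zero reserves, and each obtains $\mathbb E[v]/n$ via uniform tie-break. Deviating to the constant $\hat v'\equiv\epsilon$ sets buyer~$1$'s reserve and bid to $\epsilon$, so they win outright, pay $\epsilon$, and receive $\mathbb E[v]-\epsilon > \mathbb E[v]/n$ for $n\geq 2$ and small $\epsilon$, again contradicting equilibrium. The main obstacle is the WLOG reduction in step one: Theorem~6.3 is formulated for best responses against a fixed $\hat v_{-i}$, so if the candidate symmetric equilibrium $\hat v^*$ is not itself of shelf form one must exploit the indifference between $\hat v^*$ and its shelf-form best-response partner to transfer the profitable deviation back. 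A smaller technicality is the smoothing of $\hat v'$ at $q_3$ to satisfy the continuity hypothesis of Lemma~3.4, which costs only an $O(\delta)$ correction that does not destroy the strict inequality above.
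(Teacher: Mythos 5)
You follow the same overall route as the paper's Lemma~\ref{minus_e}: reduce to the shelf form, note that the shelf $[q_3,1]$ contributes the strictly negative amount $\alpha\int_{q_3}^1(v(q)-R)\,dq$ with $\alpha=(1-q_3)^{n-1}/n$, and deviate to eliminate that term. But the deviation you construct --- keep $\hat v'=\hat v$ on $[0,q_3)$ and lower only the shelf to $R-\epsilon$ --- does not do what you claim. The culprit is that you rely on the typo in the statement of Lemma~\ref{ironing}: its own proof establishes $q^*=\arg\max_q R(q)$, not $\arg\max_q qR(q)$. With the corrected lemma, the deviated revenue curve $q\hat v'(q)$ still attains its maximum $R$, but only on $[q_1,q_2]$ (the shelf now sits strictly below $R$ since $q\hat v'(q)\le R-\epsilon<R$ there and at $q=1$). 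The ironed revenue curve is therefore flat at $R$ on $[q_1,q_2]$ and strictly decreasing on $(q_2,1]$, so the reserve quantile becomes $q_2$ and the new monopoly reserve jumps \emph{up} to $\hat v'(q_2)=v(q_2)=R/q_2>R$, not down to $R-\epsilon$. Consequently buyer~$1$ is now rejected on all of $(q_2,1]$, forfeiting the positive interim utility $\int_{q_2}^{q_3}x^*(q)(-qv'(q))\,dq$ he was collecting on $(q_2,q_3)$, and his payments on $[0,q_2]$ rise because the effective lower integration limit in the payment rule moves from $R$ to $R/q_2$. Both of these losses are $O(1)$ in $\epsilon$, and the shelf gain $\alpha\int_{q_3}^1(R-v(q))\,dq$ does not dominate them in general (for instance it is exponentially smaller as $n$ grows, since $q_2<q_3$). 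So the step ``off the shelf bids and opponents are unchanged'' is false and the deviation does not establish a strict improvement.

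The correct deviation --- and the one the paper actually uses --- is to rebuild the \emph{entire} shelf form at level $R'=R-\epsilon$: set $q_1'<q_1$ and $q_2'>q_2$ as the roots of $qv(q)=R'$, set $q_3'>q_3$ by $v(q_3')=R'$, put $\hat v'(q)=R'/q$ on $[q_1',q_2']$, $\hat v'=v$ on $[0,q_1')\cup[q_2',q_3')$, and $\hat v'=R'$ on $[q_3',1]$. Then the ironed revenue curve is constant at $R'$ on $[q_1',1]$, the reserve quantile remains $1$, and the monopoly reserve genuinely drops to $R'$. Each of the first three terms in Lemma~\ref{vqutility} is a continuous function of $R$ and therefore shifts by only $O(\epsilon)$, while the shelf term becomes exactly zero because with bid $R'<R$ buyer~$1$ loses outright to every opponent --- a strict $O(1)$ gain. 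Your treatment of the degenerate case $R=0$, $\hat v\equiv 0$ is fine, and you are right to flag the remaining step of justifying that a candidate symmetric equilibrium must be of shelf form; the paper closes this via Lemmas~\ref{no1} and~\ref{no2} (no point mass in a symmetric equilibrium, and uniqueness of the shelf-form best response given $R$ and $\hat q^*$).
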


\subsection{Case two: regular fake distribution}


\begin{theorem}
	\label{rvqform}
	There exists a quantile $q_0$ such that it is without generality to consider $\hat{v}(\cdot)$ with the following form
	\begin{itemize}
		\item For all $q<q_0$, $\hat{v}(q)=v(q)$
		\item For all $q\geq q_0$, $\hat{v}(q)=\frac{q_0v(q_0)}{q}$
	\end{itemize}
\end{theorem}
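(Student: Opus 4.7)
My plan is to combine Theorem~\ref{vqformall} (the WLOG form for general, possibly irregular fake distributions) with the concavity constraint imposed by regularity, and show that the four-piece form from that theorem must collapse to the two-piece form stated here.

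First, I would invoke Theorem~\ref{vqformall}: a WLOG fake distribution has the form $\hat v(q)=v(q)$ on $[0,q_1]\cup[q_2,q_3]$, $\hat v(q)=R/q$ on $[q_1,q_2]$, and $\hat v(q)=R$ on $[q_3,1]$, with $q_1 v(q_1)=q_2 v(q_2)=R$ and $v(q_3)=R$, for some parameters $R$ and $q_1\le q_2\le q_3$. The resulting revenue curve $\hat R(q):=q\hat v(q)$ equals $qv(q)$, the constant $R$, $qv(q)$, and the linear function $qR$ on the four pieces, respectively; the constraints $q_1 v(q_1)=q_2 v(q_2)=R$ with $q_1<q_2$ together with regularity of $v$ force $q_1\le q^*_{\text{true}}\le q_2$, where $q^*_{\text{true}}$ denotes the argmax of $qv(q)$.

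Second, I would enforce concavity of $\hat R$ on $[0,1]$ (equivalent to regularity of $\hat v$) by comparing one-sided derivatives at the interior breakpoints. At $q_3$, the left derivative of $\hat R$ is $v(q_3)+q_3 v'(q_3)\le 0$, since on $[q_2,q_3]$ the curve $qv(q)$ is past its peak $q^*_{\text{true}}$, whereas the right derivative is the slope $R>0$. Concavity demands left $\ge$ right, which is impossible unless the piece $[q_3,1]$ is absent, i.e., $q_3=1$; the constraint $v(q_3)=R$ then yields $v(1)=R$. The equation $q_2 v(q_2)=R=v(1)=1\cdot v(1)$ together with $q_2\ge q^*_{\text{true}}$ and the strict monotonicity of $qv(q)$ on $[q^*_{\text{true}},1]$ then forces $q_2=1$ as well, eliminating the intermediate ``second $v$-piece''.

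Third, with $q_2=q_3=1$ the four-piece form reduces to exactly $\hat v(q)=v(q)$ on $[0,q_1]$ and $\hat v(q)=R/q$ on $[q_1,1]$. Setting $q_0:=q_1$ and using $R=q_0 v(q_0)$ recovers the stated form $\hat v(q)=q_0 v(q_0)/q$ on $[q_0,1]$.

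The main obstacle is verifying that the transformation in Theorem~\ref{vqformall}, when applied to a regular $\hat v$, produces a regular four-piece form so that the derivative analysis above applies inside the class of regular fake distributions. I would address this either by tracing the ironing step in the proof of Theorem~\ref{vqformall}---the ``irregular'' segments are precisely the ones replaced by the constant-$R$ plateau, hence the output is concave on every piece by construction, and the only potential non-concavity is at the $q_3$ breakpoint, which the analysis above then rules out---or via a direct variational argument on the utility $U=\int_0^{q^*}(v(q)-r(q))F_C(\hat v(q))\,dq$ obtained from Lemma~\ref{utilitity1}: reparametrizing by the concave revenue curve $y=q\hat v$ (so $r=y'$), the pointwise integrand pushes the optimum toward $\hat v=v$ on an initial interval (making $r$ equal to the true virtual value $r_v$, the smallest value consistent with $\hat v\le v$ locally) and toward $y$ constant on the tail (making $r\equiv 0$), yielding the two-piece form with $q_0$ determined by continuity $q_0 v(q_0)=R$.
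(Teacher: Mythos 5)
Your argument hinges on importing Theorem~\ref{vqformall} into the regular-only setting, but that theorem is a WLOG statement proved in Case one, where the buyer may report arbitrary (possibly irregular) fake distributions. Its underlying transformation---pointwise replacement within the feasible band for fixed $R$ and $\hat q^*$, as in Lemma~\ref{vqform}---does not preserve concavity of the revenue curve: your own one-sided-derivative calculation at $q_3$ shows the four-piece output is irregular whenever $q_3<1$. So starting from a regular $\hat v$ and applying the Theorem~\ref{vqformall} transformation typically produces a distribution outside the Case-two action space, and the implication ``WLOG a regular best response has the four-piece form'' does not follow. Your first proposed fix is circular: the failure of concavity at $q_3$ is exactly the statement that the transformation left the regular class, not that $q_3=1$ for the best \emph{regular} response. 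Your second (variational) fix is a plausible direction but is left unworked and invokes a constraint ``$\hat v\le v$ locally'' that is not the constraint you actually face, which is concavity of $q\hat v(q)$.

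The paper's proof stays entirely inside the regular class. Lemma~\ref{rqform} shows, via a transformation that manifestly preserves concavity (extend $\hat v$ with the equal-revenue tail $R/q$, flattening the revenue curve at its maximum), that it is WLOG $\hat q^*=1$. With $\hat q^*=1$ the concave revenue curve is increasing on $[0,1]$, forcing $\hat v(q)\le q_2\hat v(q_2)/q$; the pointwise ``closest to $v$'' argument (Lemma~\ref{vq}, in the spirit of Theorem~\ref{prior-independent}), together with setting the tail to $q_2\hat v(q_2)/q$, then produces exactly the two-piece form, which is itself concave. If you want to salvage your route you would need to redo the pointwise optimization of Lemma~\ref{vqform} subject to the extra concavity constraint rather than borrow its unconstrained conclusion from the general case.
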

We call the corresponding $q_0$ the incident quantile of the buyer.
\begin{example}
	\label{example_SPAMR}
	In SPAMR mechanism family, for the case where $n=2$ and the buyers' true distribution are i.i.d. from uniform [0,1], i.e. $v_1(q)=v_2(q)=1-q$, then the following fake distribution is an equilibrium of the induced game.
	
	\begin{itemize}
		\item for $0 \leq q< \frac{1}{4}$ ,$\hat{v}_i(q)=1-q,$
		\item for $\frac{1}{4} \leq q \leq 1$, $\hat{v}_i(q)= \frac{3}{16q}$
	\end{itemize}
\end{example}
\begin{theorem}
	\label{revenue_equal}
	In SPAMR family, for the case where the buyers' true distributions are i.i.d. regular, the revenue yielded by the symmetric Nash equilibrium of the induced game is equal to the revenue of the SPA under the true distribution.
\end{theorem}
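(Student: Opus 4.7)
The plan is to reduce the claimed revenue equality to a single identity in $q_0$, and then show that this identity coincides with the symmetric-equilibrium first-order condition.

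By Theorem~\ref{rvqform} every symmetric strategy is parameterized by a single incident quantile $q_0$, with $\hat v(q)=v(q)$ on $[0,q_0]$ and $\hat v(q)=q_0v(q_0)/q$ on $(q_0,1]$. A direct computation shows that the fake virtual value $r(q)=\hat v(q)+q\hat v'(q)$ equals the true virtual value $r_{\mathrm{true}}(q)=R'(q)$ on $[0,q_0]$ and vanishes on $(q_0,1]$, where $R(q)=qv(q)$ is the revenue curve. At the symmetric profile $\hat v$ is strictly decreasing and every bid exceeds the common reserve $R=q_0v(q_0)$, so the SPAMR winner is always the buyer with the smallest quantile, giving $x_i^*(q)=(1-q)^{n-1}$. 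Since the payment rule~(\ref{payment}) yields $t^*(1)=\hat v(1)x^*(1)$, Lemma~\ref{revenue} gives
\[
REV_{\mathrm{eq}}=n\int_0^{q_0}(1-q)^{n-1}r_{\mathrm{true}}(q)\,dq.
\]
Applied to SPA under the true distribution (BIC, with the same interim allocation $(1-q)^{n-1}$), the same lemma gives $REV_{\mathrm{SPA}}=n\int_0^1(1-q)^{n-1}r_{\mathrm{true}}(q)\,dq$. Equality thus reduces to $\int_{q_0}^1(1-q)^{n-1}r_{\mathrm{true}}(q)\,dq=0$, and, using $r_{\mathrm{true}}=R'$ together with integration by parts, to the identity
\[
(1-q_0)^{n-1}R(q_0)=(n-1)\int_{q_0}^1(1-q)^{n-2}R(q)\,dq.
\]

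To derive this identity I would consider a unilateral deviation of buyer $i$ to incident $q_0^i$ while the others keep $q_0$. The same bid-comparison argument splits $x_i^*(q)$ into three pieces: $(1-q)^{n-1}$ on $[0,q_0]$, $(1-R/v(q))^{n-1}$ on $(q_0,q_0^i]$ (independent of $q_0^i$), and $(1-qR/R_{q_0^i})^{n-1}$ on $(q_0^i,1]$ with $R_{q_0^i}=q_0^i v(q_0^i)$. Substituting into Lemma~\ref{utilitity1} and differentiating at $q_0^i=q_0$, the upper-limit term of the second integral combines with the lower-limit term of the third integral to give $-(1-q_0)^{n-1}r_{\mathrm{true}}(q_0)$, and the Leibniz interior term of the third integral uses $\partial(R/R_u)/\partial u\big|_{u=q_0}=-r_{\mathrm{true}}(q_0)/R$ to produce the term $(n-1)r_{\mathrm{true}}(q_0)/R\cdot\int_{q_0}^1(1-q)^{n-2}R(q)\,dq$. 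Setting the FOC to zero and assuming $r_{\mathrm{true}}(q_0)\neq 0$ (i.e.\ $q_0$ lies strictly below the true monopoly quantile) recovers exactly the identity above.

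The main obstacle is the piecewise identification of $x_i^*(q)$ under the deviation, in particular the third region where $i$'s bid $R_{q_0^i}/q$ must be compared against both opponent bid types, and the careful Leibniz-rule computation that yields the correct factor $r_{\mathrm{true}}(q_0)/R$; once those ingredients are in hand, the remainder is one integration by parts and bookkeeping of boundary terms.
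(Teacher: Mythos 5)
Your proposal is correct and follows essentially the same route as the paper: parameterize symmetric strategies by the incident quantile $q_0$ via Theorem~\ref{rvqform}, note that the fake virtual value equals the true one on $[0,q_0]$ and vanishes on $(q_0,1]$, write the revenue via Lemma~\ref{revenue}, and then derive the required integral identity $(1-q_0)^{n-1}R(q_0)=(n-1)\int_{q_0}^1(1-q)^{n-2}R(q)\,dq$ as the first-order condition of a unilateral deviation in the incident quantile — exactly what the paper's Lemma~\ref{incident_quantile} and its $n$-buyer extension establish. Your version is a bit cleaner in that it handles general $n$ in one pass, isolates the target identity upfront as $\int_{q_0}^1(1-q)^{n-1}r_{\mathrm{true}}(q)\,dq=0$, and makes the piecewise structure of $x_i^*$ under the deviation fully explicit (where the paper leaves $x(\cdot)$ abstract and the intermediate chain-rule steps are notationally murky), but the underlying argument is the same; you also correctly flag the caveat $r_{\mathrm{true}}(q_0)\neq 0$, which the paper implicitly assumes when it divides by $r(q_0)$.
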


\section{Second-price auction with reserves in quantile}

\begin{definition}{Second-price auction with random quantile reserve (SPARQR)}
	
	Each mechanism $M^{\mathbf{f}(\cdot)}$ in the mechanism family first a random drawn $q_r$ from uniform $ [0,1]$, and allocate the item to the buyer with
	largest bid $b$ subject to $b \geq f(q_r)$ (If none of the buyers satisfies the constraint, the
	seller keeps the items). The payment rule is derived by (\ref{payment}).
	
\end{definition}

If the distribution $f(\cdot)$ does not have a point mass, the mechanism is known as the {\em single sample mechanism}~\cite{huang2015making}.

\begin{theorem}
	\label{ssvqform}
	In SPARQR mechanism family, for the case where $n=2$ and buyers' true distributions are i.i.d., fake distribution $\hat{v}(\cdot)$ subject to $v(q)=\hat{v}(q)-q\hat{v}'(q),~\forall q \in [0,1]$ and $v(1)=\hat{v}(1)$ is weakly decreasing and is an equilibrium distribution of the induced game.
\end{theorem}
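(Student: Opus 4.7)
My plan is to verify that the symmetric profile in which both buyers report $\hat{v}$ is a Nash equilibrium of the induced game by establishing the first-order condition for buyer~1 and then checking monotonicity. By Lemma~\ref{mapping} I restrict to $\phi = \mathrm{id}$, so buyer~1's strategic choice reduces to his fake distribution $\tilde{v}_1(\cdot)$. In SPARQR, if buyer~1 reports $\tilde{v}_1$ against opponent~2 who reports $\hat{v}$, then his bid is $\tilde{v}_1(q_1)$ and his own random reserve is $\tilde{v}_1(q_r)$, so he clears his reserve iff $q_1 \le q_r$; writing $\hat{q}^* := \hat{v}^{-1}(\tilde{v}_1(q))$, his bid beats buyer~2's iff $q_2 > \hat{q}^*$. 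A short case analysis gives the interim allocation as a closed form in $(q,\hat{q}^*)$, which at the symmetric profile (where $\hat{q}^* = q$) reduces to $x_1^*(q) = (1-q)^2$; I would also record the sensitivity $\partial x_1^*/\partial \hat{q}^*\big|_{h=q} = -(1-q)$, verifying that the two case expressions for $x_1^*$ agree to first order at the kink $\hat{q}^* = q$.

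With this in hand I would invoke Lemma~\ref{utilitity1} to write $U_1(\tilde{v}_1) = \int_0^1 x_1^*(q)(v(q) - r_1(q))\,dq$, where $r_1(q) = \tilde{v}_1(q) + q\tilde{v}_1'(q)$ and the boundary terms vanish because $x_1^*(1) = 0$. Perturbing $\tilde{v}_1 = \hat{v} + \varepsilon\eta$ and computing $\frac{d}{d\varepsilon}U_1\big|_{\varepsilon=0}$ produces two contributions: the dependence of $x_1^*$ through $\hat{q}^*$ gives $\delta x_1^*(q) = -(1-q)\eta(q)/\hat{v}'(q)$, while integrating by parts the $q\eta'$ term arising from $r_1$ yields $\int_0^1 -2q(1-q)\eta(q)\,dq$. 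Setting the total coefficient of $\eta$ to zero and substituting $r = \hat{v} + q\hat{v}'$ collapses the condition to exactly the stated ODE $v(q) = \hat{v}(q) - q\hat{v}'(q)$; the free constant in this first-order equation is pinned down by the boundary $v(1) = \hat{v}(1)$ stated in the theorem.

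It remains to show $\hat{v}$ is weakly decreasing. Rearranging the ODE gives $\hat{v}'(q) = (\hat{v}(q)-v(q))/q$; setting $g(q) = \hat{v}(q)-v(q)$ one has $(g(q)/q)' = -v'(q)/q$, and combining with the boundary $g(1) = 0$ yields
\[
\frac{g(q)}{q} \;=\; \int_q^1 \frac{v'(s)}{s}\,ds \;\le\; 0
\]
because $v$ is weakly decreasing; hence $g \le 0$ and $\hat{v}'(q) \le 0$. A second-order check---negative semidefiniteness of the second variation, which follows from concavity of $x_1^*(q,\cdot)$ in $\hat{q}^*$---upgrades stationarity to a genuine best response.

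The hardest step is the variational calculus in the second paragraph: in SPARQR a perturbation of $\tilde{v}_1$ simultaneously shifts buyer~1's bid, his own random reserve, and the resulting Myerson payment. Getting the pointwise sensitivity $\partial x_1^*/\partial \hat{q}^*$ right at the symmetric kink (and confirming that integration by parts produces no spurious boundary contribution) is what makes the FOC collapse cleanly to $v = \hat{v} - q\hat{v}'$, rather than to a more cluttered expression involving residual reserve-coupling terms.
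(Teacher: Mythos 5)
Your proposal is correct and takes essentially the same approach as the paper: you derive the first-order condition by a direct perturbation argument (equivalent to the paper's Euler--Lagrange computation on $F=(1-q)x(\hat{v})(v-\hat{v}-q\hat{v}')$), obtain the same ODE $v(q)=\hat{v}(q)-q\hat{v}'(q)$ with boundary $v(1)=\hat{v}(1)$, and then check monotonicity. Your monotonicity argument via the closed form $\hat{v}'(q)=q^{-1}\int_q^1 v'(s)/s\,ds\le 0$ is a clean alternative to the paper's convexity argument ($\hat{v}''\ge 0$ together with $\hat{v}'(1)=0$) and incidentally sidesteps a small slip there: differentiating the ODE gives $v'(q)=-q\hat{v}''(q)$, not $-\hat{v}''(q)$ as written, though the paper's conclusion $\hat{v}''\ge 0$ still follows.
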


\begin{example}
	In SPARQR mechanism family, for the case where  $n=2$ and buyers' true distributions are i.i.d. from uniform [0,1], then $\hat{v}(q)=1-q+q\ln q$ is an equilibrium of the induced game.
	
	Note that $\ln q>\frac{q-1}{q}$ so $\hat{v}(q) \geq 0$,
	$\hat{v}'(q)=\ln q \leq 0$, which is a feasible distribution.
\end{example}

\begin{theorem}
	\label{singel_sample_less_than_SPA}
	In the SPARQR mechanism family, if $n$ buyers' true distributions are i.i.d., the revenue under the equilibrium of the induced game is no more than the revenue of second price auction under the true distribution.
\end{theorem}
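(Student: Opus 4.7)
The plan is to extend Theorem~\ref{ssvqform} to general $n$ and then reduce the revenue inequality to a monotonicity fact about the equilibrium fake distribution $\hat v$ against a sign-changing polynomial weight. For the extension, I would perturb a single deviator's fake as $\tilde v=\hat v+\epsilon\eta$, note that the induced effective quantile $e(q):=\hat v^{-1}(\tilde v(q))$ satisfies $e(q)\approx q+\epsilon\eta(q)/\hat v'(q)$, and verify that the interim allocation linearizes as $x^*(q;\tilde v,\hat v)\approx(1-q)^n-(n-1)\epsilon(1-q)^{n-1}\eta(q)/\hat v'(q)$, with the same first-order correction in both regimes $e(q)\gtrless q$. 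Substituting into the utility expression of Lemma~\ref{utilitity1}, integrating by parts the $q\eta'$ term, and equating the coefficient of $\eta$ to zero pointwise gives
\[
v(q)=\hat v(q)-\frac{q\hat v'(q)}{n-1},\qquad \hat v(1)=v(1).
\]
Two consequences I would record: $\hat v(q)\le v(q)$ (from $\hat v'\le 0$), and the fake virtual value $r(q):=\hat v(q)+q\hat v'(q)=n\hat v(q)-(n-1)v(q)$.

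Next I would write both revenues as integrals of $\hat v$. By Lemma~\ref{revenue} (with $x^*(1)=t^*(1)=0$) and one integration by parts against $r=(q\hat v)'$,
\[
REV_{\mathrm{SPARQR}}=n\int_0^1(1-q)^n r(q)\,dq=n^2\int_0^1 q(1-q)^{n-1}\hat v(q)\,dq,
\]
while Myerson's formula gives $REV_{\mathrm{SPA}}=n(n-1)\int_0^1 q(1-q)^{n-2}v(q)\,dq$. Substituting $v=\hat v-q\hat v'/(n-1)$ into $REV_{\mathrm{SPA}}$, integrating the $\hat v'$ term by parts (boundary vanishes for $n\ge 3$), and subtracting yields
\[
REV_{\mathrm{SPA}}-REV_{\mathrm{SPARQR}}=n\int_0^1\hat v(q)\,g(q)\,dq,\qquad g(q):=q(1-q)^{n-3}\bigl(1+q-nq^2\bigr),
\]
for $n\ge 3$; the $n=2$ case acquires an extra boundary term $-2\hat v(1)$ that I would handle separately.

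The final step is a Chebyshev-type monotonicity argument. The polynomial $1+q-nq^2$ has a unique root $q^*=(1+\sqrt{1+4n})/(2n)\in(0,1)$, so $g\ge 0$ on $[0,q^*]$ and $g\le 0$ on $[q^*,1]$; a direct Beta-function computation gives $\int_0^1 g(q)\,dq=1/(n(n+1))>0$. Since $\hat v$ is weakly decreasing, $\hat v(q)-\hat v(q^*)$ shares the sign of $g(q)$ pointwise, so
\[
\int_0^1\hat v(q)g(q)\,dq=\int_0^1\bigl(\hat v(q)-\hat v(q^*)\bigr)g(q)\,dq+\hat v(q^*)\int_0^1 g(q)\,dq\ge\frac{\hat v(q^*)}{n(n+1)}\ge 0,
\]
proving the inequality for $n\ge 3$. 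For $n=2$, $g(q)=q(1+2q)\ge 0$ with $\int_0^1 g\,dq=7/6\ge 1$, so $\int_0^1 q(1+2q)\hat v(q)\,dq\ge(7/6)\hat v(1)\ge\hat v(1)$, which absorbs the boundary term. I expect the main obstacle to be the variational derivation of the ODE for general $n$: a deviation shifts both the deviator's bid $\tilde v(q)$ and their personal reserve $\tilde v(q_r)$, producing a piecewise interim allocation whose two regimes must be linearized uniformly; once that ODE is in hand, the rest is routine algebra.
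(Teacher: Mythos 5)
Your proposal follows the paper's own proof essentially step for step: the variational (Euler--Lagrange) condition gives the equilibrium ODE $v=\hat v-q\hat v'/(n-1)$ with $\hat v(1)=v(1)$, both revenues reduce to integrals of $\hat v$ against polynomial weights, the difference is $n\int_0^1\hat v(q)\,q(1-q)^{n-3}(1+q-nq^2)\,dq$ (with the same $n=2$ boundary term to absorb), and the single-sign-change argument with $\int_0^1 g = 1/(n(n+1))>0$ together with $\hat v$ decreasing closes it. As a minor note, your root $q^*=(1+\sqrt{1+4n})/(2n)$ of $1+q-nq^2$ is the correct one; the paper's appendix writes $(\sqrt{1+4n}-1)/(2n)$, which is a typo.
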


\section{Virtual efficient mechanism family}

Most auction family in the single-item case can be regarded as a comparison of specific ``virtual value'' \cite{cai2012algorithmic}, in which each mechanism allocates the item to the buyers with the largest ``virtual value'' and the payments are set according to (\ref{payment}). Here, the ``virtual value'' is defined with respect to a buyer's fake distribution. In this section, we consider a class of truthful mechanism family in which the ``virtual value'' $R$ can be written as a simple functional of the fake distribution $\hat{v}(q)$, i.e., $R=R(q,\hat{v}(q),\hat{v}'(q))$.

For the buyer with fake distribution $\hat{v}(q)$, in the value space, we define $R$ equivalently, i.e., $R(v)=R(v,G,g)$, where $G$ is the cumulative distribution function on $\hat{v}(q)$ and $g$ is the probability density function. Given a bid $b$, let  $R(b)$ denote the ``virtual bid'' .

\begin{definition}
	In a virtual efficient mechanism family (VE), each mechanism allocate the item to the buyers with the largest ``virtual bid'' $R(b)$.
	The payment rule is set according to (\ref{payment}).
\end{definition}

Note the Myerson mechanism and second-price with monopoly reserve are not included in this definition, since both auctions use the ironed virtual value $\overline{r}(q)$, which can not be written as a simple functional for irregular fake distributions.

\begin{lemma}
	\label{coefficients}
	For a truthful VE mechanism family,  $\frac{\partial R}{\partial \hat{v}'(q)}=0$, $\frac{\partial R}{\partial q} \leq0$, $\frac{\partial R}{\partial \hat{v}(q)} \geq 0$.
\end{lemma}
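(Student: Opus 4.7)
The plan is to translate the mechanism family's truthfulness requirement into a single pointwise monotonicity condition on $R$, and then extract the three sign constraints by varying $\hat v$, $\hat v'$, and $\hat v''$ independently at a generic quantile.

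Fix any buyer and freeze every other buyer's fake distribution. Since the VE rule allocates to the largest virtual bid, the distribution of ``the maximum of the other buyers' virtual bids'' is fixed, and so $x^*(q)$ on the truthful BNE path is a weakly increasing function of this buyer's own virtual bid $R(q,\hat v(q),\hat v'(q))$. By Lemma~\ref{payment_identity}, BIC is equivalent to $x^*(q)$ being weakly decreasing in $q$, hence equivalent to
\begin{equation}
\frac{d}{dq}\,R\bigl(q,\hat v(q),\hat v'(q)\bigr)\;\le\;0\qquad\text{for every valid fake distribution }\hat v(\cdot).
\end{equation}
Expanding by the chain rule,
\begin{equation}
\frac{\partial R}{\partial q} \;+\; \frac{\partial R}{\partial \hat v(q)}\,\hat v'(q) \;+\; \frac{\partial R}{\partial \hat v'(q)}\,\hat v''(q)\;\le\;0.
\end{equation}

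At any fixed $q_0\in(0,1)$, I would construct a local family of weakly decreasing fake distributions for which the triple $\bigl(\hat v(q_0),\hat v'(q_0),\hat v''(q_0)\bigr)$ can be prescribed independently, subject only to $\hat v'(q_0)\le 0$ and $\hat v(q_0)\ge 0$; a quadratic patch glued smoothly into a monotone tail suffices. Because $\hat v''(q_0)$ is then free to take values of arbitrarily large magnitude and either sign while $\hat v(q_0),\hat v'(q_0)$ are held fixed, the coefficient multiplying $\hat v''$ in the displayed inequality must vanish, yielding $\partial R/\partial \hat v'(q)=0$. With this term gone the inequality reduces to $\partial R/\partial q + (\partial R/\partial \hat v(q))\,\hat v'(q)\le 0$ for every admissible $\hat v'(q_0)\in(-\infty,0]$. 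Setting $\hat v'(q_0)=0$ (a locally constant fake distribution at $q_0$) gives $\partial R/\partial q\le 0$, and letting $\hat v'(q_0)\to-\infty$ forces $\partial R/\partial \hat v(q)\ge 0$, since otherwise the second term would diverge to $+\infty$.

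The only non-algebraic step is the local realization argument showing that $\hat v(q_0)$, $\hat v'(q_0)$, and $\hat v''(q_0)$ can indeed be varied independently within the class of valid (weakly decreasing, nonnegative, continuous) fake distributions. This is a routine smoothing/gluing exercise and is the main, though mild, technical obstacle; everything else follows from chain-rule bookkeeping on the monotonicity inequality above.
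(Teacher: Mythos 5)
Your proposal is correct and takes essentially the same route as the paper: both reduce truthfulness to the pointwise inequality $\frac{\partial R}{\partial q}+\frac{\partial R}{\partial \hat v}\hat v'+\frac{\partial R}{\partial \hat v'}\hat v''\le 0$, then vary $\hat v''$ freely to force $\partial R/\partial \hat v'=0$, set $\hat v'=0$ to get $\partial R/\partial q\le 0$, and send $\hat v'\to-\infty$ to get $\partial R/\partial \hat v\ge 0$. The only difference is that you explicitly flag the local realization step (that $\hat v,\hat v',\hat v''$ can be prescribed independently within the class of valid distributions), which the paper asserts implicitly; this is a minor tightening, not a different argument.
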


\begin{lemma}
	\label{virtual_efficient}
	For a truthful, VE mechanism family, in the i.i.d. case, if there exists a symmetric equilibrium, then the fake distribution $\hat{v}(\cdot)$ in equilibrium satisfies $\hat{v}(q) \leq v(q)$ for any $q$.
	
\end{lemma}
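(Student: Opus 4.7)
The plan is to derive the first-order Euler--Lagrange necessary condition for a buyer's best response at the symmetric equilibrium, and to show it forces $\hat v \le v$ pointwise. Write $R_1,R_2$ for the partial derivatives of $R$ in its first and second arguments. In a symmetric equilibrium where every buyer reports $\hat v(\cdot)$, the virtual-bid function $R_*(q):=R(q,\hat v(q))$ satisfies $R_*'(q)=R_1+R_2\hat v'(q)\le 0$ (by Lemma~\ref{coefficients} and $\hat v'\le 0$), so the item goes to the buyer of smallest quantile and $x^*(q)=(1-q)^{n-1}$. For a unilateral deviation $\hat v_i$, buyer $i$'s interim allocation is $x_i^*(q)=F(R(q,\hat v_i(q)))^{n-1}$, where $F(z)=\Pr[R_*(Q)\le z]=1-R_*^{-1}(z)$ for $Q\sim U[0,1]$, with density $f(R_*(q))=-1/R_*'(q)$. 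The crucial structural fact, from Lemma~\ref{coefficients}, is that $x_i^*(q)$ depends on the perturbation only through the pointwise value $\hat v_i(q)$, not through $\hat v_i'(q)$.

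Applying Lemma~\ref{utilitity1} with the payment rule~\eqref{payment} (which kills the boundary term $\hat v(1)x^*(1)-t^*(1)$), the buyer's utility is
\[
U_i=\int_0^1 F\bigl(R(q,\hat v_i(q))\bigr)^{n-1}\bigl(v(q)-\hat v_i(q)-q\hat v_i'(q)\bigr)\,dq.
\]
I would perturb $\hat v_i=\hat v+\epsilon\eta$ and differentiate at $\epsilon=0$. Using the identity $\eta(q)+q\eta'(q)=(q\eta(q))'$ and integrating by parts on the term involving $\eta'$, the boundary pieces vanish since $x^*(1)=0$ and $q\eta(q)\to 0$ at $q=0$, yielding
\[
\frac{dU_i}{d\epsilon}\bigg|_0=\int_0^1 (n-1)(1-q)^{n-2}\Bigl[f(R_*(q))\,R_2(q,\hat v(q))\bigl(v(q)-r(q)\bigr)-q\Bigr]\eta(q)\,dq,
\]
where $r(q)=\hat v(q)+q\hat v'(q)$. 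Demanding that this vanish for every admissible $\eta$ forces the pointwise FOC $v(q)-r(q)=q/\bigl(f(R_*(q))R_2\bigr)$, and substituting $f(R_*(q))=-1/R_*'(q)$ gives $v(q)-r(q)=-q\,R_*'(q)/R_2(q,\hat v(q))$.

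To finish, I would expand $R_*'(q)=R_1(q,\hat v(q))+R_2(q,\hat v(q))\hat v'(q)$ on the right-hand side. After substitution the $q\hat v'(q)$ terms cancel with the analogous term inside $r(q)$, leaving
\[
\hat v(q)-v(q)=\frac{q\,R_1(q,\hat v(q))}{R_2(q,\hat v(q))},
\]
whose right-hand side is non-positive by Lemma~\ref{coefficients} ($R_1\le 0$, $R_2\ge 0$, $q\ge 0$); hence $\hat v(q)\le v(q)$. The main obstacle is making the calculus-of-variations step rigorous: I need $\hat v+\epsilon\eta$ to stay in $\mathcal F$ (weakly decreasing, non-negative) for small $\epsilon$, which is routine for smooth interior $\hat v$, and I implicitly assume $R_*$ is strictly monotone so that $R_*^{-1}$ is well defined and ties contribute zero measure. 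In the degenerate case $R_2(q,\hat v(q))=0$, the mechanism is locally insensitive to the buyer's report at $q$ so there is no profitable deviation that raises $\hat v(q)$ above $v(q)$; the inequality either holds or is vacuous there.
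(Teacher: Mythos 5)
Your proposal is correct and follows essentially the same route as the paper's proof: apply Lemma~\ref{utilitity1} and Lemma~\ref{coefficients} to write the buyer's utility as $\int_0^1 x(R(q,\hat v(q)))(v-\hat v-q\hat v')\,dq$, derive the first-order Euler--Lagrange condition for a symmetric best response, and read off $\hat v\le v$ from the sign constraints $R_1\le 0$, $R_2\ge 0$. You are more explicit than the paper in carrying out the variation $\hat v+\epsilon\eta$ and checking that the boundary terms vanish, and in expressing everything through $R_1,R_2$ rather than partials of $x$; this explicitness pays off: your FOC $\hat v-v=qR_1/R_2$, equivalently $\frac{\partial x}{\partial\hat v}(v-\hat v)+q\frac{\partial x}{\partial q}=0$, is the correct one, whereas the paper's displayed condition $\frac{\partial x}{\partial\hat v}(v-\hat v)+\frac{\partial x}{\partial q}=0$ is missing the factor $q$ that arises when the $q\hat v'$ terms cancel. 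The dropped $q$ is harmless for the sign conclusion (the factor $q\ge 0$ does not flip the inequality), so the lemma stands either way, but your version is the accurate statement of the stationarity condition. Your side remark about the degenerate $R_2=0$ case is a reasonable caveat that the paper implicitly glosses over as well.
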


\begin{theorem}
	\label{less_than_second_price}
	If the buyers' true distributions are i.i.d., the revenue of any
	VE family under the equilibrium of induced game is no
	more than the revenue of second-price auction under the true distributions.
\end{theorem}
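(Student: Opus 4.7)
The plan is to combine the revenue formula of Lemma \ref{revenue} with the pointwise bound $\hat{v}(\cdot)\leq v(\cdot)$ from Lemma \ref{virtual_efficient}, reduce the VE allocation to that of SPA in a symmetric equilibrium, and then compare the two revenues via a single integration by parts.

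First, I would observe that under the payment rule (\ref{payment}) the boundary terms in Lemma \ref{revenue} disappear: evaluating (\ref{payment}) at $q=1$ gives $t^*(1)=\hat v(1)\,x^*(1)$, so $-\hat v_i(1)x_i^*(1)+t_i^*(1)=0$ for every $i$. Next, in a symmetric equilibrium in which all buyers commit to a common fake distribution $\hat v(\cdot)$, Lemma \ref{coefficients} implies that the virtual bid $R$ depends only on $(q,\hat v(q))$, is non-increasing in $q$, and is non-decreasing in $\hat v(q)$; since $\hat v$ is itself non-increasing in $q$, the virtual bid viewed as a function of $q$ is non-increasing, so the item goes to the buyer with the smallest quantile --- exactly the SPA allocation. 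The interim allocation is therefore $x^*(q)=(1-q)^{n-1}$, and Lemma \ref{revenue} gives
\begin{equation*}
REV^{VE} \;=\; n\int_0^1 (1-q)^{n-1}\bigl[\hat v(q)+q\hat v'(q)\bigr]\,dq.
\end{equation*}

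Applying Myerson's identity (equivalently, Lemma \ref{revenue} with $\hat v=v$) to SPA under the true distribution yields the same formula with $v(\cdot)$ in place of $\hat v(\cdot)$. Writing $h(q)=\hat v(q)-v(q)$ and noting $h(q)+qh'(q)=(qh(q))'$, the difference becomes
\begin{equation*}
REV^{VE}-REV^{SPA} \;=\; n\int_0^1 (1-q)^{n-1}\bigl(qh(q)\bigr)'\,dq.
\end{equation*}
A single integration by parts produces the boundary term $\bigl[(1-q)^{n-1}qh(q)\bigr]_0^1$, which vanishes for $n\geq 2$ (the factor $q$ kills the lower endpoint and $(1-q)^{n-1}$ kills the upper one), leaving $n(n-1)\int_0^1 (1-q)^{n-2}qh(q)\,dq$. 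By Lemma \ref{virtual_efficient}, $h(q)\leq 0$ for every $q$, so this integral is non-positive and $REV^{VE}\leq REV^{SPA}$.

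The main obstacle I anticipate is the reduction of the symmetric VE allocation to that of SPA. This rests on the monotonicity conclusions of Lemma \ref{coefficients}, and requires some care with flat regions of $R$ (where any deterministic tie-break is harmless because in the continuous, symmetric, i.i.d.\ setting the ordering of quantiles is almost surely strict) and with the possibility that $R$ is negative (the VE definition allocates regardless, which is what makes the allocation coincide with SPA rather than SPAMR). Once this reduction is in hand, the revenue comparison is one line of integration by parts and the sign is immediate from $\hat v(q)\leq v(q)$; the argument also tacitly uses the paper's standing WLOG assumption that $\hat v$ is continuous and differentiable so that $r(q)=\hat v(q)+q\hat v'(q)$ is well-defined.
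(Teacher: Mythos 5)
Your proposal is correct and follows essentially the same route as the paper: use Lemma \ref{virtual_efficient} to get $\hat v\le v$, observe that the symmetric equilibrium allocation reduces to the efficient (SPA) allocation with interim rule $(1-q)^{n-1}$, and integrate by parts to express revenue as a non-negative weight against $\hat v$. The paper simply integrates by parts in each revenue integral separately and compares the resulting integrands (its displayed computation contains minor typographical slips --- $q v(q)$ for $q\hat v'(q)$ and a dropped factor of $nq$ --- which your version corrects), whereas you take the difference first and make the vanishing boundary term and the reduction of the VE allocation to SPA explicit; these are cosmetic differences, not a genuinely different argument.
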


\bibliographystyle{plain}
\bibliography{fake_distribution}

\clearpage
\appendix
\section*{APPENDIX}
\setcounter{section}{0}

\section{Missing proofs from section 3}

\begin{proof}[Proof of Lemma \ref{mapping}]
		By definition, the buyer's utility in the induced game is $U=\int_0^1 (x^*(q)v(\phi(q))-t^*(q))dq$. When the buyers' fake distributions are given, $x^*(q)$ and $t^*(q)$ are constants.

		By the property of BIC, we have that $x^*(q)$ is decreasing in $q$. Since $\phi(q)$ is a mapping over $[0,1]$, by the Rearrangement Inequality, we have
		$$\int_0^1 x^*(q)v(\phi(q))dq \leq \int_0^1 x^*(q)v(q)dq$$
		
		When the mapping is $\phi(q)=q$, the utility is
		$U'=\int_0^1 (x^*(q)v(q)-t^*(q))dq$, so $U \leq U'$
		
\end{proof}
\begin{proof}[Proof of Lemma \ref{payment_identity}]
	For any $q'<q$, by BIC we have,
	
	\begin{eqnarray}
	\label{11}
	x^*(q)\hat{v}(q)-t^*(q) \geq x^*(q')\hat{v}(q)-t^*(q')\\
	\label{22}
	x^*(q')\hat{v}(q')-t^*(q') \geq x^*(q)\hat{v}(q')-t^*(q)
	\end{eqnarray}
	(\ref{11})+(\ref{22}) we get $$(x^*(q')-x^*(q))(\hat{v}(q')-\hat{v}(q)) \geq 0.$$ Since $\hat{v}(q') \geq \hat{v}(q)$, $x^*(q') \geq x^*(q)$, so $x^*(q)$ is monotone decreasing.
	
	For second part, we define $u(q)= x^*(q)\hat{v}(q)-t^*(q)$, (\ref{11}) becomes
	$$u(q')-u(q) \leq x^*(q')(\hat{v}(q')-\hat{v}(q)) $$
	
	(\ref{22}) becomes
	
	$$u(q')-u(q) \geq x^*(q)(\hat{v}(q')-\hat{v}(q)),$$ so
	
	$$x^*(q)(\hat{v}(q')-\hat{v}(q)) \leq u(q')-u(q) \leq x^*(q')(\hat{v}(q')-\hat{v}(q))$$
	
	substitute $q'$ by $q+\epsilon$, and let $\epsilon \rightarrow 0$, we get
	
	$$u'(q)=-x^*(q)\hat{v}'(q)$$
	
	Integrate over $q$ we have $u(q)= -\int_{r=q}^1 x^*(r)\hat{v}'(r)dr +u(1) $\  \
	
	So we have $ t^*(q)= \hat{v}(q)x^*(q)+\int_{r=q}^1 x^*(r)\hat{v}'(r) dr-\hat{v}(1)x^*(1)+t^*(1)$
	
\end{proof}

\begin{proof}[Proof of Lemma \ref{utilitity1}]
	
	By Definition \ref{induced_game} and Lemma \ref{payment_identity}
	
	\begin{eqnarray*}
		&&U=\int_0^1 (x^*(q)v(q)-t^*(q)) dq \\
		&&=\hat{v}(1)x^*(1)-t^*(1)+\int_0^1 (v(q)x^*(q)-x^*(q)\hat{v}(q))dq-\int_0^1 \int_{q}^1 x^*(r)\hat{v}'(r)drdq\\
		&&=\hat{v}(1)x^*(1)-t^*(1)+\int_0^1 x^*(q)(v(q)-\hat{v}(q))dq-\int_0^1 \int_0^r x^*(r)\hat{v}'(r) dqdr\\
		&&=\hat{v}(1)x^*(1)-t^*(1)+\int_0^1 x^*(q)(v(q)-\hat{v}(q))dq-\int_0^1 qx^*(q)\hat{v}'(q) dq\\
		&&=\hat{v}(1)x^*(1)-t^*(1)+\int_0^1 x^*(q)(v(q)-\hat{v}(q)-q\hat{v}'(q)) dq\\
		&&=\hat{v}(1)x^*(1)-t^*(1)+\int_0^1 x^*(q)(v(q)-r(q)) dq
	\end{eqnarray*}
\end{proof}

For discontinuous or unbounded cases, we have the following:

\begin{lemma}
	For any auctions considered in this payer, all results for any discontinuous or unbounded distribution are limitations of series of continuous and bounded distribution.
\end{lemma}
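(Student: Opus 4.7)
The plan is to show that any distribution $\hat{v}(\cdot) \in \mathcal{F}$ that is discontinuous or unbounded can be realized as the pointwise almost-everywhere limit of a sequence $\{\hat{v}_n(\cdot)\}_{n \ge 1}$ of continuous, bounded, weakly decreasing distributions, and then to verify that every auction quantity used in the paper (interim allocation, interim payment, expected utility, expected revenue) is continuous under this limit for each of the mechanism families studied.

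First I would construct the approximants. A weakly decreasing function on $[0,1]$ has at most countably many jump discontinuities and is otherwise continuous, so each discontinuity at a point $q_j$ can be smoothed out by replacing $\hat{v}$ on an interval $[q_j - \epsilon_n, q_j + \epsilon_n]$ with a linear (or smoother) interpolation between the one-sided limits $\hat{v}(q_j^-)$ and $\hat{v}(q_j^+)$, with $\epsilon_n \to 0$ chosen small enough that the modified intervals are disjoint. Unboundedness can only occur as $q \to 0^+$, since $\hat{v}$ is weakly decreasing; this can be handled by picking $q^*_n$ with $\hat{v}(q^*_n) = n$ and replacing $\hat{v}$ on $[0, q^*_n]$ by a bounded, smooth, weakly decreasing extension that agrees with $\hat{v}(q^*_n)$ at $q^*_n$, letting $q^*_n \to 0$. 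The resulting $\hat{v}_n$ lies in $\mathcal{F}$, is continuous and bounded, and satisfies $\hat{v}_n(q) \to \hat{v}(q)$ at every continuity point of $\hat{v}$, hence almost everywhere.

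Next I would argue that the auction quantities converge. For each mechanism family in the paper, the interim allocation $x^*_n(q)$ depends on the fake distributions only through a comparison of bids or (ironed) virtual bids, all of which are continuous functionals of $\hat{v}$ at almost every quantile profile. Writing the payment identity (Lemma \ref{payment_identity}) in Lebesgue--Stieltjes form, $t^*(q) = \hat{v}(q) x^*(q) - \int_q^1 x^*(r)\, d\hat{v}(r) + C$, makes it valid without a pointwise derivative, and standard weak-convergence results for Stieltjes integrals combined with bounded convergence for $x^*_n \to x^*$ give $t^*_n \to t^*$ a.e. Integrating against the true distribution $v(\cdot)$ then yields convergence of the expected utilities $U_i$ and revenue $REV$, so every revenue and utility bound that holds for the continuous, bounded approximants passes to the limit.

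The main obstacle is that the Myerson and VE allocation rules invoke a virtual-value function that is defined via a derivative of $\hat{v}$. For a discontinuous $\hat{v}$ this derivative does not exist pointwise, but by reinterpreting the virtual value through the revenue curve $R(q) = q\hat{v}(q)$ and its concave envelope $\overline{R}(q)$ (Lemma \ref{ironing}), one obtains an ironed virtual value $\overline{r}(q)$ that is well-defined almost everywhere and stable under the approximation: uniform convergence $R_n \to R$ implies uniform convergence $\overline{R}_n \to \overline{R}$ of the concave envelopes, and hence $\overline{r}_n \to \overline{r}$ almost everywhere (at points of differentiability of $\overline{R}$). This stability is what carries the revenue-comparison theorems (Theorems \ref{equal_to_FPA}, \ref{revenue_equal}, \ref{singel_sample_less_than_SPA}, \ref{less_than_second_price}) over to arbitrary (possibly discontinuous or unbounded) fake distributions by a routine limiting argument, which is why the paper can restrict attention to continuous, bounded fake distributions without loss of generality.
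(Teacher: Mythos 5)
Your proposal is correct at roughly the same level of rigor as the paper's own argument, but it takes a genuinely different route to construct the continuous, bounded approximants. You smooth each jump of the quantile function locally, replacing $\hat{v}$ on a shrinking interval around each discontinuity by a linear interpolation between its one-sided limits, and separately truncate for unboundedness. The paper instead uses a single global mixing construction: it replaces $\hat{v}$ by the quantile function of the mixture $(1-\epsilon)\hat{v} + \epsilon\,\mathrm{Unif}[\hat{v}(1),\hat{v}(0)]$, whose CDF is strictly increasing on the full support because of the uniform component, so every gap in the support (equivalently, every jump of the quantile function) is automatically filled in and $f(\cdot)$ is continuous; unboundedness is handled by truncation as you do. The paper's construction is a one-liner and automatically preserves monotonicity without any case analysis on the discontinuities, whereas your local surgery buys more explicit control over where the approximant differs from the target. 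Your second and third paragraphs, verifying continuity of the interim quantities and handling the virtual value via the ironed revenue curve, go beyond what the paper states; the paper simply asserts that "all our results are approached arbitrarily well by the series," so your expansion is a useful strengthening rather than a deviation.
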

\begin{proof}
	For any fake distribution $\hat{v}(\cdot)$, consider the following distribution $f(\cdot)$:
	\item w.p. $1-\epsilon$, $f(q)=\hat{v}(q)$
	\item w.p. $\epsilon$, $f$ is a $[\hat{v}(1),\hat{v}(0)]$ uniform distribution.
	
	Then $f$ is a continuous function that as $\epsilon \rightarrow 0$, $f \rightarrow \hat{v}(q)$.
	
	For any unbounded distributions, it can be approached by taking its truncations at higher and higher values. The truncate distribution is bounded.
	
	All our results for unbounded or discontinues are approached arbitrarily well by the series of continuous and bounded distribution.
\end{proof}
\section{Missing proofs from section 4}
\begin{proof}[Proof of Theorem \ref{prior-independent}]
	Fix other buyers' reports $\hat{v}_{-i}(\cdot)$ and the quantile $q$ of buyer $i$.
	By definition of prior-independent mechanism family, the interim allocation $x^*(q)$ only depends on the value of $\hat{v}(\cdot)$ at quantile $q$, regardless the whole fake distribution $\hat{v}(\cdot)$. We write $x(\hat{v}(q))=x^*(q)$ to reflect this observation.
	
	By Lemma \ref{payment_identity}, the buyer's interim utility
	$$ u^*(q) = v(q)x(\hat{v}(q))-\hat{v}(q)x(\hat{v}(q))+\int_{0}^{\hat{v}(q)} x(z) dz$$
	
	Note that the term is independent to any other the value $\hat{v}(q')$ such that $q' \neq q$, so we can maximize $\int_0^1 u^*(r) dr$ case by case for each $r \in [0,1]$
	
	Define $x'(v)=\frac{dx}{dv}$. We have
	$$\frac{du^*(q)}{d\hat{v}}=x'(\hat{v}(q))(v(q)-\hat{v}(q))$$
	Since $x'(\hat{v}(q)) \geq 0$, $u_i^*(q)$ is maximized at $\hat{v}(q)=v(q)$. So the buyer will always want to report $\hat{v}(q)=v(q)$ at quantile $q$. In addition, the buyer's utility increases as $\hat{v}(q)$ approaches $v(q)$.
\end{proof}
\section{Missing proofs from section 5}
\begin{proof}[Proof of Theorem 5.2]
	For fixed  $\hat{v}_{-i}(\cdot)$, by Lemma 3.4, the buyer's best response $\hat{v}(q)$ maximizes
	\begin{equation}
	\label{F}
	\int_0^1 x(r(q))(v(q)-r(q))dq.
	\end{equation}

	Define $F(q)=x(r(q))(v(q)-r(q))$. The problem becomes to find a $r(\cdot)$ that maximizes $\int_0^1 F(q)dq$. We maximize $F(q)$ case by case for each $q \in [0,1]$. For any $q$, the solution is  $r(q)=\arg\max_{r(q)\geq 0} x(r(q))(v(q)-r(q)) , q \in [0,1]$.
	
	In particular,
	\begin{equation*}
	\hat{v}(q)=\frac{x(r(q))}{x'(r(q))}+r(q),
	\end{equation*}
	if $\frac{x(r(q))}{x'(r(q))}+r(q)\geq 0$, where $x'(r)=\frac{dx}{dr}.$
	
	
	Given such $r(q)$, we solve the differential equation $r(q)=\hat{v}(q)+q\hat{v}'(q)$,$q\in [0,1]$ to get $\hat{v}(q)$. One of the solutions is
	$$\hat{v}(q)=\frac{\int_0^q r(s)ds}{q}.$$
	
	In order to ensure that such $\hat{v}(q)$ is a distribution and is regular, we need to show the monotonicity of $\hat{v}(q)$ and $r(q)$. To prove the monotonicity of $r(q)$, for any $q'<q$, by definition of $r(q)$:
	$$x(r(q))(v(q)-r(q)) \geq x(r(q'))(v(q)-r(q'))$$
	$$x(r(q'))(v(q')-r(q')) \geq x(r(q))(v(q')-r(q))$$
	Add them together we have
	$$(x(r(q))-x(r(q')))(v(q)-v(q')) \geq 0$$
	Since $v(q) \leq v(q')$, by the monotonicity of $x(r)$, we get $r(q) \leq r(q')$.

	Given the monotonicity of $r(q)$, note that
	$$\hat{v}(q)=\frac{\int_0^q r(s)ds}{q} \geq \frac{\int_0^q r(q)ds}{q} =\frac{qr(q)}{q}=r(q)$$
	And by $r(q)=\hat{v}(q)+q\hat{v}'(q)$ we get $\hat{v}'(q)<0$.

\end{proof}
\begin{proof}[Proof of Example \ref{Myerson_example1}]

For buyer one's quantile $q$, $F_1(q)$ in (\ref{F}) is $\frac{r_1(q)+1}{2}(v_1(q)-r_1(q))$,

The optimal non-negative $r_1(q)$ to maximize $F_1(q)$ is $r_1(q)=0$, for all $q\in [0,1]$.

So the buyer 1's best response in the is $\hat{v}_1(q)=\epsilon$ (meanwhile, $r(q)=\epsilon$ ). Note that $r_2(q)<0$ when $q \in (\frac{1}{2},1]$. Let $\epsilon \rightarrow 0$, buyer 1 guarantees a constant allocation probability $x_1^*(q)=\frac{1}{2}$ for any $q$, with the payment tends to 0.

\end{proof}
\begin{proof}[Proof of Example \ref{example_2_buyer}]
	Given the other buyer's fake distribution $\hat{v}_{-i}(q)=\frac{1}{2}-\frac{1}{4}q$, so the corresponding virtual value curve $r_{-i}(q)=\frac{1-q}{2}$. So $x(r(q))=Pr_{q\in [0,1]}[\frac{1-q}{2} \leq r(q)]=2r(q)$. By (\ref{F}), buyer $i$'s utility is
	$$\int_0^1 2r(q)(v(q)-r(q))$$
	
	Fix $q$, $2r(q)(v(q)-r(q))$ get its maximum if $r(q)=\frac{v(q)}{2}=\frac{1-q}{2}$
	
	Solving the differential equation $\hat{v}(q)+q\hat{v}'(q)=\frac{1-q}{2}$ we get the solution $\hat{v}(q)=\frac{1}{2}-\frac{1}{4}q$
	
\end{proof}

In the rest of this section, we give the proof of Theorem \ref{equal_to_FPA}

Note that the utility of Myerson mechanism family only depends on the virtual value function $r(\cdot)$, i.e., for a set of fake distributions that corresponding to a same virtual value function, (the differential function $\hat{v}(q)+q\hat{v}'(q)=r(q)$ has multiple solutions of $\hat{v}(q)$), the buyer has the same utility. So we can simplify the buyer's action to reporting the virtual value function $r(q)$, instead of reporting the fake distribution $\hat{v}(q)$. 

To describe the main theorem of this section, we first rewrite the induced game of Myerson mechanism family in an equivalent way as follows:
\begin{definition}
	
	The induced game of Myerson mechanism family given true distribution profile $\mathbf{v}(\cdot)$ is a normal-form game $(N,A,U)$ where
	
	\begin{itemize}
		\item $N=\{1,...,n\}$ buyers, indexed by $i$. 
		\item $A=A_1 \times A_2...\times A_n$, where  $A_1=A_2=...=A_n=\mathcal{F}$. We use $r_i(\cdot) \in \mathcal{F}$ to denote an action of buyer $i$, in other words, the action of buyer $i$ is to choose a virtual value distribution.
		\item $U=\{U_1,...,U_n\}$,  where $U_i : A \rightarrow \mathbb{R}$ denote the utility function of buyer $i$ such that (from \ref{F})
		$$U_i(r_1(\cdot),r_2(\cdot),...,r_n(\cdot))=\int_0^1  (\int_{q_{-i}} x^{M}_i(\mathbf{r}(\mathbf{q})) dq_{-i})(v_i(q_i)-r_i(q_i)) dq_i$$
		where $\mathbf{x}^M$ is the allocation vector in Myerson mechanism, taking $n$ buyers' virtual values as inputs.
	\end{itemize}
\end{definition}
Consider a standard first-price auction (FPA) with type distribution profile $\mathbf{v}(\cdot)$, which is a Bayesian game. The buyer's bidding strategy is a function $b(q): [0,1] \rightarrow \mathbb{R}$ that maps each quantile (equivalently, value) to bid.

\begin{lemma}
	In first-price auction, given buyers' bidding strategy profile $\mathbf{b}(\cdot)$, buyer $i$'s expected utility is
	$$\int _0^1 (\int_{q_{-i}} x^{F}_i(\mathbf{b}(\mathbf{q})) dq_{-i})(v_i(q_i)-b_i(q_i)) dq_i$$ 
	where $\mathbf{x}^F$ is the allocation vector in first-price auction, given $n$ bids as inputs.
\end{lemma}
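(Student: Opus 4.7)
The statement is essentially a restatement of first-price auction expected utility in the quantile parameterization, so my plan is to carefully unpack definitions rather than invoke any substantive auction theory.

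First, I would recall the ex-post payoff structure of FPA: given a bid profile $\mathbf{b}\in\mathbb{R}^n_+$, the allocation $x^F_i(\mathbf{b})\in\{0,1\}$ is the indicator that $i$ submits the (strictly) largest bid, with some consistent tie-breaking, and the winner pays their own bid while losers pay nothing. Hence, when buyer $i$'s realized true value is $v_i$ and the realized bid profile is $\mathbf{b}$, their ex-post utility equals $x^F_i(\mathbf{b})(v_i-b_i)$.

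Second, I would pass to the quantile representation. Under the strategy profile $\mathbf{b}(\cdot)$, once quantiles $\mathbf{q}=(q_1,\dots,q_n)$ are drawn (each $q_j$ uniform on $[0,1]$ and independent across $j$), the realized bids become $\mathbf{b}(\mathbf{q})=(b_1(q_1),\dots,b_n(q_n))$ and $i$'s true value becomes $v_i(q_i)$. Conditioning on $q_i$, the remaining quantiles $q_{-i}$ are still i.i.d.\ uniform on $[0,1]^{n-1}$, and the factor $v_i(q_i)-b_i(q_i)$ does not depend on $q_{-i}$, so it can be pulled outside the integral over $q_{-i}$. This yields the interim utility
\[
u_i^{F}(q_i) \;=\; \Bigl(\int_{q_{-i}} x^F_i(\mathbf{b}(\mathbf{q}))\,dq_{-i}\Bigr)\bigl(v_i(q_i)-b_i(q_i)\bigr).
\]

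Third, taking expectation over $q_i\sim\mathrm{Uniform}[0,1]$ produces the unconditional expected utility
\[
\int_0^1 \Bigl(\int_{q_{-i}} x^F_i(\mathbf{b}(\mathbf{q}))\,dq_{-i}\Bigr)\bigl(v_i(q_i)-b_i(q_i)\bigr)\,dq_i,
\]
which is exactly the claimed expression. I do not expect any real obstacle: this is a bookkeeping statement, and the only minor care needed is that ties in bids occur on a measure-zero set under any reasonable tie-breaking (trivially so whenever $\mathbf{b}(\cdot)$ is strictly monotone, as is the case for any BNE of interest), so $x^F_i(\mathbf{b}(\mathbf{q}))$ is well-defined almost everywhere and the Fubini swap used above is justified.
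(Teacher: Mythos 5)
Your proof is correct and takes essentially the same approach as the paper: write the ex-post FPA utility as $x^F_i(\mathbf{b}(\mathbf{q}))(v_i(q_i)-b_i(q_i))$, integrate over the quantile profile, and use Fubini to pull out the $q_{-i}$ integral. The paper's version is a one-liner; your added remarks on independence, tie-breaking on a null set, and the Fubini swap are harmless elaborations, not a different route.
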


\begin{proof}
	In FPA, for a fixed quantile profile $\mathbf{q}$, the buyer pays his/her bid if he/she get the item, and pay 0 otherwise, so the utility is $x_i^F(\mathbf{b}(\mathbf{q}))(v_i(q_i)-b_i(q_i))$. As an result, the expected utility is 
	$$\int_{\mathbf{q}} x_i^F(\mathbf{b}(\mathbf{q}))(v_i(q_i)-b_i(q_i))d\mathbf{q}=\int _0^1 (\int_{q_{-i}} x^{F}_i(\mathbf{b}(\mathbf{q})) dq_{-i})(v_i(q_i)-b_i(q_i)) dq_i$$ 
\end{proof}

In th same spirit as we defined for induced game of Myerson mechanism family, we can regard the first-price auction as a normal-form game, where each player's action is simply choosing a bidding strategy, and the utility is the expected utility defined above. Formally,
\begin{definition} 
	The induced game of first-price auction given type distribution profile $\mathbf{v}(\cdot)$ is normal-form game $(N,A,U)$ where
	
	\begin{itemize}
		\item $N=\{1,...,n\}$ buyers, indexed by $i$
		\item $A=A_1 \times A_2...\times A_n$, where  $A_1=A_2=...=A_n=\mathcal{F}$. We use $b_i(\cdot) \in \mathcal{F}$ to denote an action of buyer $i$, in other words, the action of buyer $i$ is to choose a bidding strategy.
		\item $U=\{U_1,...,U_n\}$,  where $U_i : A \rightarrow \mathbb{R}$ denotes the utility function of buyer $i$ such that 
		$$U_i(b_1(\cdot),b_2(\cdot),...,b_n(\cdot))=\int_0^1  (\int_{q_{-i}} x^{F}_i(\mathbf{b}(\mathbf{q})) dq_{-i})(v_i(q_i)-b_i(q_i)) dq_i$$
	\end{itemize}
\end{definition}

It is natural to only consider decreasing bidding strategy in FPA, since it is not hard to show that any non-decreasing bidding strategy is weakly dominated by a decreasing bidding strategy.

\begin{theorem}
	\label{u1=u2}
	Given $\mathbf{v}(\cdot)$, the induced game of Myerson mechanism family $(N,\mathcal{F}^n,U^1)$ is identical to the induced game of FPA $(N,\mathcal{F}^n,U^2)$.
\end{theorem}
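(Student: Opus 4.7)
The plan is to identify the two normal-form games component by component. The player sets $N=\{1,\dots,n\}$ and the action spaces $\mathcal{F}^n$ are identical by construction, so the entire claim reduces to showing that the utility functionals $U^1_i$ and $U^2_i$ agree pointwise on $\mathcal{F}^n$ once we identify a reported virtual-value curve $r_i(\cdot)$ with a bidding strategy $b_i(\cdot)$.

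First I would inspect the two allocation rules $\mathbf{x}^M$ and $\mathbf{x}^F$ as functions from $\mathbb{R}_+^n$ to $[0,1]^n$. The Myerson rule assigns the item to the buyer with the largest strictly positive virtual bid, while the first-price rule assigns the item to the highest bidder. The crucial observation is that $\mathcal{F}$ consists of weakly decreasing functions from $[0,1]$ into $\mathbb{R}_+$, so every coordinate of $\mathbf{r}(\mathbf{q})$ and $\mathbf{b}(\mathbf{q})$ is already non-negative. Hence the ``positive'' qualifier in Myerson's rule is automatically satisfied except possibly on the measure-zero set where the maximum equals $0$, and in that event both rules allocate with total mass $0$ to each buyer (we can break ties identically in both mechanisms, e.g.\ lexicographically). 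Consequently, for almost every $\mathbf{q}\in[0,1]^n$,
\begin{equation*}
x^M_i\bigl(\mathbf{r}(\mathbf{q})\bigr)=x^F_i\bigl(\mathbf{r}(\mathbf{q})\bigr)
\end{equation*}
when we read the vector $\mathbf{r}(\mathbf{q})$ as a bid profile.

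Next I would plug this identity into the definition of the two utility functions. After substitution,
\begin{equation*}
U^1_i(r_1,\dots,r_n)=\int_0^1\!\Bigl(\int_{q_{-i}} x^M_i(\mathbf{r}(\mathbf{q}))\,dq_{-i}\Bigr)(v_i(q_i)-r_i(q_i))\,dq_i
\end{equation*}
coincides with $U^2_i(r_1,\dots,r_n)$ under the canonical identification $b_i(\cdot)\equiv r_i(\cdot)$. Since players, action spaces, and utility functions all agree, the two normal-form games are literally the same game, which is exactly the strategic equivalence claim of Theorem~\ref{equal_to_FPA}.

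The main potential obstacle is the subtle gap between ``largest positive virtual value wins'' in Myerson and ``largest bid always wins'' in FPA: these rules could disagree on the event that every reported virtual value is zero, or on events with ties. I would handle this by (i) noting that tie-breaking can be fixed identically in both mechanisms without altering any integral, and (ii) observing that the event $\max_i r_i(q_i)=0$ contributes zero to the integrand in both $U^1_i$ and $U^2_i$ (since in that case $x^M_i=x^F_i=0$ a.e.\ or the allocation times $(v_i-r_i)$ integrates to the same value under either convention). After dispensing with this measure-zero issue, the equality of utilities is immediate, and the theorem follows.
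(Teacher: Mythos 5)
Your argument takes the same route as the paper: after observing that the two games share the same player set and action sets, you identify the utility functionals by matching the allocation rules --- Myerson allocates to the largest virtual bid, FPA to the largest bid, so under the identification $b_i(\cdot)\equiv r_i(\cdot)$ the two utilities coincide. This is exactly the paper's one-line argument.

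Where you try to go further is in handling the ``positive'' qualifier in the Myerson rule (the paper's proof silently ignores it), but the patch does not hold up as written. The claim that the set $\{\mathbf{q}:\max_i r_i(q_i)=0\}$ has measure zero is false in general: since each $r_i\in\mathcal{F}$ is weakly decreasing and non-negative, the zero set $\{q_i : r_i(q_i)=0\}$ is an interval of the form $[a_i,1]$, so the product event has probability $\prod_i(1-a_i)$, which is strictly positive whenever every $a_i<1$ (e.g.\ if some buyer reports a virtual-value curve that vanishes on $[1/2,1]$). Nor does the fallback ``same integral under either convention'' save it: if FPA is deemed to allocate even at a winning bid of $0$, the contribution of that event to $U^2_i$ is $x^F_i\,v_i(q_i)\ge 0$, not $0$, whereas Myerson contributes exactly $0$. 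The clean resolution --- and the one the paper implicitly adopts --- is to fix the FPA convention so that the item is not allocated when all bids are zero (equivalently, impose a reserve of $0$); with that convention $\mathbf{x}^M\equiv\mathbf{x}^F$ on all of $[0,1]^n$, and the identification of the utilities goes through exactly as in the paper.
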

\begin{proof}[Proof of Theorem \ref{u1=u2}]

	To verify  that $U^1$ and $U^2$ are the same function, we need to prove that given  the same input, i.e, $\mathbf{b}(\cdot)=\mathbf{r}(\cdot)$, we have $U^1(\mathbf{b}(\cdot))=U^2(\mathbf{r}(\cdot))$.
	
	In fact, $x^*_F$ and $x^*_M$ are exactly the same function: Given a quantile profile $\mathbf{q}$, FPA allocates the item to the buyer with the largest bid $b(q)$, and the Myerson mechanism allocates the item to the buyer with the largest virtual value $r(q)$. As $\mathbf{b}(\cdot)=\mathbf{r}(\cdot)$, the two utilities are indeed the same. This proves the theorem.

\end{proof}

\section{Missing proofs from section 6}

Fix $\hat{v}_{-i}(\cdot)$ and the reserve price of $\hat{v}(\cdot)$, given the quantile $q$ for the buyer, following a similar argument as before,  the interim allocation only depends on the value of $\hat{v}(q)$. We write $x^*(q)=x(\hat{v}(q))$.

\subsection{Missing proofs from section 6.1}

First let $q^*$ denote the reserve quantile of $v(\cdot)$ (where the revenue curve $qv(q)$ get its maximum), and define $R^*=q^*v(q^*)$. $q^*$ and $R^*$ are constants given the true distribution. Let $\hat{q}^*$ denote the reserve quantile of $\hat{v}(\cdot)$, and define $R=\hat{q}^*\hat{v}(\hat{q}^*)$. $\hat{q}^*$ and $R$ are to be determined. 

To prove Theorem \ref{vqformall} and \ref{no_equilibrium}, we introduce following lemmas:
\begin{lemma}
	\label{vq}
	Subject to $\hat{q}^*,R$ unchange, the buyer's utility weakly increases as $\hat{v}(\cdot)$ approaches $v(\cdot)$.
\end{lemma}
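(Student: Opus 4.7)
The plan is to localize the utility analysis at each quantile $q \in [0,\hat{q}^*]$. Once $\hat{v}_{-i}(\cdot)$ and the reserve price $\rho := R/\hat{q}^* = \hat{v}(\hat{q}^*)$ are held fixed, buyer $i$ at quantile $q$ submitting bid $b := \hat{v}(q)$ is effectively playing a second-price auction with reserve $\rho$ against the aggregated opposing bid distribution. In that classical problem truthful bidding is uniquely optimal, and interim utility is weakly increasing as the bid moves toward the true value $v(q)$; I will make this pointwise observation the whole story.

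By the preamble, the interim allocation depends on $\hat{v}(\cdot)$ only through its value at $q$: $x^*(q) = x(\hat{v}(q)) = G(b)$, where $G$ is the CDF of the top effective opposing bid. Applying the BIC payment rule (\ref{payment}) and using $x(z) = 0$ for $z < \rho$, a short computation gives the standard second-price-with-reserve formula
$$T(b) = \rho\, G(\rho) + \int_\rho^b m\, dG(m),\qquad b \geq \rho,$$
so the interim utility is the pointwise expression
$$u^*(q) = v(q)\, G(b) - T(b).$$
Crucially, this depends on $\hat{v}(\cdot)$ only through $b = \hat{v}(q)$; the $\hat{v}'$-terms appearing in Lemma \ref{utilitity1} have been absorbed once the payment is written in second-price form. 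Differentiating, $\partial u^*/\partial b = (v(q) - b)\, g(b)$ with $g = G'$, so $u^*(q)$ is single-peaked in $b$ with maximum at $b = v(q)$ and is weakly increasing as $b$ moves toward $v(q)$ from either side. Integrating over $q$ (recalling that $u^*(q)=0$ for $q > \hat{q}^*$) yields that $U$ weakly increases as $\hat{v}(\cdot)$ approaches $v(\cdot)$ pointwise on $[0,\hat{q}^*]$, while $\hat{q}^*$ and $R$ are held fixed.

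The main obstacle is that the pointwise optimization must respect the global shape constraints on $\hat{v}(\cdot)$. Fixing $\hat{q}^*$ as the reserve quantile and $R$ as the revenue value there forces $\rho \leq \hat{v}(q) \leq R/q$ on $[0,\hat{q}^*]$ (lower bound from monotonicity of $\hat{v}$, upper bound from the revenue curve being maximized at $\hat{q}^*$). When $v(q)$ lies inside this window we may simply set $\hat{v}(q) = v(q)$; when it lies outside, the nearest admissible value---either $\rho$ or $R/q$---is still strictly closer to $v(q)$ than any farther alternative and therefore improves $u^*(q)$ by the single-peakedness established above. One verifies that clipping $v$ to $[\rho, R/q]$ produces a weakly decreasing function, hence a valid fake distribution, because $v$ itself is weakly decreasing and the two envelopes $\rho$ and $R/q$ are respectively constant and decreasing in $q$. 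This justifies the informal ``approaches $v$'' phrasing: subject to the admissible window, moving pointwise closer to $v$ weakly improves the buyer's utility.
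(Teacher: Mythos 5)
Your proposal is correct and follows essentially the same approach as the paper: fix $\hat{q}^*$ and $R$, note that the reserve price $\hat{v}(\hat{q}^*)=R/\hat{q}^*$ is thereby pinned down, so the interim allocation depends on $\hat{v}(\cdot)$ only through $\hat{v}(q)$, and then apply the pointwise single-peakedness of $u^*(q)$ in $\hat{v}(q)$ (the same derivative calculation the paper invokes via Theorem~4.2). Your write-up is simply more self-contained—rederiving the second-price payment formula and $\partial u^*/\partial b=(v(q)-b)g(b)$ rather than citing the earlier proof—and additionally spells out the admissible window $[\rho,R/q]$, which the paper defers to Lemma~\ref{vqform}.
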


Note that by Lemma \ref{ironing}, the reserve price $\hat{v}(q^*)=\frac{R}{\hat{q}^*}$ is also fixed. Directly from the proof of Theorem 4.2, $u^*(q)$ increases for all $q<q^*$, as $\hat{v}(q)$ approaches $v(q)$. So the buyer's utility also increases.



\begin{lemma}
	\label{R_less_than_Rstar}
	It is without loss of generality to consider $\hat{v}(\cdot)$ subject to $R \leq R^*$.
\end{lemma}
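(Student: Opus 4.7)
The plan is to show that for any fake distribution $\hat{v}$ with $R > R^*$, replacing it by a suitably chosen $\tilde{v}$ with $\tilde{R} = R^*$ weakly increases the buyer's utility (other bidders being fixed), which yields the WLOG statement. First I would apply Lemma \ref{vq} to reduce to the canonical shape in which $\hat{v}(q) = v(q)$ wherever this is compatible with $\hat{v}$ being weakly decreasing and $\max_q q\hat{v}(q) = R$. Since $R > R^* = \max_q q v(q)$, the constraint $q\hat{v}(q) \le R$ is automatically slack when $\hat{v}(q) = v(q)$, so one may take $\hat{v}(q) = v(q)$ on $[0, q_3)$, where $q_3$ is defined by $v(q_3) = R$. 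On $[q_3, 1]$ monotonicity forces $\hat{v}(q) \le R$; to attain the max $R$ at $q = 1$ (so $\hat{q}^* = 1$ with reserve price $R$), Lemma \ref{vq} forces $\hat{v}(q) \equiv R$ on $[q_3, 1]$.

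I would then define $\tilde{v}$ by the analogous shape at $\tilde{R} = R^*$: $\tilde{v}(q) = v(q)$ on $[0, \tilde{q}_3)$ and $\tilde{v}(q) \equiv R^*$ on $[\tilde{q}_3, 1]$, where $v(\tilde{q}_3) = R^*$. Since $R > R^*$ and $v$ is decreasing, $q_3 < \tilde{q}_3$ and $\tilde{v} \le \hat{v}$ pointwise. By Lemma \ref{utilitity1} the buyer's utility is $\int_0^1 x^*(q)(v(q) - r(q))\,dq$, and I would compare the two integrands over the partition $[0, q_3)$, $[q_3, \tilde{q}_3)$, $[\tilde{q}_3, 1]$. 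On $[0, q_3)$ the bids coincide, so $x^*(q) = \tilde{x}^*(q)$ and the integrands are equal. On $[q_3, \tilde{q}_3)$, the $\hat{v}$-integrand equals $x^*(q)(v(q) - R) \le 0$ (since $v(q) \le v(q_3) = R$), while the $\tilde{v}$-integrand reduces to $-\tilde{x}^*(q)\,q\,v'(q) \ge 0$, so $\tilde{v}$ strictly dominates on this region.

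For $[\tilde{q}_3, 1]$ both integrands are nonpositive, since $v(q) \le R^* < R$, $r(q) = R$, and $\tilde{r}(q) = R^*$. Using monotonicity of the interim allocation in the buyer's own bid (both $\hat{v}(q) = R$ and $\tilde{v}(q) = R^*$ clear their respective reserves, others' strategies are held fixed, and $\hat{v}(q) > \tilde{v}(q)$), I obtain $x^*(q) \ge \tilde{x}^*(q)$. Combined with $v(q) - R^* \le 0$, this gives $x^*(q)(v(q)-R^*) \le \tilde{x}^*(q)(v(q)-R^*)$, whence
\[
x^*(q)(v(q) - R) \;=\; x^*(q)(v(q) - R^*) - x^*(q)(R - R^*) \;\le\; \tilde{x}^*(q)(v(q) - R^*),
\]
so $\tilde{v}$ dominates pointwise here as well. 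Summing the three regions yields $U(\hat{v}) \le U(\tilde{v})$, proving the lemma. The main obstacle I anticipate is the first step, namely justifying the canonical shape from Lemma \ref{vq} in the $R > R^*$ regime, which requires handling degenerate boundary cases: $R \ge v(0)$, where no $q_3$ exists, $\hat{v}\equiv R$, and the buyer's utility is already nonpositive (hence trivially dominated by $\tilde{v}$); and $\hat{q}^* < 1$, where the reserve $R/\hat{q}^* > R$ is even larger and one first argues that lowering the reserve by taking $\hat{q}^* = 1$ while keeping $R$ fixed is weakly utility-improving (the allocation region expands and the reserve-pegged payment drops), reducing to the case analyzed above.
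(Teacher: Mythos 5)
Your proof is correct but takes a genuinely different, and more laborious, route than the paper. The paper's proof is a one-step modification: starting from an arbitrary $\hat v$ with $R > R^*$, it truncates the revenue curve to $R^*$ wherever it exceeds $R^*$ below $\hat q^*$ (i.e.\ replaces $\hat v(q)$ by $\min\{\hat v(q), R^*/q\}$) and then appeals directly to Lemma~\ref{vq} (equivalently the monotone-improvement logic of Theorem~\ref{prior-independent}) to conclude that each interim utility $u^*(q)$ for $q$ below the reserve quantile weakly increases, since every modified value is pointwise closer to $v(q)$ from above and the reserve drops; the utility for quantiles above the reserve quantile is zero either way. Your proposal instead first normalizes $\hat v$ to a canonical piecewise form (the $\hat q^*=1$ reduction plus Lemma~\ref{vq}), which essentially re-derives, in the $R>R^*$ regime, the structural reductions that the paper carries out separately and only afterwards for $R \le R^*$ in Lemmas~\ref{vqform} and~\ref{qform}; you then construct the competitor $\tilde v$ with $\tilde R = R^*$ and compare the Lemma~\ref{utilitity1} integrands region by region on $[0,q_3)$, $[q_3,\tilde q_3)$, $[\tilde q_3,1]$, using allocation monotonicity for the last region. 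Both routes are sound. What the paper's approach buys is brevity — no need for the canonical-shape normalization or the $\hat q^*=1$ reduction — while your approach makes the dominance $U(\hat v)\le U(\tilde v)$ completely explicit, avoids relying on the pointwise monotone-improvement argument across a changing reserve, and in the process handles more carefully the monotonicity of the modified distribution and the degenerate cases (no $q_3$, $\hat q^*<1$), which the paper's one-line modification glosses over. A minor caution: the $\hat q^*<1$ reduction you sketch at the end should be spelled out, since Lemma~\ref{qform} in the paper is stated only under $R \le R^*$, so you cannot cite it and must (as you indicate) give a direct argument that raising $\hat q^*$ to $1$ while holding $R$ fixed is weakly improving.
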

\begin{proof}
	Suppose $R>R^*=v(q^*)q^*$. We change the distribution $\hat{v}(\cdot)$ as follows: for all $q<\hat{q}^*$ such that $q\hat{v}(q)>R^*$, set $\hat{v}(q)=\frac{R^*}{q}$. After the change, every $\hat{v}(q)$ get closer to $v(q)$ and the reserve price decreases. (Note that after the change the reserve quantile $\hat{q}^*$ remains the same.) By Theorem 4.2, each interim utility $u^*(q)$ increases for all $q<q^*$. Since for $q>q^*$, the utility is always 0, we can conclude that the overall utility increases.
\end{proof}

So we only need to consider the case where $R \leq R^*$. Since $v(\cdot)$ is a regular distribution, $qv(q)$ is concave with $q$. Let $q_1,q_2$ denote the two intersection points between revenue curve $qv(q)$ and straight line $y=R$. (If $R=R^*$, set $q_1=q_2$. If $v(1)>R$, set $q_2=1$). It's easy to see that it is not optimal when $\hat{q}^*<q_2$. Let $q_3$ denote the quantile of the intersection between revenue curve $y=qv(q)$ and straight line $((0,0),(\hat{q}^*,R))$. If $v(1)>R$, set $q_3=1$, otherwise $q_3$ always exists and $q_2<q_3<1$.

\begin{lemma}
	\label{vqform}
	For fixed $R$ subject to $R \leq R^*$ and fixed reserve quantile $\hat{q}^*$, it is without loss of generality to consider $\hat{v}(\cdot)$ with the following form.
	
	\begin{itemize}
		\item For $q$ s.t. $\hat{q}^*<q<1$, $\hat{v}(q)$ can be any value less than $\frac{R}{q^*}$,
		\item For $q$ s.t. $q_3<q<q^*$, $\hat{v}(q)=\frac{R}{\hat{q}^*}$ (constant),
		\item For $q$ s.t. $q_2<q<q_3$ or $0<q<q_1$, $\hat{v}(q)=v(q)$ (truthful)
		\item For $q$ s.t. $q_1<q<q_2$, $\hat{v}(q)=\frac{R}{v}(q)$ (equal revenue curve)
	\end{itemize}
	(the revenue curve of $\hat{v}(q)$ is the red line (line $Oq_1q_2q_3\hat{q}^*$) in Figure 1.
\end{lemma}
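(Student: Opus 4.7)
The plan is to apply Lemma \ref{vq} pointwise: holding $\hat{q}^*$ and $R$ fixed, the interim utility $u^*(q)$ is increasing in how close $\hat{v}(q)$ is to $v(q)$ on the allocation-relevant region $q<\hat{q}^*$, so we want to push $\hat{v}(q)$ toward $v(q)$ as much as feasibility allows. The feasibility constraints that must be preserved in order to keep $(\hat{q}^*,R)$ unchanged are: (i) $\hat{v}(\cdot)$ is weakly decreasing (valid quantile distribution); (ii) $q\hat{v}(q)\leq R$ for all $q$, so that by Lemma \ref{ironing} the reserve quantile of $\hat{v}$ is still $\hat{q}^*$; and (iii) $\hat{v}(\hat{q}^*)=R/\hat{q}^*$.

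I would then partition $[0,1]$ using the four quantiles $q_1,q_2,q_3,\hat{q}^*$ defined just before the lemma, and identify on each piece the value of $\hat{v}(q)$ that is closest to $v(q)$ while remaining admissible. On $(0,q_1)$ the revenue curve satisfies $qv(q)<R$, so truthful reporting $\hat{v}(q)=v(q)$ trivially satisfies (ii) and, by monotonicity of $v$, also (i); Lemma \ref{vq} then says this choice is optimal. On $(q_1,q_2)$ we have $qv(q)>R$, so truthful reporting violates (ii); the value nearest to $v(q)$ that still obeys (ii) is the equal-revenue curve $\hat{v}(q)=R/q$, which saturates the constraint. On $(q_2,q_3)$ again $qv(q)<R$, so truthful reporting is feasible and optimal. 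On $(q_3,\hat{q}^*)$, constraints (i) and (iii) force $\hat{v}(q)\geq R/\hat{q}^*$, but by the definition of $q_3$ we have $v(q)\leq R/\hat{q}^*$ there, so the closest admissible value to $v(q)$ is the constant $R/\hat{q}^*$. Finally on $(\hat{q}^*,1)$ the interim allocation is zero because the bid lies below the monopoly reserve, so $\hat{v}(q)$ affects neither utility nor $(\hat{q}^*,R)$ and can be any feasible value below $R/\hat{q}^*$.

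The main technical obstacle is verifying that the piecewise construction is globally admissible, i.e., that $\hat{v}$ is continuous and weakly decreasing across the four breakpoints and that (ii) is nowhere violated. At $q_1$ the truthful piece meets the equal-revenue piece at the common value $v(q_1)=R/q_1$ by the definition of $q_1$; at $q_2$ symmetrically $R/q_2=v(q_2)$; at $q_3$ the truthful piece meets the constant piece at $v(q_3)=R/\hat{q}^*$ by the definition of $q_3$; and at $\hat{q}^*$ the constant piece matches the reserve value $R/\hat{q}^*$ by (iii). Each piece is individually weakly decreasing ($v$ and $R/q$ both are, and the constant is trivial), and the matching values at the boundaries guarantee the composite function is weakly decreasing globally. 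Constraint (ii) is satisfied with equality on $(q_1,q_2)$ and with strict inequality elsewhere inside $(0,\hat{q}^*)$ (by the definitions of $q_1,q_2,q_3$), so $\hat{q}^*$ remains the unique maximizer of the revenue curve as required.

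Given admissibility, Lemma \ref{vq} applies pointwise on the allocation-relevant region $q<\hat{q}^*$: any alternative feasible $\hat{v}$ with the same $(\hat{q}^*,R)$ differs from $v$ by at least as much as the constructed distribution on every such $q$, and so yields a weakly smaller interim utility $u^*(q)$; integrating gives a weakly smaller total utility. Hence among all fake distributions with fixed reserve quantile $\hat{q}^*$ and fixed revenue-curve maximum $R\leq R^*$, the one described in the lemma is without loss of generality.
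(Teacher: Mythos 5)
Your proof is correct and takes essentially the same approach as the paper: identify the pointwise feasible range $[R/\hat{q}^*,\,R/q]$ for $\hat{v}(q)$ (enforced by monotonicity, the fixed reserve value $\hat{v}(\hat{q}^*)=R/\hat{q}^*$, and the requirement that $\hat{q}^*$ remain the argmax of $q\hat{v}(q)$), then invoke Lemma \ref{vq} to project $v(q)$ onto that range on each sub-interval. You additionally verify continuity and monotonicity at the breakpoints $q_1,q_2,q_3,\hat{q}^*$, which the paper leaves implicit.
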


\begin{proof}
	Given $R$ and $\hat{q}^*$ fixed, the feasible domain of $\hat{v}(q)$ is $[\frac{R}{\hat{q}^*},\frac{R}{q}]$, i.e. the revenue curve $q\hat{v}(q)$ in figure 1 should below the line $y=R$ and above the line $((0,0),(\hat{q}^*,R))$. Then by Lemma \ref{vq} we should choose the value in the feasible region that is the closest to $v(q)$, so the revenue curve of the best response distribution is the red line (line $Oq_1q_2q_3\hat{q}^*$).
\end{proof}

\begin{figure}[htbp]
	\centering
	\includegraphics[width=0.5\textwidth]{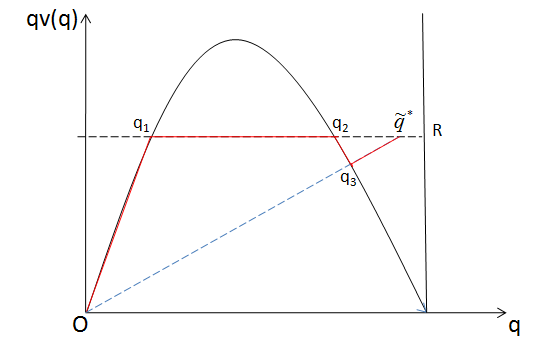}
	\caption{Best response to SPA with Myerson reserve}
	\label{fig:digit}
\end{figure}

\begin{lemma}
	
	\label{qform}
	For fixed $R$, and suppose $\hat{v}(q)$ has the form in Lemma \ref{vqform}, it is without loss of generality to set $\hat{q}^*=1$. 
\end{lemma}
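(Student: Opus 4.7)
The plan is to start from any $\hat{v}$ of the form in Lemma \ref{vqform} with reserve quantile $\hat{q}^*<1$ and revenue max $R$, and exhibit a distribution with reserve quantile $1$ (and the same $R$) that yields at least as much utility. I will do this in two moves: first replacing $\hat{v}$ by an intermediate distribution $\hat{v}_1$ that clearly has reserve quantile $1$, then invoking Lemma \ref{vqform} at $\hat{q}^{*\prime}=1$ to pass from $\hat{v}_1$ to the optimal distribution of that form.

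The intermediate distribution I use is
\[
\hat{v}_1(q) \;=\; \begin{cases}\hat{v}(q), & q\in[0,\hat{q}^*],\\ R/q, & q\in(\hat{q}^*,1].\end{cases}
\]
By the form of Lemma \ref{vqform}, $\hat{v}(\hat{q}^*)=R/\hat{q}^*$, so $\hat{v}_1$ is continuous and weakly decreasing. Its revenue curve equals $q\hat{v}(q)\le R$ on $[0,\hat{q}^*]$ and equals $R$ on $[\hat{q}^*,1]$, so the maximum of the revenue curve is $R$ and, by the tie-breaking convention, the reserve quantile of $\hat{v}_1$ is $1$.

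To compare utilities, I use the identity $U=\int_0^1 x^*(q)(v(q)-r(q))\,dq$ from Lemma \ref{utilitity1} (the boundary term vanishes under the canonical payment). On $(0,\hat{q}^*)$, we have $\hat{v}_1=\hat{v}$, hence $r_1=r$; since opponents' strategies are fixed, the same bid $\hat{v}(q)$ produces the same interim allocation (it clears both the old reserve $R/\hat{q}^*$ and the new reserve $R$). Thus the integrands agree on $(0,\hat{q}^*)$. On $(\hat{q}^*,1)$, the old mechanism's interim allocation is $0$ because the bid lies below the old reserve, while for $\hat{v}_1$ we have $\hat{v}_1(q)=R/q\ge R$ and $r_1(q)=R/q+q\cdot(-R/q^2)=0$. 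Therefore
\[
U(\hat{v}_1)-U(\hat{v}) \;=\; \int_{\hat{q}^*}^{1} x^*_{\hat{v}_1}(q)\,v(q)\,dq \;\ge\; 0.
\]

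Finally, $\hat{v}_1$ has reserve quantile $1$ and revenue max $R\le R^*$ (by Lemma \ref{R_less_than_Rstar}), so Lemma \ref{vqform} applied at $\hat{q}^{*\prime}=1$ produces a distribution $\hat{v}_{\mathrm{new}}$ of the canonical form with $U(\hat{v}_{\mathrm{new}})\ge U(\hat{v}_1)\ge U(\hat{v})$. The main subtlety I expect is the comparison on $(0,\hat{q}^*)$: the pointwise interim utilities $u^*(q)$ under the two mechanisms actually differ (the new one pays less because the reserve is lower), but Lemma \ref{utilitity1}'s integrand is written in terms of the virtual value rather than the reserve, and the pointwise ``gain'' on $(0,\hat{q}^*)$ is exactly compensated by the arrangement of the integration-by-parts that produced the identity. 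Once one trusts the identity and uses it consistently for both mechanisms, only the contribution on $(\hat{q}^*,1)$ survives, and it is manifestly nonnegative.
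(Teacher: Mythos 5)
Your proof is correct and takes a genuinely different route from the paper. The paper parametrizes the canonical family by $q_3$ (with $\hat{q}^* = R/v(q_3)$), writes $U$ as an explicit integral, and grinds through the derivative $dU/dq_3$, showing after several lines of bounding that it is strictly positive; the lemma then follows by pushing $q_3$ to its maximum. You instead give a direct exchange argument: splice $R/q$ onto $(\hat{q}^*,1]$ to build a single candidate $\hat{v}_1$ whose revenue curve still peaks at $R$ but whose reserve quantile is now $1$, observe via the identity $U=\int_0^1 x^*(q)(v(q)-r(q))\,dq$ from Lemma \ref{utilitity1} that the integrand is unchanged on $(0,\hat{q}^*)$ (same bid, and it clears both the old reserve $R/\hat{q}^*$ and the new, lower reserve $R$) and rises from $0$ to $x^*(q)v(q)\ge 0$ on $(\hat{q}^*,1)$ (since $r_1\equiv 0$ there), and then invoke Lemma \ref{vqform} at $\hat{q}^*=1$ to canonicalize. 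Your route is shorter and more transparent, and it exposes the economic reason the move helps: raising $\hat{q}^*$ to $1$ lowers the effective reserve price from $R/\hat{q}^*$ to $R$ without altering the revenue-curve maximum, so it can only add allocation probability at low types. The only thing the paper's computation delivers that your argument as written does not is the \emph{strict} monotonicity $U'>0$, which could matter for uniqueness of the best response downstream; you could recover strictness by noting $\int_{\hat{q}^*}^1 x^*_{\hat{v}_1}(q)v(q)\,dq>0$ whenever $v>0$ on that interval and the opponent's bid falls below $R/q$ with positive probability.
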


\begin{proof}
		By Lemma \ref{vqform} we assume w.l.o.g. that $R<R^*$. $q_1,q_2,q_3$ are define in Figure 1. We have $\frac{q_3}{\hat{q}^*}=\frac{q_3v(q_3)}{R}$. Then $\hat{q}^*=\frac{R}{v(q_3)}$, which is uniquely determined by $q_3$.
		
		
		\begin{eqnarray*}
			U &=&\int_0^{q_1}  x(v(q))(-qv'(q))dq+\int_{q_1}^{q_2} x(\frac{R}{q})v(q)dq \\
			&+&\int_{q_2}^{q_3} x(v(q))(-qv'(q))dq+\int_{q_3}^{\frac{R}{v(q_3)}}x(v(q_3)(v(q)-v(q_3)))dq
		\end{eqnarray*}
		We compute the derivative of $U$ to $q_3$
		\begin{eqnarray*}
			U' &=& -x(v(q_3))q_3v'(q_3)+\int_{q_3}^{\frac{R}{v(q_3)}}(x'(v(q_3))v'(q_3)(v(q)-v(q_3))\\
			&&-v'(q_3)x(v(q_3)))dq-\frac{Rv'(q_3)}{v^2(q_3)}x(v(q_3))(v(\frac{R}{v(q_3)})-v(q_3))\\
			\frac{U'}{v'(q_3)}&=&-q_3x(v(q_3))+\int_{q_3}^{\frac{R}{v(q_3)}}(x'(v(q_3))(v(q)-v(q_3))-x(v(q_3)))dq-\\
			&&\frac{R}{v^2(q_3)}x(v(q_3))(v(\frac{R}{v(q_3)})-v(q_3))\\
			&\leq & -q_3x(v(q_3))-x(v(q_3))(\frac{R}{v(q_3)}-q_3)-\frac{R}{v^2(q_3)}x(v(q_3))(v(\frac{R}{v(q_3)})-v(q_3))\\
			\frac{U'}{v'(q_3)x(v(q_3))} &\leq& -\frac{R}{v(q_3)}-\frac{R}{v^2(q_3)}(v(\frac{R}{v(q_3)})-v(q_3))\\
			\frac{U'v(q_3)}{v'(q_3)x(v(q_3))R} &\leq& -1-\frac{v(\frac{R}{v(q_3)})-v(q_3)}{v(q_3)}=-\frac{v(\frac{R}{v(q_3)})}{v(q_3)}<0\\
		\end{eqnarray*}
		(The first inequality holds because $x'(v)>0$ for any $v$ and $v(q)<v(q_3)$ for $q\in [q_3,\frac{R}{v(q_3)}]$)
		
		So we can conclude that $U'>0$ since $v'(q_3)<0$ and other terms in left side are non-negative. $\hat{q}^*=1$ when $q_3$ gets its feasible maximum.
	
\end{proof}

So Theorem \ref{vqformall} is proved.

\begin{lemma}
	\label{vqutility}
	For fixed $R$, the buyer's utility by reporting the best response distribution described in Lemmas \ref{vqform} and \ref{qform} is given by,
	$$
	U =\int_0^{q_1}  x(v(q))(-qv'(q))dq+\int_{q_1}^{q_2} x(\frac{R}{q})v(q)dq+\int_{q_2}^{q_3} x(v(q))(-qv'(q))+\int_{q_3}^1x(R)(v(q)-R)dq$$
	where $q_1v(q_1)=R,q_2(q_2)=R,v(q_3)=R$.
	
\end{lemma}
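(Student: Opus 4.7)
The plan is to apply Lemma \ref{utilitity1} directly to the specific piecewise fake distribution $\hat{v}(\cdot)$ constructed in Lemmas \ref{vqform} and \ref{qform}. Recall that lemma gives
\[U \;=\; \hat{v}(1)x^*(1)-t^*(1)+\int_0^1 x^*(q)\bigl(v(q)-r(q)\bigr)\,dq,\]
where $r(q)=\hat{v}(q)+q\hat{v}'(q)$. Under the payment rule (\ref{payment}), the boundary term $\hat{v}(1)x^*(1)-t^*(1)$ vanishes, so only the integral remains. Combined with the SPAMR-specific identity $x^*(q)=x(\hat{v}(q))$ recorded at the start of Section~6.1, the whole computation reduces to a term-by-term substitution on the four intervals.

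Before integrating, I would briefly verify that $\hat{v}$ is continuous at the breakpoints $q_1,q_2,q_3$ (so Lemma \ref{payment_identity} applies), using $q_1 v(q_1)=R$ to match $v(q_1)=R/q_1$ at $q_1$, the analogous relation at $q_2$, and $\hat{q}^*=1$ together with $v(q_3)=R/\hat{q}^*=R$ at $q_3$. The derivative $\hat{v}'$ fails to exist only at these finitely many points, which is harmless for the integral.

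Next I would compute $r(q)$ piecewise. On $(0,q_1)\cup(q_2,q_3)$, $\hat{v}(q)=v(q)$ yields $v(q)-r(q)=-qv'(q)$. On $(q_1,q_2)$, $\hat{v}(q)=R/q$ yields $r(q)=R/q+q(-R/q^2)=0$, hence $v(q)-r(q)=v(q)$; this is the \emph{equal-revenue} segment for which the virtual value identically vanishes. On $(q_3,1)$, $\hat{v}(q)\equiv R$ yields $r(q)=R$ and $v(q)-r(q)=v(q)-R$. Plugging $x(\hat{v}(q))$ into each region (becoming $x(v(q))$, $x(R/q)$, $x(v(q))$, and $x(R)$ respectively) gives exactly the four integrals in the statement.

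There is really no hard step: the mildest obstacle is making sure the unique invocation of $x^*(q)=x(\hat{v}(q))$ is valid, i.e.\ that for a buyer whose opponents' fake distributions and whose own monopoly reserve are held fixed, the interim allocation depends only on the bid $\hat{v}(q)$ and not on the rest of $\hat{v}(\cdot)$. This is essentially the observation that in SPAMR the buyer's winning probability is $\Pr[\hat{v}(q)\ge\text{reserve and beats all opponent bids}]$, which is a function of $\hat{v}(q)$ alone once the reserve is pinned down (and the reserve is controlled by the parameter $R$, not by the shape of $\hat{v}$ on intermediate intervals). Once this is cited, the rest of the proof is the routine algebraic substitution outlined above.
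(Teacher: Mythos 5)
Your proof is correct and matches the paper's implicit derivation. The paper in fact never spells out a standalone proof of Lemma \ref{vqutility}; it appears because the same formula is assembled, in nearly identical form, inside the proof of Lemma \ref{qform} (with general $\hat q^*$, specializing to $\hat q^*=1$). Your route — plug the piecewise $\hat v$ from Theorem \ref{vqformall} into Lemma \ref{utilitity1}, note the boundary term vanishes under payment rule (\ref{payment}), compute $r(q)$ on each of the four pieces (getting $v+qv'$, $0$, $v+qv'$, $R$ respectively), and use $x^*(q)=x(\hat v(q))$ once the reserve price is pinned down by $R$ and $\hat q^*=1$ — is exactly the intended computation, and your remarks on continuity of $\hat v$ at the breakpoints $q_1,q_2,q_3$ and on the validity of the identity $x^*(q)=x(\hat v(q))$ are the right side conditions to check.
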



\begin{lemma}
	\label{minus_e}
	For any buyers' true distributions that are i.i.d. regular , if $v(1)=0$ and the best response given by Lemma \ref{vqform} and \ref{qform} is unique, then there does not exist a symmetric equilibrium of fake distribution.
\end{lemma}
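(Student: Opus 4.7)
The plan is to assume for contradiction that a symmetric equilibrium exists in which every buyer plays the best-response form prescribed by Lemmas \ref{vqform} and \ref{qform} at some common reserve $R \geq 0$, and then to exhibit a strictly improving downward deviation $R' = R - \varepsilon$, which contradicts uniqueness of the best response.

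\textbf{Ruling out $R=0$.} When $R = 0$, Lemmas \ref{vqform} and \ref{qform} force $\hat{v}(q)=0$ on $(0,1]$, so every opponent bids $0$ almost surely. A deviator using the form at any $R'>0$ bids strictly above $0$ everywhere, wins outright, and pays his own reserve $R'$, yielding utility $E[v]-R'$. Its supremum $E[v]$ strictly exceeds the symmetric-profile utility $E[v]/n$ but is not attained at any $R'>0$, so no unique best response exists at $R=0$, contradicting the hypothesis. Hence any candidate symmetric reserve satisfies $R>0$, and together with $v(1)=0$ this gives $q_3 = v^{-1}(R) < 1$.

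\textbf{Interim allocation against the symmetric profile.} With $n-1$ opponents using the form at reserve $R$, their common bid distribution has an atom of mass $1-q_3$ at the point $b=R$ and is continuous on $(R,v(0)]$. Using random tie-breaking, the deviator's interim allocation $x(\cdot)$ as a function of his bid $b$ satisfies
\begin{equation*}
x(b) = 0 \text{ for } b < R, \qquad x(R) = \frac{(1-q_3)^{n-1}}{n} > 0, \qquad x \text{ continuous on } (R, v(0)].
\end{equation*}

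\textbf{Utility and the fourth integral.} By Lemma \ref{vqutility} write $U(R') = I_1(R') + I_2(R') + I_3(R') + I_4(R')$, where
\begin{equation*}
I_4(R') \;=\; x(R')\int_{q_3'}^1 \bigl(v(q)-R'\bigr)\,dq.
\end{equation*}
At $R' = R$ the factor $\int_{q_3}^1(v(q)-R)\,dq$ is strictly negative, because $v$ is strictly decreasing with $v(q_3)=R$ and $v(1)=0$, so $I_4(R)<0$. For $R' < R$, the deviator's flat-region bid $R'$ lies strictly below every opponent's bid $\geq R$, so $x(R')=0$ and hence $I_4(R')=0$.

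\textbf{Continuity of $I_1,I_2,I_3$ and conclusion.} I would verify that $I_1,I_2,I_3$ are continuous in $R'$ at $R'=R$ by a case split locating each bid piece relative to the atom at $b=R$: the bids in $I_1$ satisfy $v(q)\geq v(q_1')=R'/q_1' > R$ for $R'$ near $R$; the bids in $I_2$ lie in $[v(q_2'),v(q_1')]$, which sits strictly above $R$ in a neighborhood of $R'=R$ since $q_2'<q_3$ there; the bids $v(q)$ in $I_3$ can cross $R$ only at the single interior point $q=q_3$ when $R'<R$, and on the sub-interval $(q_3,q_3']$ one has $x(v(q))=0$, so $I_3(R') = \int_{q_2'}^{\min(q_3,q_3')} x(v(q))(-q v'(q))\,dq$, which varies continuously as $R' \to R^-$. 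Combining these with the formula for $I_4$,
\begin{equation*}
\lim_{R'\to R^-} U(R') \;=\; U(R) - I_4(R) \;>\; U(R),
\end{equation*}
so there exists $R'<R$ arbitrarily close to $R$ with $U(R') > U(R)$. This contradicts $R$ being the deviator's unique best response, proving the lemma.

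\textbf{Main obstacle.} The delicate step is establishing the continuity of $I_1,I_2,I_3$ through the opponents' atom at $b=R$, which requires carefully tracking how the three bid pieces $v(q)$, $R'/q$, $v(q)$ move relative to $R$ as $R'$ varies, and observing that any sub-interval where the bid slips below $R$ contributes zero (via $x=0$) rather than a spurious jump. Once that continuity is in hand, the whole discontinuity of $U$ at $R'=R$ is concentrated in $I_4$, whose strict negativity---driven precisely by the assumption $v(1)=0$---produces the profitable downward deviation.
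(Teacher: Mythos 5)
Your proof is correct and takes essentially the same route as the paper's: both identify that at a candidate symmetric equilibrium with atom at the reserve $R$, the last term $\int_{q_3}^1 x(R)(v(q)-R)\,dq$ of the Lemma~\ref{vqutility} decomposition is strictly negative, and both show that shifting to $R-\varepsilon$ zeroes out this term while perturbing the first three only infinitesimally, with the $R=0$ boundary case handled separately. Your write-up is more careful than the paper's (which just asserts that the first three terms ``vary by an infinitesimal'' and that $\hat v \equiv 0$ is ``easy to check'' to not be an equilibrium): you explicitly compute $x(R)=(1-q_3)^{n-1}/n$ under random tie-breaking, track each of $I_1,I_2,I_3$ through the opponents' atom to verify left-continuity, and give a concrete non-attainment argument for ruling out $R=0$, but the underlying idea is identical.
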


\begin{proof}
	$v(1)=0$ implies that $q_2,q_3<1$. Assume by contradiction that if there exists a symmetric equilibrium, we use $\hat{v}(q)$ to denote the fake distribution in equilibrium. By Lemma \ref{vqform} and \ref{qform}, the equilibrium of fake distribution have a point mass on $[q_3,1]$, where all buyers' fake types are equal to the reserve price $\hat{v}(1)=R$. The mechanism must deal with tie-breaking in this case. Note that when $q\in [q_3,1]$, this part of utility is $\int_{q_3}^1 x(R)(v(q)-R) dq<0$. So, to maximize utility, the buyer wants to minimize $x(R)$. Whatever the tie breaking rule is, all buyers can always deviate by using $R-\epsilon$ instead of $R$. The first three terms in Lemma \ref{vqutility} will vary by an infinitesimal, while in the last term, $x(R)$ vary from a negative non-infinitesimal to $0$, and the whole term will increase by a non-infinitesimal amount. So it is not an equilibrium, unless all $\hat{v}(q)$ equal 0. But it's easy to check that $\hat{v}(q)=0$ is not an equilibrium either. Contradiction.
\end{proof}

It remains to show that the best response of fake distribution is unique. We have following lemmas.

\begin{lemma}
	\label{no1}
	For any buyers' true distributions that are i.i.d. regular and do not have a point mass, the symmetric equilibrium of fake distribution can not have a point mass.
\end{lemma}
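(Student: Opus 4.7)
The plan is to argue by contradiction, exploiting the discontinuity in the second-price allocation that any point mass creates through tie-breaking. Suppose toward contradiction that a symmetric equilibrium $\hat{v}(\cdot)$ has a point mass at value $v_0$, with corresponding quantile interval $[q_a,q_b]$, $q_a<q_b$. Because the true distribution is regular and atomless, $v(\cdot)$ is continuous and strictly decreasing on its support, so $v(q_a)>v(q_b)$. This strict inequality rules out having $v(q_a)\le v_0$ and $v(q_b)\ge v_0$ simultaneously, so at least one of the following holds: (A) $v(q_a)>v_0$, or (B) $v(q_b)<v_0$. In either case, I will exhibit a local deviation by a single buyer that strictly improves utility, contradicting equilibrium.

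In Case A, the proposed deviation is $\hat{v}'(q)=v_0+\epsilon$ on a small interval $[q_a,q_a+\eta]$, with $\hat{v}$ unchanged elsewhere. Monotonicity is preserved by taking $\epsilon<\hat{v}(q_a^-)-v_0$, which is positive because $[q_a,q_b]$ is assumed maximal. Whenever at least one other buyer's quantile also lies in $[q_a,q_b]$---an event whose probability is at least $1-(1-(q_b-q_a))^{n-1}>0$, independent of $\eta$---the previous tie at $v_0$ is now broken in the deviator's favor, and because the mechanism is second-price, the price paid in that event is still $v_0$. This gives a first-order gain of order $\eta\cdot(v(q_a)-v_0)>0$. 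Case B is the symmetric argument: I set $\hat{v}'(q)=v_0-\epsilon$ on $[q_b-\eta,q_b+\delta]$, where $\delta$ is chosen so that $\hat{v}(q_b+\delta)=v_0-\epsilon$ (possible for small $\epsilon$ by continuity of $\hat{v}$ just past the point mass boundary) and the modified distribution reconnects smoothly to $\hat{v}$ beyond. The deviator now loses ties at $v_0$, gaining the strictly positive quantity $v_0-v(q)$ on each tie event.

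The main obstacle I anticipate is controlling the second-order effects that the perturbation introduces elsewhere. Modifying $\hat{v}$ changes the revenue curve $q\hat{v}(q)$, which can move the buyer's own monopoly reserve $\hat{q}^*=\arg\max_q q\hat{v}(q)$, and can potentially alter the payment on non-tie events where the deviated bid still wins. My plan is to note that both $\epsilon$ and the measure of the perturbed interval are taken arbitrarily small, so the revenue curve is displaced by at most $\epsilon$ on a set of measure $O(\eta)$; hence the $\arg\max$ defining the reserve moves by $o(1)$ as $\epsilon,\eta\to 0$, and all induced effects are $o(\eta)$, strictly dominated by the $\Theta(\eta)$ tie-breaking gain. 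A minor bookkeeping step is to verify the monotonicity-preserving interpolation in Case B just past $q_b$; this uses that $\hat{v}$ is monotone decreasing and takes values arbitrarily close to $v_0$ from below just past $q_b$, so the required $\delta$ exists for every sufficiently small $\epsilon$.
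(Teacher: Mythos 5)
Your core idea — argue by contradiction, pick the boundary quantile $q_a$ or $q_b$ of the flat region where the true value is respectively above or below the mass value $v_0$, and exploit the tie-breaking discontinuity with a local $\epsilon$-perturbation — is the same driving idea as the paper's proof. Where your proposal has a genuine gap is in the treatment of the monopoly reserve, which you acknowledge but then dismiss with an unjustified assertion. The claim that ``the $\arg\max$ moves by $o(1)$, hence all induced effects are $o(\eta)$'' is a non-sequitur: $o(1)$ is a much weaker statement than $o(\eta)$ once $\eta\to 0$, and the size of the extra inclusion interval $\delta$ depends on the local behaviour of $\hat{v}$ past $q_b$, which is not under your control; making $\delta = o(\eta)$ requires choosing $\epsilon$ as a function of $\eta$, which you do not do. More importantly, there are structural cases your local argument does not cover. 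In Case B, if the point mass occupies the right end of the quantile space ($q_b=1$, so $\hat{q}^*=1$), there is no ``$q_b+\delta$'' to reconnect to and the construction as written is undefined; and if $\hat{q}^*<q_a$ the entire tie region is below the reserve, so no tie events occur, the perturbation produces no first-order gain at all, and the contradiction evaporates.

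The paper avoids these pitfalls by not treating Lemma 6.8 as a stand-alone statement: it first uses Lemmas 6.4 (\texttt{vqform}) and 6.5 (\texttt{qform}) to pin down the form of any best response — in particular that $\hat{q}^*=1$ and the only possible flat region sits at the bottom of the distribution with the reserve at its right endpoint — and then handles precisely that configuration via the global deviation $R\mapsto R-\epsilon$ of Lemma 6.7 (\texttt{minus\_e}), rather than a shrinking local one. Your approach trades that structural preprocessing for a direct local argument, which is appealing, but it then inherits exactly the case analysis that the structural lemmas are there to eliminate; to close the gap you would need to (i) rule out $\hat{q}^*<q_a$ (e.g., by showing such a $\hat{v}$ cannot be a best response), (ii) explicitly couple $\epsilon$ to $\eta$ so that the reserve-shift and inclusion-boundary effects are genuinely lower order, and (iii) give a separate construction for $q_b=1$, which in effect recovers the argument of Lemma 6.7.
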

\begin{lemma}
	\label{no2}
	For any buyers' true distributions that are i.i.d. regular and do not have a point mass, if there exists a symmetric equilibrium of best response of fake distribution with fixed $R$ and reserve quantile $\hat{q}^*$, then the best response of fake distribution given by Lemma \ref{vqform} is unique.
\end{lemma}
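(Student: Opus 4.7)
The plan is to show that, under the hypotheses (true distribution i.i.d.\ regular with no point mass, and a symmetric equilibrium with prescribed $R$ and $\hat{q}^*$), the form established in Lemma~\ref{vqform} pins down a unique best response on the utility-relevant region $[0,\hat{q}^*]$, while the residual freedom on $(\hat{q}^*,1)$ is inconsequential because bids in that range are rejected by the reserve.

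First, I would apply Lemma~\ref{vqform} to extract the piecewise characterization of any best response with the given $R$ and $\hat{q}^*$: $\hat{v}(q)=v(q)$ on $(0,q_1)\cup(q_2,q_3)$, $\hat{v}(q)=R/q$ on $(q_1,q_2)$, and $\hat{v}(q)=R/\hat{q}^*$ on $(q_3,\hat{q}^*)$, where $q_1 v(q_1)=q_2 v(q_2)=R$ and $v(q_3)=R/\hat{q}^*$. Since the true distribution $v$ is continuous and strictly decreasing (by the no-point-mass hypothesis), these thresholds $q_1,q_2,q_3$ are uniquely determined as functions of $R$ and $\hat{q}^*$.

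Next, I would establish pointwise uniqueness of the prescribed value on $(0,\hat{q}^*)\setminus\{q_1,q_2,q_3\}$. Since $R$ and $\hat{q}^*$ are fixed, the effective reserve $R/\hat{q}^*$ is fixed as well, so from the buyer's perspective the mechanism behaves locally like a prior-independent auction. Mirroring the argument in the proof of Theorem~\ref{prior-independent}, we have $\partial u^*(q)/\partial \hat{v}(q)=x'(\hat{v}(q))(v(q)-\hat{v}(q))$ with $x'>0$, so the unique pointwise maximizer within the feasibility window $[R/\hat{q}^*,R/q]$ is the value closest to $v(q)$: this is $v(q)$ itself on $(0,q_1)\cup(q_2,q_3)$ (where $v(q)$ is strictly interior to the window) and the boundary $R/q$ on $(q_1,q_2)$ (where $v(q)>R/q$). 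Strict monotonicity of $v$ ensures strict optimality in each case, ruling out any alternative assignment of $\hat{v}(q)$.

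The remaining step addresses the flat segment $(q_3,\hat{q}^*)$ and the tail $(\hat{q}^*,1)$. On $(q_3,\hat{q}^*)$, the prescribed $\hat{v}(q)\equiv R/\hat{q}^*$ is constant over an interval of positive length, which corresponds to a point mass of probability $\hat{q}^*-q_3$ at the reserve value $R/\hat{q}^*$ in the fake value distribution. The no-point-mass conclusion for the symmetric equilibrium (Lemma~\ref{no1}) rules this out, forcing $q_3=\hat{q}^*$; together with $v(q_3)=R/\hat{q}^*$ this implies $\hat{q}^*v(\hat{q}^*)=R$ and hence $\hat{q}^*=q_2$, collapsing the flat region to a single point. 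On $(\hat{q}^*,1)$, any $\hat{v}(q)<R/\hat{q}^*$ lies strictly below the buyer's own reserve and is rejected by the SPAMR mechanism, contributing nothing to the buyer's utility and not entering the competition seen by other buyers, so any weakly decreasing choice on this tail yields identical payoffs in the induced game. The main obstacle is precisely in this last step: one must carefully separate the pointwise structural uniqueness on $[0,\hat{q}^*]$ (coming from the local optimization argument) from the elimination of the flat segment (coming from the equilibrium-level no-point-mass constraint of Lemma~\ref{no1}), and then verify that the residual freedom on $(\hat{q}^*,1)$ is utility-irrelevant for every buyer, so that the best response is unique in the only sense that matters for the subsequent equilibrium analysis.
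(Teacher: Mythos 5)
Your proposal follows essentially the same route as the paper's proof but leaves the crucial step unjustified. Both arguments ultimately hinge on the interim allocation $x$ being \emph{strictly} increasing in the bid, which is what makes the pointwise maximizer of $u^*(q)$ unique. You assert $x'>0$ while pointing to the proof of Theorem~\ref{prior-independent}, but that proof only establishes the weak inequality $x'\geq 0$. The strict inequality is exactly what the hypothesis of Lemma~\ref{no2} buys you: since a symmetric equilibrium is assumed to exist, Lemma~\ref{no1} guarantees the opponent's equilibrium fake distribution has no point mass, hence no intervals of constancy in value space, hence $x$ is strictly increasing wherever it is positive. The paper makes this explicit (``since there is no point mass, in symmetric equilibrium $x(\hat{v}(q))=q$, i.e., no constant allocation in any interval''). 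You do invoke Lemma~\ref{no1}, but only to collapse the flat segment $(q_3,\hat{q}^*)$; the same no-point-mass property must also be invoked in your pointwise-uniqueness step. Without it the argument collapses: if $x$ were constant on some value interval, any fake value in the feasibility window intersected with that interval would give identical utility, and the best response would not be unique.

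A secondary slip: you attribute ``strict optimality in each case'' to strict monotonicity of $v$, but the relevant strictness in $\partial u^*(q)/\partial\hat{v}(q)=x'(\hat{v}(q))\,(v(q)-\hat{v}(q))$ is that of $x$, not $v$; strict monotonicity of $v$ alone cannot rule out a flat region of $x$. Beyond this, your treatment of the tail $(\hat{q}^*,1)$ as utility-irrelevant, and the collapse of the flat segment via Lemma~\ref{no1}, both match the intent of the paper (the paper phrases the whole thing as a contradiction — if non-unique then $x$ must be constant on an interval, contradicting strict monotonicity — whereas you argue pointwise uniqueness directly, but these are dual formulations of the same idea). Fixing the proof is straightforward: first invoke Lemma~\ref{no1} to conclude the opponent's equilibrium distribution has no point mass, deduce $x'>0$, and then your pointwise argument goes through.
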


Theorem \ref{no_equilibrium} then follows from Lemmas \ref{no1} and \ref{no2}.

\begin{proof}[Proof of Lemma  \ref{no1} and \ref{no2}]
	We first assume by contradiction that there is a point mass on equilibrium of fake distribution, i.e. $\hat{v}(q)=c$ for $q \in [a,b],a<b$.
	
	Since $v(q)$ does not have a point mass, there exist a subinterval $[x,y]$ s.t. $v(q)>c,~\forall q \in [x,y]$ or $v(q)<c ,~\forall q \in [c,d]$.
	
	We consider case by case. If $v(q)>c ,\forall q \in [x,y]$, one can deviate the fake distribution by $\hat{v}(q)=c+\epsilon-\epsilon\frac{q-x}{y-x}, \forall q \in [x,y]$ (slightly adjust the value in the left-neighborhood of $x$ to keep the monotonicity). This slight disturb remains $\max q\hat{v}(q)$ and $\arg\max q\hat{v}(q)$ the same, and note that since there is a tie breaking rule for $q\in [x,y]$, it gives a non-infinitesimal increasing to the utility.
	
	If $v(q)<c ,~\forall q \in [x,y]$, for cases $\arg\max q\hat{v}(q) \neq y$, one can deviate the fake distribution by $\hat{v}(q)=c-\epsilon$ (slightly adjust the value in the neighborhood of  endpoints $x,y$ to keep the monotonicity). And for case $\arg\max q\hat{v}(q) = y$, one can deviate by setting $R=R-\epsilon$ and according to Lemma \ref{minus_e}. All these slight disturb gives a non-infinitesimal increasing to the utility.
	
	To show Lemma \ref{no2}, fix $\max q\hat{v}(q)$ and $\arg\max q\hat{v}(q)$. Assume by contradiction that in symmetric equilibrium there exists a fake distribution $\hat{v}(q)$ that does not consistent to the distribution we give in Lemma \ref{vqform} (denote by $\hat{v}^*(q)$) but give the same utility. Suppose that $\hat{v}^*(q) \neq \hat{v}(q)$ for a given quantile $q$, by Lemma \ref{vq} and Theorem 4.2, we can infer that $x(\hat{v}(q))=x(\hat{v}^*(q))=Constant,\forall v\in [\hat{v}^*(q),\hat{v}(q)]$. However, since there is no point mass, in symmetric equilibrium $x(\hat{v}(q))=q$, i.e., no constant allocation in any interval, contradiction.
	
	Combining Lemma \ref{vqform}, Lemma \ref{no1} and Lemma \ref{no2}, we conclude that in symmetric environment there is no equilibrium of fake distribution. Nevertheless, Lemma \ref{qform} can be used to solve equilibriums in asymmetric cases.
\end{proof}

\subsection{Missing proofs from section 6.2}

In this setting, the revenue curve of fake distribution is a concave function. Let $\hat{q}^*$ denote the reserve quantile (where the $q\hat{v}(q)$ is maximized), then from $0$ to $\hat{q}^*$, $q\hat{v}(q)$ is weakly increasing. Define $R=\hat{q}^*\hat{v}(\hat{q}^*)$.

\begin{lemma}
	\label{rqform}
	It is without loss of generality to consider $\hat{v}(\cdot)$  subject to $\hat{q}^*=1$. 
\end{lemma}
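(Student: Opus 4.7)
The plan is to prove the reduction by direct construction: given any regular fake distribution $\hat{v}(\cdot)$ whose reserve quantile $\hat{q}^*$ is strictly less than $1$, I will produce a replacement regular distribution $\tilde{v}(\cdot)$ with reserve quantile $1$ that yields weakly higher utility. Setting $R = \hat{q}^* \hat{v}(\hat{q}^*)$, I define $\tilde{v}(q) = \hat{v}(q)$ for $q \le \hat{q}^*$ and $\tilde{v}(q) = R/q$ for $q \ge \hat{q}^*$. First I would verify that $\tilde{v}$ is admissible: it is continuous at $\hat{q}^*$ (both pieces equal $R/\hat{q}^*$), weakly decreasing, and its revenue curve $q\tilde{v}(q)$ equals the original concave curve on $[0,\hat{q}^*]$ and is flat at $R$ on $[\hat{q}^*,1]$; concavity across the junction holds because the original curve has nonnegative left derivative at its argmax $\hat{q}^*$ while the flat extension has right derivative zero. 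By the tie-breaking rule selecting the largest argmax, $\tilde{v}$ has reserve quantile $1$ and reserve price $\tilde{v}(1) = R$.

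The second step is a head-to-head utility comparison via Lemma~\ref{utilitity1}. Under payment rule (\ref{payment}) one has $t^*(1) = \hat{v}(1)x^*(1)$, so the buyer's utility reduces to $\int_0^1 x^*(q)(v(q)-r(q))\,dq$ with $r(q) = \hat{v}(q)+q\hat{v}'(q)$. The crucial observation about SPAMR, with opponents' distributions fixed, is that the interim allocation factorizes as $x^*(q) = \mathbf{1}\{\hat{v}(q) \ge \hat{r}\}\,x(\hat{v}(q))$, where $\hat{r}$ is the buyer's own monopoly reserve and $x(\cdot)$ depends only on the opponents (since conditional on qualifying, winning depends only on one's bid versus the surviving opposing bids). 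For $\hat{v}$ the buyer qualifies exactly on $[0,\hat{q}^*]$, so $U(\hat{v}) = \int_0^{\hat{q}^*} x(\hat{v}(q))(v(q)-r(q))\,dq$. For $\tilde{v}$, on $[0,\hat{q}^*]$ both $\tilde{v}(q) = \hat{v}(q)$ and its virtual value coincide with the original, and the bid still exceeds the new (lower) reserve $R$, so this portion of the integral is preserved verbatim. On $(\hat{q}^*,1]$, the bid is $\tilde{v}(q) = R/q \ge R$ so the buyer qualifies, and its virtual value computes to $R/q + q(-R/q^2) = 0$; hence the extra contribution is $\int_{\hat{q}^*}^1 x(R/q)\,v(q)\,dq \ge 0$. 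This yields $U(\tilde{v}) \ge U(\hat{v})$, and the lemma follows.

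The main obstacle I anticipate is justifying cleanly the factorization of the interim allocation, namely that $x(b)$ for bids above one's own reserve is independent of the own reserve itself, since I modify both the fake distribution and the implied reserve in one stroke and rely on this independence to claim that the utility on $[0,\hat{q}^*]$ is preserved exactly when the own reserve drops from $R/\hat{q}^*$ to $R$. A second mild concern is a possible kink in the new revenue curve at $\hat{q}^*$, but weak concavity only demands the left slope to dominate the right, which is automatic since the left slope at a maximum of the original concave curve is nonnegative while the right slope of the constant $R$ is zero.
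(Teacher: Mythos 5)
Your proposal is correct and follows essentially the same approach as the paper: both replace $\hat{v}(q)$ by $R/q$ on $[\hat{q}^*,1]$ (making the revenue curve flat at its maximum value $R$ out to quantile $1$) and identify the utility gain as $\int_{\hat{q}^*}^1 x(R/q)\,v(q)\,dq \geq 0$, using that the virtual value of $R/q$ is zero. The paper's proof is a one-liner; you fill in the supporting details the paper leaves implicit (admissibility and concavity of the modified distribution, the factorization $x^*(q)=\mathbf{1}\{\hat{v}(q)\geq \hat{r}\}\,x(\hat{v}(q))$, and why the integral over $[0,\hat{q}^*]$ is unchanged when the own reserve drops), all of which are accurate.
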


\begin{proof}
	If $\hat{q}^*<1$, set $\hat{v}(q)=\frac{R}{q}$ for all $q \in [\hat{q}^*,1]$. The utility increases by $\int_{\hat{q}^*}^1 x(\frac{R}{q})v(q) dq >0$. So $\hat{q}^*=1$
\end{proof}

\begin{proof}[Proof of Theorem \ref{rvqform}]
	First of all, by a similar argument as in Lemma \ref{R_less_than_Rstar}, we have $R \leq R^*$.
	
	
	By Lemma \ref{rqform}, we can assume WLOG that  $\hat{q}^*=1$. So the revenue curve $q\hat{v}(q)$ is monotone increasing in [0,1]. Let $q_2$ denote the quantile of the intersection between $q\hat{v}(q)$ and the right part of revenue curve $qv(q)$ ($q \in [q^*,1]$). If $q_2$ does not exist (i.e. $v(1)>\hat{v}(1)$), set $q_2=1$.
	
	For all $q\in [0,q_2]$, since $q\hat{v}(q)$ is increasing, $\hat{v}(q)\leq \frac{q_2\hat{v}(q_2)}{q}$. By Theorem 4.2, $\hat{v}(q)$ should be the value in this region that is the closest to $v(q)$, i.e. $\hat{q}=\min \{\frac{q_2\hat{v}(q_2)}{q},v(q)\}$. Let $q_1$ denote the quantile of the intersection between straight line $y=q_2 \hat{v}(q_2)$ and the left part of revenue curve $qv(q),q \in [0,q^*]$. So $\hat{v}(q)=v(q)$ for $q \in [0,q_1]$ and  $\hat{v}(q)=\frac{q_2 \hat{v}(q_2)}{q}$ for $q \in [q_1,q_2]$.
	
	For $q \in  [q_2,1]$, set $\hat{v}(q)=\frac{q_2 \hat{v}(q_2)}{q}$, which is the minimum possible value of revenue curve. After this change, the reserve price $\hat{v}(1)$ weakly decreases, and all $\hat{v}(q)$ gets closer to $v(q)$. By Theorem 4.2 the total utility increases.
	
	To sum up, the fake distribution $\hat{v}(\cdot)$ has the form in the Theorem 6.14, and satisfies concavity and monotonicity.
\end{proof}

When the buyer fixes the incident quantile, the best response of fake distribution is determined.

For each buyer $i$, let $q_i$ denote the incident quantile.
\begin{lemma}
	Buyer $i$'s utility of reporting the best response in Lemma Theorem \ref{rvqform} is
	$$U_i=\int_0^{q_i} x(v(q))(-qv'(q))dq+\int_{q_i}^1 x(v(\frac{q_iv(q_i)}{q}))v(q) dq$$
	
	In an equilibrium of the induced game, $\frac{dU_1}{q_1}=\frac{dU_2}{q_2}=...=\frac{dU_n}{q_n}=0$
\end{lemma}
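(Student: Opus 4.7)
The plan is to derive the utility formula directly from the general identity of Lemma 3.4 (buyer's utility in terms of the virtual-value function) applied to the specific two-piece fake distribution identified in Theorem 6.14, and then to read off the equilibrium conditions as the first-order conditions of a one-dimensional optimization over the incident quantile $q_i$.

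First, I would compute the virtual value $r(q) = \hat{v}(q) + q\hat{v}'(q)$ piecewise. On $[0, q_i)$, where $\hat{v}(q) = v(q)$, we immediately get $v(q) - r(q) = -qv'(q)$. On $[q_i, 1]$, where $\hat{v}(q) = q_i v(q_i)/q$, differentiation gives $\hat{v}'(q) = -q_i v(q_i)/q^2$, so $r(q) = q_i v(q_i)/q + q(-q_i v(q_i)/q^2) = 0$ and hence $v(q) - r(q) = v(q)$. Plugging these into Lemma 3.4 gives
\[
U_i = \int_0^1 x_i^*(q)(v(q) - r(q))\,dq + \hat{v}_i(1) x_i^*(1) - t_i^*(1),
\]
and the boundary term vanishes because SPAMR uses the standard payment rule (\ref{payment}), for which the remark right after (\ref{payment}) gives $-\hat{v}(1)x^*(1) + t^*(1) = 0$. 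Splitting the integral at $q_i$ and substituting $x_i^*(q) = x(\hat{v}(q))$ then yields exactly the displayed formula for $U_i$.

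For the equilibrium statement, Theorem 6.14 shows that (in the regular-fake case) each buyer's best response is parametrized by the single scalar $q_i$, so the induced game collapses to an $n$-player game with one-dimensional strategy spaces. Hence at a Nash equilibrium each $q_i$ is a best response to $q_{-i}$, and if $q_i$ is an interior optimum then the FOC $dU_i/dq_i = 0$ holds. I expect the main obstacle to be verifying the smoothness required to write down this FOC: differentiating $U_i$ in $q_i$ requires applying Leibniz's rule to both the integration bound $q_i$ and the $q_i$-dependence inside $x(q_i v(q_i)/q)$, and the allocation function $x(\cdot)$ itself depends on opponents' equilibrium fake distributions. The boundary contribution at $q = q_i$ should cancel by continuity of $\hat{v}$ at the incident quantile (both pieces equal $v(q_i)$ there), reducing the FOC to a clean interior condition, but one must also check that $q_i$ lies strictly in $(0, 1)$ to rule out corner solutions.
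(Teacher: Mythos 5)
Your derivation is the natural one and is essentially the argument the paper implicitly relies on (the lemma is stated without proof): plug the two-piece fake distribution from Theorem~\ref{rvqform} into Lemma~\ref{utilitity1}, note that $r(q)=v(q)+qv'(q)$ on $[0,q_i)$ and $r(q)=0$ on $[q_i,1]$, kill the boundary term via the remark after Equation~(\ref{payment}), and observe that in a Nash equilibrium each $q_i$ maximizes $U_i$ so the FOC holds.

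One thing you should not gloss over, though: your substitution $x_i^*(q)=x(\hat v_i(q))$ on the second piece gives $x\bigl(\tfrac{q_i v(q_i)}{q}\bigr)$, since $\hat v_i(q)=\tfrac{q_i v(q_i)}{q}$ there; it does \emph{not} give the paper's displayed $x\bigl(v(\tfrac{q_i v(q_i)}{q})\bigr)$. The extra outer $v(\cdot)$ in the statement (and again in the proof of Lemma~\ref{incident_quantile}) appears to be a notational slip in the paper --- one can check that the chain-rule computation in Lemma~\ref{incident_quantile} collapses to exactly what one obtains from the clean form $x\bigl(\tfrac{q_i v(q_i)}{q}\bigr)$, so the downstream results are unaffected. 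But your write-up should say explicitly that your derivation produces $x\bigl(\tfrac{q_i v(q_i)}{q}\bigr)$ and flag the mismatch, rather than asserting it matches the displayed formula.

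Your caveats on the FOC are the right ones: continuity of $\hat v_i$ at $q_i$ (both pieces equal $v(q_i)$) kills the Leibniz boundary term, and interiority of $q_i$ does need to be assumed or checked (it holds in Example~\ref{example_SPAMR}, where $q_0=1/4$). The paper does not address these points either, so you are not missing an ingredient of the intended argument.
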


\begin{lemma}
	\label{incident_quantile}
	Suppose $n=2$ and true distributions are i.i.d. regular, for symmetric equilibrium of the induced game, the incident quantile $q_0$ (identical to both buyers) satisfies
	$$q_0-1+\frac{1}{q_0v(q_0)}\int_{q_0}^1 qv(q) dq=0$$
\end{lemma}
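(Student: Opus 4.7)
The plan is to obtain the claimed equation as the first-order stationarity condition $dU_i/dq_i\bigl|_{q_i = q_0} = 0$ for the buyer's choice of incident quantile, assuming the opponent's incident quantile is fixed at $q_0$. Writing $W(q) := U_i\bigl|_{q_i = q}$, the preceding lemma gives
$$W(q) \;=\; \int_0^{q} x(v(t))(-t v'(t))\, dt \;+\; \int_{q}^{1} x\!\left(\tfrac{q v(q)}{t}\right) v(t)\, dt,$$
where $x(b)$ denotes the interim probability of winning with bid $b$ against the opponent, whose fake distribution $\hat v$ has the form prescribed by Theorem \ref{rvqform} with incident quantile $q_0$.

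First, I would pin down $x(b)$. Since the opponent's $\hat v$ decreases continuously from $v(0)$ to $\hat v(1) = q_0 v(q_0)$, and her own monopoly reserve is precisely $q_0 v(q_0)$, she is never filtered out by her reserve. So $x(b) = \Pr_t[\hat v(t) < b]$, and in the middle bid range $(q_0 v(q_0), v(q_0))$ this evaluates to $x(b) = 1 - q_0 v(q_0)/b$, with $x'(b) = q_0 v(q_0)/b^2$. Next, differentiating $W$, the two Leibniz boundary terms at $t = q$ combine into $-x(v(q))\bigl[q v'(q) + v(q)\bigr] = -x(v(q))\, r(q)$ with $r(q) := v(q) + q v'(q)$ the virtual value of the \emph{true} distribution, and the $q$-derivative inside the second integrand pulls out the common factor $r(q)/t$. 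Hence
$$W'(q) \;=\; r(q)\left[-x(v(q)) \;+\; \int_{q}^{1} x'\!\left(\tfrac{q v(q)}{t}\right) \frac{v(t)}{t}\, dt\right].$$
Finally, evaluating at $q = q_0$: for $t \in (q_0, 1)$ the argument $q_0 v(q_0)/t$ lies in $(q_0 v(q_0), v(q_0))$, so by the first step $x'(q_0 v(q_0)/t) = t^2/(q_0 v(q_0))$, and the integral collapses to $\tfrac{1}{q_0 v(q_0)} \int_{q_0}^{1} t v(t)\, dt$; combined with $x(v(q_0)) = 1 - q_0$ and $r(q_0) \ne 0$ (which holds away from the Myerson reserve of $v$; e.g.\ for uniform $[0,1]$, $r(1/4) = 1/2 \ne 0$), the bracket must vanish, yielding the claimed equation.

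The main obstacle is characterizing $x(b)$ correctly. The tempting pitfall is to believe the opponent fails her own reserve on a positive-measure set and is excluded with probability $1 - q_0$; this would introduce a jump of size $1 - q_0$ in $x(b)$ at $b = q_0 v(q_0)$ and ultimately lead to $r(q_0) = 0$ (the Myerson-reserve condition) as the apparent answer, e.g.\ $q_0 = 1/2$ in the uniform case rather than the correct $q_0 = 1/4$. The resolution, which is exactly what the Theorem \ref{rvqform} construction engineers, is that $\hat v(1)$ matches, rather than dips below, the reserve, so the opponent is always retained; the resulting smoothness of $x(b)$ on the relevant range is what produces the specific $\int_{q_0}^{1} q v(q)\, dq$ integral appearing in the equilibrium condition.
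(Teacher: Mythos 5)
Your proof is correct and follows essentially the same route as the paper's: take the first-order stationarity condition $dU/dq_0 = 0$ for the incident quantile, use the closed form of the equilibrium interim allocation, and simplify. Both derivations produce identical boundary terms ($-x(v(q_0))\,r(q_0)$) and the same integral $\frac{1}{q_0 v(q_0)}\int_{q_0}^1 t\,v(t)\,dt$.

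Where you add value is in making explicit what the paper leaves implicit. The paper compresses the key step into the identity ``$x(v)=1-v_{-i}^{-1}(v)=1-q$'' and a chain-rule cascade in which $v'(q)$ cancels, without ever writing down $x(b)$ on the relevant bid interval. You instead derive $x(b) = 1 - q_0 v(q_0)/b$ directly on $(q_0 v(q_0), v(q_0))$, which makes $x'(q_0 v(q_0)/t) = t^2/(q_0 v(q_0))$ transparent, and you explicitly verify the prerequisite that the opponent's bid $\hat v(s) \ge q_0 v(q_0)$ always weakly clears her own reserve, so there is no extra mass in $x$ from the opponent being excluded. That last point is exactly the place where a careless reader could go wrong (getting the spurious condition $r(q_0)=0$), and the paper does not call it out; your remark about the pitfall is a genuine clarification of why the $\int_{q_0}^1 q v(q)\,dq$ term, rather than the Myerson reserve condition, emerges.
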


\begin{proof}
	Consider the derivative of $U$ to $q_0$.
	\begin{eqnarray}
	\label{dudq0}
	\frac{dU}{dq_0}=x(v(q_0))(-q_0v'(q_0))+\int_{q_0}^1 \frac{v(q)dx(v(\frac{q_0v(q_0)}{q}))}{dq_0}dq-x(v(q_0))v(q_0)
	\end{eqnarray}
	Note that in a symmetric equilibrium, $x(v)=1-v_{-i}^{-1}(v)=1-q$, and $x'(v)  \triangleq \frac{dx}{dv}=-\frac{1}{v'(q)}$. Define $r(q)=v(q)+qv'(q)$. For $q \in [q_0,1]$
	$$\frac{dx(v(\frac{q_0v(q_0)}{q}))}{dq_0}=\frac{dx}{dv}\frac{dv}{d(\frac{q_0v(q_0)}{q})}\frac{d(\frac{q_0v(q_0)}{q})}{dq_0}
	=-\frac{1}{v'(q)}\frac{dv}{dq}\frac{dq}{d(\frac{q_0v(q_0)}{q})}\frac{r(q_0)}{q}=\frac{q^2}{q_0v(q_0)}\frac{r(q_0)}{q}=\frac{qr(q_0)}{q_0v(q_0)}$$
	So
	$$\frac{dU}{r(q)dq_0}=-x(v(q_0))+\int_{q_0}^1 \frac{qv(q)}{q_0v(q_0)} dq=q_0-1+\frac{1}{q_0v(q_0)}\int_{q_0}^1 qv(q) dq$$
	
	Let the formula above be 0 concludes the proof.
\end{proof}

\begin{proof}[Proof of Example \ref{example_SPAMR}]
	By Lemma \ref{incident_quantile}, we have $q_0-1+\frac{1}{q_0(1-q_0)}(\frac{1}{6}-\frac{1}{2}q_0^2+\frac{1}{3}q_0^3)=0$
	
	Solving this equation we get $q_0=\frac{1}{4}$.
	
\end{proof}

The revenue in the example is $REV=2 \int_0^{\frac{1}{4}} (1-q)(1-2q) dq=\frac{1}{3}$

Compared to SPA in the standard setting, buyer's utility and social welfare increase, while the revenue remains the same. This is not a coincidence. In the following, we prove Theorem \ref{revenue_equal} that says for i.i.d., regular true distributions, the revenue of the equilibrium in SPAMR equals to the revenue of SPA in standard setting.


\begin{proof}[Proof of Theorem \ref{revenue_equal}]
	For simplicity here we only prove the case when $n=2$.
	
	By Theorem \ref{rvqform} and Lemma \ref{revenue}
	\begin{eqnarray*}
		REV&=&2\int_0^{q_0} (1-q)(v(q)+qv'(q))dq = 2\int_0^{q_0} (1-q) d(qv(q))\\
		&=&2q_0(1-q_0)v(q_0)+2\int_0^{q_0} qv(q) dq
	\end{eqnarray*}
	By Lemma 6.18, we have $(1-q_0)q_0v(q_0)=\int_{q_0}^1 qv(q) dq$
	Combine together we get $REV=2 \int_0^1 qv(q) dq$
	Note that the revenue of second-price auction is $$REV_2=2\int_0^1 (v(q)+qv'(q)) dq=2\int_0^1 qv(q) dq,$$
	
	So $REV=REV_2$
\end{proof}
This lemma holds for the $n$ i.i.d. buyers case, which says that:

\begin{lemma}
	For $n$ buyers with i.i.d. regular true type distribution, the revenue of second-price with Myerson's reserve mechanism family on equilibrium of fake distribution (regular fake distribution only) equal to that of the second-price auction.
\end{lemma}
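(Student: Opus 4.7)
The plan is to extend the proof of Theorem \ref{revenue_equal} ($n=2$ case) to arbitrary $n$. Two ingredients need generalization: the interim allocation in the symmetric equilibrium, and the first-order condition satisfied by the common incident quantile $q_0$.

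By Theorem \ref{rvqform} each buyer's equilibrium fake distribution has the form $\hat{v}(q)=v(q)$ for $q\le q_0$ and $\hat{v}(q)=q_0 v(q_0)/q$ for $q>q_0$, with a common $q_0$. This $\hat{v}$ is monotone decreasing, and the monopoly reserve is $\hat{v}(1)=q_0 v(q_0)$, which every bid weakly exceeds. Hence the item always sells, to the buyer with the largest bid, i.e., to the lowest-quantile buyer. So in the symmetric equilibrium the interim allocation is $x^*(q)=(1-q)^{n-1}$ for every $q\in[0,1]$.

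The virtual value of $\hat{v}$ is $r(q)=v(q)+qv'(q)$ on $[0,q_0]$ and $r(q)=0$ on $(q_0,1]$, so Lemma \ref{revenue} together with integration by parts gives
\[
\mathrm{REV}=n\int_0^{q_0}(1-q)^{n-1}\,d(qv(q))=n(1-q_0)^{n-1}q_0 v(q_0)+n(n-1)\int_0^{q_0}(1-q)^{n-2}qv(q)\,dq.
\]
Applying the same integration by parts to the SPA revenue (where $x^*_{\mathrm{SPA}}(q)=(1-q)^{n-1}$ and the relevant virtual value is $v(q)+qv'(q)$ on all of $[0,1]$) gives $\mathrm{REV}_{\mathrm{SPA}}=n(n-1)\int_0^1(1-q)^{n-2}qv(q)\,dq$, since the boundary term vanishes at both endpoints. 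Thus the equality $\mathrm{REV}=\mathrm{REV}_{\mathrm{SPA}}$ reduces to the single identity
\[
q_0 v(q_0)(1-q_0)^{n-1}=(n-1)\int_{q_0}^1(1-q)^{n-2}q v(q)\,dq. \qquad (\star)
\]
For $n=2$ this is precisely Lemma \ref{incident_quantile}, so $(\star)$ is the right generalization.

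The remaining task, which is the main technical step, is to derive $(\star)$ as the first-order equilibrium condition on a single-buyer deviation $q'$. With the other $n-1$ buyers using incident quantile $q_0$, the winning probability of a bid $v$ is $x(v)=(1-v^{-1}(v))^{n-1}$ for $v\ge v(q_0)$ and $x(v)=(1-q_0 v(q_0)/v)^{n-1}$ for $v<v(q_0)$, continuous at the kink. Differentiating the utility form from Lemma 6.17,
\[
U(q')=\int_0^{q'} x(v(q))(-q v'(q))\,dq+\int_{q'}^1 x\!\left(\tfrac{q' v(q')}{q}\right)v(q)\,dq,
\]
in $q'$ and evaluating at $q'=q_0$, the two surface contributions at $q=q'$ combine into $-(1-q_0)^{n-1}r(q_0)$, while the chain rule on the second integrand produces $\frac{(n-1)r(q_0)}{q_0 v(q_0)}\int_{q_0}^1(1-q)^{n-2}q v(q)\,dq$. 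Setting the sum to zero and dividing by $r(q_0)\neq 0$ yields $(\star)$. The main obstacle I anticipate is exactly this FOC bookkeeping: the deviator's $q'$ both moves the integration limit and shifts the curve $q' v(q')/q$ inside $x$, so one must track the kink of $x$ and the branch on which each argument lies; once $(\star)$ is established, plugging it back into the revenue formula above is immediate.
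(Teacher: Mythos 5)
Your proof is correct and follows essentially the same route as the paper: derive the $n$-buyer generalization of Lemma \ref{incident_quantile} as the first-order condition in $q_0$ (your $(\star)$ is exactly the identity the paper obtains by substituting the chain-rule expression for $dx/dq_0$ into equation (\ref{dudq0})), then integrate the revenue formula by parts and eliminate the boundary term using $(\star)$. The only difference is cosmetic bookkeeping — your version has the correct sign in the integration by parts of $REV$, where the paper's displayed line has a stray minus sign (though the paper's final equality is still correct).
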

\begin{proof}
	Using the same notations as Lemma \ref{incident_quantile}. In this case,
	
	$x^*(q)=(1-v^{-1}(\frac{q_0v{q_0}}{q}))^{n-1}$, so
	\begin{eqnarray*}
		\frac{dx}{dq_0}&=&(n-1)(1-q)^{n-2}\frac{dq}{d(\frac{q_0v(q_0)}{q})}\frac{d(\frac{q_0v(q_0)}{q})}{dq_0}\\
		&=&(n-1)(1-q)^{n-2}\frac{q^2}{q_0v(q_0)}\frac{r(q_0)}{q}=(n-1)(1-q)^{n-2} \frac{qr(q_0)}{q_0v(q_0)}
	\end{eqnarray*}
	Substituting this into (\ref{dudq0}) and let $\frac{dU}{dq_0}=0$ we get
	$$(1-q_0)^{n-1}=\int_{q_0}^1 \frac{(n-1)(1-q)^{n-2}qv(q)}{q_0v(q_0)}dq$$
	The revenue
	\begin{eqnarray*}
		REV&=&n\int_0^{q_0} (1-q)^{n-1}(v(q)+qv'(q))dq\\
		&=&nq_0v(q_0)(1-q_0)^{n-1} -n\int_0^{q_0} (n-1)(1-q)^{n-2}qv(q) dq\\
		&=& n\int_0^1 (n-1)(1-q)^{n-2}qv(q) dq =REV_2
	\end{eqnarray*}
\end{proof}

\section{Missing proofs form section 7}

Fix $\hat{v}_{-i}(\cdot)$, for any realized value of $q_r$, as long as $q \geq q_r$, for the same reason as before, we can write $x^*(q)=x(\hat{v}(q))$, or $x^*(q)=0$ if $q < q_r$.

\begin{proof}[Proof of Theorem \ref{ssvqform}]
	When the reserve quantile is $q_r$, by definition, the buyer utility is $\int_0^{q_r} x(\hat{v}(q))(v(q)-\hat{v}(q)-q\hat{v}'(q))dq$, so the expected utility is
	\begin{eqnarray*}
		U&=&\int_0^1 \int_0^{q_r} x(\hat{v}(q))(v(q)-\hat{v}(q)-q\hat{v}'(q)) dq dq_r\\
		&=&\int_0^1 dq \int_q^1 x(\hat{v}(q))(v(q)-\hat{v}(q)-q\hat{v}'(q))dq_r \\
		&=&\int_0^1 (1-q)x(\hat{v}(q))(v(q)-\hat{v}(q)-q\hat{v}'(q))dq
	\end{eqnarray*}

	This is a functional optimization problem. Define $F=(1-q)x(\hat{v}(q))(v(q)-\hat{v}(q)-q\hat{v}'(q))$. The Euler-Lagrange equation yields $$\frac{\partial F}{\partial \hat{v}}-\frac{d}{dq}\frac{\partial F}{\partial \hat{v}'}=0$$
	
	(Sometimes we omit the  independent variables   in differential functions)
	
	Define $x'(\hat{v})=\frac{dx}{d\hat{v}}=-\frac{1}{v_{-i}'(\hat{v}_{-i}^{-1}(\hat{v}))}$, for $n=2$ and in symmetric equilibrium
	$x'(\hat{v})=-\frac{1}{\hat{v}'(q)}$. We get
	$$x'(\hat{v})(1-q)(v-\hat{v}-q\hat{v}')-(1-q)x(\hat{v})+x'(\hat{v})\hat{v}'q(1-q)+x(\hat{v})(1-2q)=0$$
	By simplification we have
	$$v(q)=\hat{v}(q)-q\hat{v}'(q)$$
	Set a particular solution $\hat{v}(1)=v(1)$ we can get the solution of the differential equation.
	
	To show the monotonicity of $\hat{v}(\cdot)$, taking the derivative of $v(q)=\hat{v}(q)-q\hat{v}'(q)$ we have $v'(q)=-\hat{v}''(q)$, so $\hat{v}''(q) \geq 0$. \\
	Set $q=1$ we have $v(1)=\hat{v}(1)-\hat{v}'(1)$, so $\hat{v}'(1)=0$, combining with $\hat{v}''(q)  \geq 0$ we get $v'(q) \leq 0$.
	So $\hat{v}(q)$ is a well defined distribution.
	
	Actually, this is functional maximization problem in which both endpoints $\hat{v}(0),\hat{v}(1)$ are variables. So the function $\hat{v}(q)$ need to satisfy the natural boundary condition:
	$$\frac{\partial F}{\partial \hat{v}'}|_{q=0,1}=0$$
	It is automatically satisfied.
	
	As a result, when all buyers' fake distributions satisfy $v(q)=\hat{v}(q)-q\hat{v}'(q)$, it is an equilibrium. Since $\hat{v}(\cdot)$ is feasible, and satisfy the Euler-equation and natural boundary condition.
\end{proof}

\begin{proof}[Proof of Theorem \ref{singel_sample_less_than_SPA}]
	We first prove the case of $n=2$.
	
	In SPARQR, the revenue from one buyer is,
	
	\begin{eqnarray*}
		REV &=& \int_0^1 (1-q)x(\hat{v})(\hat{v}+q\hat{v}'(q)) dq=\int_0^1 (1-q)^2 d (q\hat{v}(q)) \\
		&=& \int_0^1 q\hat{v}(q)d(1-q)^2= \int_0^1 2q(1-q)\hat{v}(q) dq
	\end{eqnarray*}
	The revenue from one buyer in second-price auction is
	\begin{eqnarray*}
		REV_2 &=& \int_0^1 (1-q)(v(q)+qv'(q)) dq=\int_0^1 (1-q)d (qv(q)) \\
		&=& \int_0^1 qv(q) dq =\int_0^1 (q\hat{v}(q)-q^2\hat{v}'(q))dq\\
		&=& \int_0^1 q\hat{v}(q)dq-\int q^2 d\hat{v} =\int_0^1 q\hat{v}(q)dq-q^2\hat{v}(q)|_0^1+\int_0^1 2q\hat{v}(q)dq\\
		&=& \int_0^1 3q\hat{v}(q)dq-\hat{v}(1)
	\end{eqnarray*}
	So
	\begin{eqnarray*}
		REV_2-REV&=&  \int_0^1 (q+2q^2)\hat{v}(q) dq-\hat{v}(1)= \int_0^1 \hat{v}(q) d(\frac{1}{2}q^2+\frac{2}{3} q^3)-\hat{v}(1) \\
		&=& \frac{1}{6}\hat{v}(1)-\int_0^1 (\frac{1}{2}q^2+\frac{2}{3} q^3)\hat{v}'(q) dq \geq 0
	\end{eqnarray*}
\end{proof}
For cases where $n>2$:
\begin{lemma}
	For $n$ buyers with i.i.d. true type distribution, the revenue of single sample mechanism family on equilibrium of fake distribution is no more than the revenue of the SPA under the true distribution.
\end{lemma}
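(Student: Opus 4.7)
The plan is to follow the template of the $n=2$ case of Theorem \ref{singel_sample_less_than_SPA}, carrying out the Euler-Lagrange analysis and the subsequent integration-by-parts manipulations for general $n \geq 3$ (the case $n=2$ is already handled).

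\textbf{Step 1 (equilibrium).} In a symmetric profile where all buyers use $\hat{v}$, a deviator with true quantile $q$ and bid $\hat{v}^{\mathrm{dev}}(q)$ has interim allocation $x^*(q) = (1-q)\bigl(1 - \hat{v}^{-1}(\hat{v}^{\mathrm{dev}}(q))\bigr)^{n-1}$, the product of the probability of clearing the random reserve and the probability of beating the other $n-1$ buyers. Plugging into Lemma \ref{utilitity1} and running the Euler-Lagrange argument from the proof of Theorem \ref{ssvqform}, then setting $\hat{v}^{\mathrm{dev}} = \hat{v}$, I would obtain the fixed-point ODE
\begin{equation*}
v(q) = \hat{v}(q) - \frac{q\hat{v}'(q)}{n-1}, \qquad \hat{v}(1) = v(1).
\end{equation*}
Solving with integrating factor $q^{-(n-1)}$ and rewriting as $\hat{v}(q) - v(q) = q^{n-1}\int_q^1 v'(s)\, s^{-(n-1)}\,ds \leq 0$ would show $\hat{v} \leq v$, and then the ODE forces $\hat{v}' \leq 0$, so $\hat{v}$ is a valid weakly decreasing fake distribution.

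\textbf{Step 2 (reducing to one integral).} In symmetric equilibrium $x^*(q) = (1-q)^n$, so Lemma \ref{revenue} and an integration by parts give $REV = n^2 \int_0^1 q(1-q)^{n-1}\hat{v}(q)\,dq$, while the standard order-statistics formula gives $REV_{\mathrm{SPA}} = n(n-1)\int_0^1 q(1-q)^{n-2}v(q)\,dq$. Substituting the equilibrium relation $(n-1)v = (n-1)\hat{v} - q\hat{v}'$ into $REV_{\mathrm{SPA}}$ and integrating $\int_0^1 q^2(1-q)^{n-2}\hat{v}'(q)\,dq$ by parts (the boundary terms vanish for $n \geq 3$) simplifies, after collecting, to
\begin{equation*}
REV_{\mathrm{SPA}} - REV = n\int_0^1 \hat{v}(q)\, q(1-q)^{n-3}\bigl(1 + q - nq^2\bigr)\,dq.
\end{equation*}

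\textbf{Step 3 (the sign, the main obstacle).} The integrand is \emph{not} pointwise nonnegative: for $n \geq 3$, $1+q-nq^2$ is positive on $[0,q_+)$ and negative on $(q_+,1]$ with $q_+ = (1+\sqrt{1+4n})/(2n) < 1$, so no pointwise argument can succeed. My plan is to shift the analysis to the antiderivative $A(q) = \int_0^q s(1-s)^{n-3}(1+s-ns^2)\,ds$. Since $A'$ has the sign pattern $+/-$ with switch at $q_+$, $A$ is unimodal, so showing $A \geq 0$ on $[0,1]$ reduces to verifying $A(0)=0$ and $A(1) \geq 0$. A direct Beta-function computation yields
\begin{equation*}
A(1) = B(2,n-2) + B(3,n-2) - n\,B(4,n-2) = \frac{n(n+1) + 2(n+1) - 6n}{n(n+1)(n-1)(n-2)} = \frac{1}{n(n+1)},
\end{equation*}
where the numerator collapses via the algebraic identity $n^2-3n+2=(n-1)(n-2)$. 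A final integration by parts then gives
$REV_{\mathrm{SPA}} - REV = n\,\hat{v}(1)A(1) - n\int_0^1 A(q)\hat{v}'(q)\,dq \geq 0,$
since $\hat{v}(1) \geq 0$, $A(1) > 0$, $A(q) \geq 0$, and $\hat{v}'(q) \leq 0$. The principal difficulty lies in this step: sign cancellation in the integrand rules out any pointwise approach, and the Beta-function collapse to $A(1)=1/[n(n+1)]$ is the precise algebraic miracle that rescues the integration-by-parts bound.
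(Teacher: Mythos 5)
Your proposal is correct and follows essentially the same route as the paper: the Euler–Lagrange equation yields the same equilibrium ODE $v=\hat{v}-q\hat{v}'/(n-1)$, the revenue gap reduces to the same integral $n\int_0^1\hat{v}(q)\,q(1-q)^{n-3}(1+q-nq^2)\,dq$, and the sign is settled using the one sign change of the kernel, the monotonicity of $\hat{v}$, and the Beta-function value $1/[n(n+1)]$. Your two presentational variants are both sound and equivalent to the paper's: solving the ODE by integrating factor (giving $\hat{v}\leq v$ and hence $\hat{v}'\leq 0$) replaces the paper's contradiction argument, and your integration by parts against the nonnegative unimodal antiderivative $A$ replaces the paper's direct bound $\int F\hat{v}\geq \hat{v}(q^*)\int F$.
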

\begin{proof}
	Using the same notations as Theorem 7.4. , In this case,
	
	$x^*(v)=(1-v(q)^{-1}(v))^{n-1}=(1-q)^n$, $x'(v)=-\frac{1}{v'(q)}(n-1)(1-q)^{n-2}$
	
	The solution of Euler equation gives
	
	\begin{eqnarray}
	\label{euler2}
	v=-\frac{q\hat{v}'}{n-1}+\hat{v}
	\end{eqnarray}
	
	Set the particular solution $\hat{v}(1)=v(1)$, then we get the solution $\hat{v}(\cdot)$.
	
	Then we show the monotonicity of $\hat{v}(\cdot)$. Assume by contradiction that there exists a $q_1$ s.t. $v'(q_1)<0$. From (\ref{euler2}) we have $\hat{v}(q_1)>v(q_1)$.
	
	Consider the first minimum quantile $q_2 \in (q_1,1)$ s.t. $\hat{v}(q_2) \leq v(q_2)$, since $\hat{v}(1)=v(1)$ $q_2$ must exists. Then for all $q \in (q_1,q_2)$, by (\ref{euler2}) $\hat{v}'(q)>0$. So $\hat{v}(q_2)>\hat{v}(q_1)$. Therefore $v(q_2) \geq \hat{v}(q_2) >\hat{v}(q_1)>v(q_1)$, it contradicts with the decreasing monotonicity of $v(\cdot)$.
	
	So we proved that $\hat{v}(\cdot)$ is the feasible solution (Note that the nature boundary condition is automatically satisfied).
	
	\begin{eqnarray*}
		REV&= &\int_0^1 (1-q)x(\hat{v})(\hat{v}(q)+q\hat{v}'(q))=\int_0^1 (1-q)^n d(q\hat{v}(q))=q\hat{v}(q)n(1-q)^{n-1} dq
	\end{eqnarray*}
	
	\begin{eqnarray*}
		REV_2 &=& \int_0^1 qx(v(q))(v(q)+qv'(q)) dq=\int_0^1 q(1-q)^{n-1} d(qv(q))\\
		&=& \int_0^1 (n-1)q(1-q)^{n-2}v(q) dq \\
		&= &\int_0^1 (-q(1-q)^{n-2}\hat{v}'(q)+q(1-q)^{n-2}(n-1)\hat{v}(q)) dq \\
		&=& \int_0^1 \hat{v}(q)(-q^2(n-2)(1-q)^{n-3}+2q(1-q)^{n-2}+q(1-q)^{n-2}(n-1)) dq
	\end{eqnarray*}
	Therefore, when $n>=3$,
	\begin{eqnarray*}
		&&REV_2-REV  \\
		&=&\int_0^1 \hat{v}(q)(1-q)^{n-3}(-q^2(n-2)+2q(1-q)+q(1-q)(n-1)-nq(1-q)^2) dq\\
		&=&\int_0^1 \hat{v}(q)(1-q)^{n-3}(q+q^2-nq^3) dq
	\end{eqnarray*}
	Define $F(q)=(1-q)^{n-3}(q+q^2-nq^3)$
	Note that for $q<\frac{\sqrt{1+4n}-1}{2n}$, $F(q)>0$ and when $\frac{\sqrt{1+4n}-1}{2n}<q<1$, $F(q)<0$. Define $q^*=\frac{\sqrt{1+4n}-1}{2n}$, by the monotonicity of $\hat{v}(q)$, we have
	\begin{eqnarray*}
		\int_0^1 F(q)\hat{v}(q) dq&=&\int_0^{q^*} F(q)\hat{v}(q)+\int_{q^*}^1 F(q)\hat{v}(q)\\
		&\geq& \int_0^{q^*} \hat{v}(q^*)F(q)dq+ \int_{q^*}^1 \hat{v}(q^*)F(q) dq=\hat{v}(q^*)\int_0^1 F(q) dq
	\end{eqnarray*}
	
	So it is remain to show that $\int_0^1 F(q) \geq 0$, actually,
	$$\int_0^1 F(q) =\frac{1}{n}-\frac{1}{n+1}>0$$
	
	So the Lemma is proved. Noth that as $n\rightarrow \infty$ fake distribution $\hat{v} \rightarrow v(q)$ and the revenue difference approaching to 0.
\end{proof}

\section{Missing proofs from section 8}

\begin{proof}[Proof of Lemma \ref{coefficients}]
	Since the mechanism family is truthful, for any fake distribution $\hat{v}(\cdot)$, the allocation rule is monotone. Then $\frac{dR}{dq} \leq 0$ for any decreasing function $\hat{v}(\cdot)$, so for any $q\in [0,1]$
	\begin{eqnarray}
	\label{coefficients_equation}
	\frac{dR(q,\hat{v}(q),\hat{v}'(q))}{dq}=\frac{\partial R}{\partial q}+\frac{\partial R}{\partial \hat{v}} \hat{v}'(q)+\frac{\partial R}{\partial \hat{v}'} \hat{v}''(q) \leq 0
	\end{eqnarray}
	
	For any given $q,\hat{v}(q),\hat{v}'(q)$, if $\frac{\partial R}{\partial \hat{v}'} \neq 0$, set $\hat{v}''(q)$ to be a sufficiently large value (for positive $\frac{\partial R}{\partial \hat{v}'}$) or sufficiently small (for negative $\frac{\partial R}{\partial \hat{v}'}$), then the left side of (\ref{coefficients_equation}) is positive, contradiction.
	
	So $\frac{\partial R}{\partial \hat{v}'}=0$. Given this, for any given $q,\hat{v}(q)$, set $\hat{v}'(q)=0$, from (\ref{coefficients_equation}) we get $\frac{\partial R}{\partial q} \leq 0$. Set $\hat{v}'(q)$ sufficiently small, we get $\frac{\partial R}{\partial \hat{v}(q)} \geq 0$.
\end{proof}
Fix the other buyer's fake distribution, given quantile $q$, the interim allocation can be written as a function of $R$, and by Lemma \ref{coefficients}, $x^*(q)=x(R(q,\hat{v}(q)))$
\begin{proof}[Proof of Lemma \ref{virtual_efficient}]
	
	Consider a buyer with true type distribution $v(\cdot)$. Fix the other buyer's fake distribution, given quantile $q$, his interim allocation can be written as a function of $R$, and by Lemma \ref{coefficients}, $x^*(q)=x(R(q,\hat{v}(q)))$.
	
	By the definition and Lemma \ref{coefficients}, the buyer's utility is
	$$U=\int_0^1 x(R(q,\hat{v}(q)))(v(q)-\hat{v}(q)-q\hat{v}'(q))$$
	We regard this as a functional of $\hat{v}(\cdot)$ and define functional $F=x(R(q,\hat{v}(q)))(v(q)-\hat{v}(q)-q\hat{v}'(q))$. By
	Euler-Lagrange equation, we get
	$$ \frac{\partial F}{\partial \hat{v}}-\frac{d}{dx} \frac{\partial F}{\partial \hat{v}'}=0$$
	And we have
	
	$$\frac{\partial F}{\partial v}=\frac{\partial x}{\partial \hat{v}}(v-\hat{v}-q\hat{v}')-x$$
	
	$$\frac{\partial F}{\partial v'}=\frac{\partial x}{\partial \hat{v}'}(v-\hat{v}-q\hat{v}')-qx$$
	
	$$\frac{d}{dx} \frac{\partial F}{\partial \hat{v}'}=\frac{d \frac{\partial x}{\partial \hat{v}'}(v-\hat{v}-q\hat{v}')}{dq}-x-q\frac{dx}{dq}
	=-x-q\frac{\partial x}{\partial \hat{v}}v'-q\frac{\partial x}{\partial q}$$
	The last equality holds because $\frac{\partial x}{\partial y'}=\frac{\partial x}{\partial R}\frac{\partial R}{\partial \hat{v}'}=0$
	
	So
	\begin{eqnarray}
	\label{euler_equation}
	\frac{\partial x}{\partial \hat{v}}(v-\hat{v})+\frac{\partial x}{\partial q}=0
	\end{eqnarray}
	Note that the equation only related to variable $\hat{v}(q)$ and $q$, it is not a differential equation. So if the equation have no solution, or the solution $\hat{v}(q)$ is not monotone decreasing, then there is no equilibrium of fake distribution.
	
	If the equation has a decreasing solution $\hat{v}(q)$. Since $\frac{\partial x}{\partial \hat{v}}=\frac{\partial x}{\partial R}\frac{\partial R}{\partial \hat{v}}$,$\frac{\partial x}{\partial \hat{v}}=\frac{\partial x}{\partial q}\frac{\partial R}{\partial q}$, by definition of the allocation rule $x$, $\frac{\partial x}{\partial R}$ always non-negative. And by Lemma 8.3 we have $\frac{\partial x}{\partial \hat{v}} \geq 0$, $\frac{\partial x}{\partial q} \leq 0$. By this we get $v \geq \hat{v}$

	Note that the nature boundary condition $\frac{\partial F}{\partial \hat{v}'}|_{q=0,1}=0$
	is $x^*(1)=0$. it is automatically satisfied for symmetric equilibrium.
\end{proof}

\begin{proof}[Proof of Theorem \ref{less_than_second_price}]
	The revenue on equilibrium of fake distribution is
	$$REV=n\int_0^1 (1-q)^{n-1}(\hat{v}(q)+qv(q))dq=n \int_0^1 (1-q)^{n-1}d(q\hat{v}(q))=\int_0^1 (n-1)(1-q)^{n-2}\hat{v}(q) dq$$
	
	Substitute $\hat{v}$ for $v$ we get the revenue of second-price auction $\int_0^1 (n-1)(1-q)^{n-2}v(q) dq$
	
	By Lemma \ref{virtual_efficient}, we get $REV \leq REV_2$.
	
\end{proof}
\section{some other mechanism families}
In this section we discuss two special mechanism families.

\begin{definition}(Mechanism family with target distribution $\mathbf{v}(\cdot)$)
	
	The mechanism family $\{M^{\mathbf{f}(\cdot)}|\mathbf{f}(\cdot) \in \mathcal{F}^n\}$, set all the allocation rule to 0 in all mechanisms, except one special mechanism $M^{\mathbf{f}(\cdot)}$ such that $f_i(q)+qf_i'(q)=v_i(q)$ for any $i$.
	
	Define $M^{\mathbf{f}(\cdot)}$ as following:
	
	Allocate the item to the buyer with the largest virtual value $f(q)+qf'(q)$. The payment rule is according to (\ref{payment}) (give slightly disturb so that buyers can get positive utility).
	
\end{definition}

\begin{lemma}
	Given buyers' true valuation distributions $\mathbf{v}(\cdot)$, the mechanism family with target distribution $\mathbf{v}(\cdot)$ can extract almost all the social surplus, i.e., $$REV=\int_{\cdot} \max_i v_i(q_i) d\mathbf{q}$$
\end{lemma}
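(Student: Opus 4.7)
The plan is to verify that the profile $\hat{\mathbf{v}}(\cdot)$ defined by $\hat{v}_i(q) + q\hat{v}_i'(q) = v_i(q)$ constitutes a Nash equilibrium of the induced game, and that the seller's revenue at this equilibrium equals (up to the slight disturbance) the expected maximum true value $\int \max_i v_i(q_i)\,d\mathbf{q}$.

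First I would show that the target fake distributions are feasible. The defining equation can be rewritten as $(q\hat{v}_i(q))' = v_i(q)$, whose bounded solution at $q=0$ is uniquely
$$\hat{v}_i(q) = \frac{1}{q}\int_0^q v_i(s)\,ds.$$
Since $v_i$ is weakly decreasing, one checks that $\hat{v}_i$ is weakly decreasing as well, hence $\hat{v}_i \in \mathcal{F}$, and the distinguished mechanism $M^{\hat{\mathbf{v}}(\cdot)}$ is well-defined.

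Next I would verify that $\hat{\mathbf{v}}(\cdot)$ is a Nash equilibrium. By construction of the family, any unilateral deviation by buyer $i$ to some $\hat{v}_i' \neq \hat{v}_i$ produces a reported profile $(\hat{v}_i', \hat{v}_{-i})$ that fails the defining equation for coordinate $i$, so the selected mechanism is one of the all-zero-allocation mechanisms, yielding zero allocation and zero utility to every buyer. The slight disturbance built into the payment rule of $M^{\hat{\mathbf{v}}(\cdot)}$ ensures that reporting $\hat{v}_i$ yields strictly positive utility (thus IR is satisfied with strict inequality), so no unilateral deviation is profitable.

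Finally I would compute revenue via Lemma~\ref{revenue}. Because $r_i(q) = \hat{v}_i(q) + q\hat{v}_i'(q) = v_i(q)$, the allocation rule of $M^{\hat{\mathbf{v}}(\cdot)}$—selecting the bidder with the largest virtual value—coincides with the efficient rule with respect to the true values. Hence $x_i^*(q_i) = \Pr_{\mathbf{q}_{-i}}[v_i(q_i) > v_j(q_j)\ \forall j \neq i]$, and
$$\sum_{i=1}^n \int_0^1 x_i^*(q) v_i(q)\,dq \;=\; \int_{[0,1]^n} \max_i v_i(q_i)\,d\mathbf{q}.$$
By Lemma~\ref{revenue}, the total revenue equals this quantity plus the boundary terms $\sum_i(t_i^*(1) - \hat{v}_i(1)x_i^*(1))$, each of which is controlled by the IR disturbance and can be made arbitrarily small. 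The main obstacle is precisely in calibrating this disturbance so that it simultaneously (i) makes the equilibrium strict enough that buyers genuinely prefer the target report, and (ii) contributes a negligible loss in revenue; given Lemmas~\ref{payment_identity} and~\ref{revenue}, the remainder of the argument is essentially an accounting of the payment identity at $r_i = v_i$.
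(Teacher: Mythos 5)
Your proposal is correct and follows essentially the same route as the paper's proof: identify the unique reported profile that triggers the one non-degenerate mechanism, argue any unilateral deviation yields zero utility (while the perturbed payment rule keeps the target report strictly IR), and then apply Lemma~\ref{revenue} with $r_i=v_i$ so that revenue becomes the expected efficient welfare. You are actually more careful than the paper on one point: the paper's proof loosely writes the equilibrium as $\hat{v}_i(q)=v_i(q)$, whereas the correct equilibrium is the distribution solving $\hat{v}_i(q)+q\hat{v}_i'(q)=v_i(q)$, namely $\hat{v}_i(q)=\frac{1}{q}\int_0^q v_i(s)\,ds$, exactly as you derive and verify to be weakly decreasing.
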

\begin{proof}
	By definition, the only equilibrium for buyers are $\hat{v}_i(q)=v_i(q)$, for any $i$, otherwise they both will get 0 utility.
	
	In mechanism $M^\mathbf{f}(\cdot)$, from definition we know that, fix the quantile profile $\mathbf{q}$, the revenue is the maximum virtual value of $f_i$, that is $\max v_i(q)$.
\end{proof}

However, this mechanism only performs well when buyers have the exact true type distribution. It gets 0 revenue at all other cases. Since the prior distributions are private information, the seller can not design this kind of mechanism family in advance.

In next mechanism we will show that the revenue gap between normal setting and fake distribution setting can be infinite.
\begin{definition}(Mechanism family with respect to quantile)
	The mechanism allocate the item to the buyer whose corresponding quantile of the bid $b$ ,i.e., $1-F(b)$, is minimum. The payment rule is according to (\ref{payment}).
\end{definition}

It is a truthful mechanism family.

\begin{lemma}
	For mechanism family with respect to quantile, the revenue gap between normal setting and fake distribution setting can be infinite.
\end{lemma}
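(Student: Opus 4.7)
The plan is to establish the infinite gap by (a) showing that in the induced game, buyers can drive revenue arbitrarily close to zero, and (b) exhibiting a family of true distributions for which the normal (truthful) setting yields unbounded revenue; combining the two gives an unbounded gap.

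First, I would characterize the mechanism's behavior in the induced game. Since buyers bid truthfully under their reported fake distributions, bidder $i$'s bid is $b_i = \hat{v}_i(q_i)$, and the quantile of $b_i$ under $\hat{v}_i$ equals exactly the true quantile $q_i$. The allocation rule, which picks the buyer with the smallest such quantile, therefore awards the item to the buyer with the smallest \emph{true} quantile, irrespective of the fake distributions reported. This striking observation makes the subsequent analysis essentially a payment computation.

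Next, I would derive the payment. Letting $q^* := \min_{j \neq i} q_j$ and applying (\ref{payment}) to the winner $i$, the counterfactual allocation $x_i(z, b_{-i})$ equals $1$ iff the quantile of $z$ under $\hat{v}_i$ is smaller than $q^*$, i.e.\ iff $z > \hat{v}_i(q^*)$. Substituting into (\ref{payment}) collapses the integral and yields winner's payment $\hat{v}_i(q^*)$. Consequently, buyer $i$'s expected utility depends on $\hat{v}_i$ only through the values $\hat{v}_i(q^*)$ for $q^* > q_i$, and the buyer strictly prefers any pointwise decrease of $\hat{v}_i$.

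For the explicit gap, I would fix true distributions i.i.d.\ uniform on $[0, M]$. In the normal setting, truthful reporting reduces the mechanism to a second-price auction on $n$ i.i.d.\ uniform $[0, M]$ valuations, yielding revenue $\tfrac{n-1}{n+1} M$. In the fake distribution setting, the symmetric profile $\hat{v}_i(q) = \epsilon(1-q)$ produces revenue $O(\epsilon)$, which tends to zero as $\epsilon \to 0$; taking $M \to \infty$ and $\epsilon \to 0$ sends the gap $REV_{\text{normal}} - REV_{\text{fake}}$ to infinity. The main obstacle is the interpretation of ``equilibrium'' here: the monotonicity argument above shows that no strict pure-strategy Nash equilibrium exists, since every buyer can strictly improve by further shrinking $\hat{v}_i$. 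The correct reading of the lemma is that the infimum of revenue over near-equilibrium (or $\epsilon$-equilibrium) profiles of the induced game is zero, which suffices to witness an infinite gap against the arbitrarily scaled normal-setting revenue.
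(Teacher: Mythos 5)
Your structural analysis of the mechanism is correct and mirrors the paper's key observation: since each buyer bids consistently under $\hat{v}_i$, the seller's mapping from bid back to quantile inverts perfectly, so the interim allocation is $x_i^*(q_i)=1-q_i$ regardless of what fake distribution is reported (item to smallest true quantile), and the winner's payment collapses to $\hat{v}_i(q^*)$. Your conclusion that a buyer always weakly prefers pointwise-smaller $\hat{v}_i$ is also right.

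Where you go astray is the equilibrium claim. You assert that no pure-strategy Nash equilibrium exists because every buyer can always improve by shrinking $\hat{v}_i$ further, and you retreat to an $\epsilon$-equilibrium reading of the lemma. But $\mathcal{F}$ is defined as the set of weakly decreasing maps $[0,1]\to\mathbb{R}_+$, and the constant function $\hat{v}_i\equiv 0$ is a member, so the ``infinitely shrinking'' sequence has an attainable limit inside the action set. The paper uses exactly this: at the profile $\hat{\mathbf{v}}\equiv 0$, all buyers bid $0$, all reported-quantile values tie at $0$, and every payment is $\hat{v}_i(q^*)=0$. Moreover this really is a Nash equilibrium, not merely a limit: if a single buyer deviates to any $\hat{v}_i'\not\equiv 0$, his bid $\hat{v}_i'(q_i)$ maps back to quantile $q_i>0$ almost surely, while the other buyers' bids of $0$ still sit at quantile $0$, so the deviator loses the item almost surely and his utility drops to $0$. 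Thus no profitable deviation exists and $\hat{\mathbf{v}}\equiv 0$ is an exact zero-revenue equilibrium; there is no need to reinterpret the lemma as a statement about $\epsilon$-equilibria. Your final conclusion (revenue gap can be made arbitrarily large, e.g.\ against uniform $[0,M]$ with $M\to\infty$) is correct, but the intermediate claim that no pure equilibrium exists is wrong and would undercut the lemma as stated, so it should be removed and replaced by the observation that $\hat{v}\equiv 0$ is itself the equilibrium.
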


\begin{proof}
	Consider symmetric environment, in normal setting, i.e., $\hat{v}(q)=v(q)$, by payment identity the revenue is equal to the revenue of second-price auction with prior distribution $v(q)$. In fake distribution setting, since buyer's interim allocation rule is not related to $\hat{v}(q)$ ($x^*(q)=1-q$), the buyer can report a fake distribution $\hat{v}(q)=0$, so that the total expect payment is 0. Note that by definition the seller always guarantees  a non-negative revenue, so this fake distribution is an equilibrium of fake distribution, of which the revenue is 0.
	
	So the revenue gap can be infinite.
\end{proof}






\end{document}